\def\RR{{\mathbb R}}
\def\CC{{\mathbb C}}
\def\NN{{\mathbb N}}
\def\ZZ{{\mathbb Z}}
\def\A{{\mathcal A}}
\def\B{{\mathcal B}}
\def\E{{\mathcal E}}
\def\H{{\mathcal H}}
\def\I{{\mathcal I}}
\def\K{{\mathcal K}}
\def\M{{\mathcal M}}
\def\N{{\mathcal N}}
\def\O{{\mathcal O}}
\def\R{{\mathcal R}}
\def\S{{\mathcal S}}
\def\U{{\mathcal U}}
\def\a{\alpha}
\def\b{\beta}
\def\d{\delta}
\def\e{\varepsilon}
\def\l{\lambda}
\def\L{\Lambda}
\def\o{\omega}
\def\r{\rho}
\def\s{\sigma}
\def\t{\tau}
\def\w{\omega}
\def\gA{\mathfrak A}
\def\gM{\mathfrak M}
\def\Ad{{\hbox{\rm Ad}}}
\def\Aut{{\hbox{Aut}}}
\def\Out{{\hbox{Out}}}
\def\id{{\rm id}}
\def\1{{\mathbbm 1}}
\def\Exp{{\rm Exp}}
\def\geo{\varphi_{\rm geo}}
\def\u1{U(1)}
\def\intervals{\mathcal{I}}
\def\diff{{\rm Diff}}
\def\Diff{{\rm Diff}}
\def\diffs1{\diff(S^1)}
\def\vir{{\rm Vir}}
\def\supp{{\rm supp}}
\def\psl2r{{\rm PSL}(2,\RR)}
\def\<{\langle}
\def\>{\rangle}
\newtheorem{theorem}{Theorem}[section]
\newtheorem{definition}[theorem]{Definition}
\newtheorem{corollary}[theorem]{Corollary}
\newtheorem{proposition}[theorem]{Proposition}
\newtheorem{lemma}[theorem]{Lemma}
\theoremstyle{remark}
\newtheorem{rem}[theorem]{Remark}
\begin{document}
\date{}

\title{\huge{Thermal States in Conformal QFT. I}}

\author{\textsc{\normalsize Paolo Camassa, Roberto Longo, Yoh Tanimoto, Mih\'aly
Weiner\footnote{Permanent address: Alfr\'ed R\'enyi Institute of Mathematics H-1364 Budapest, POB 127, Hungary}}\\  {\normalsize Universit\`a di Roma ``Tor Vergata'',
Dipartimento di Matematica}
\\  {\normalsize Via della Ricerca Scientifica, 1 - 00133 Roma, Italy}}

\maketitle
\begin{abstract}
We analyze the set of locally normal KMS states w.r.t. the translation group for a local conformal net $\A$ of von Neumann algebras on $\mathbb R$. In this first part, we focus on completely rational net $\A$. Our main result here states that, if $\A$ is completely rational, there exists exactly one locally normal KMS state $\varphi$. Moreover, $\varphi$ is canonically constructed by a geometric procedure. A crucial r\^ole is played by the analysis of  the ``thermal completion net'' associated with a locally normal KMS state.
A similar uniqueness result holds for KMS states of two-dimensional local conformal nets w.r.t. the time-translation one-parameter group.
\end{abstract}
\vskip 6.8cm

\noindent{\footnotesize Research supported in part by the ERC Advanced Grant 227458
OACFT ``Operator Algebras and Conformal Field Theory", PRIN-MIUR, GNAMPA-INDAM and EU network ``Noncommutative Geometry" MRTN-CT-2006-0031962.}
\vskip 0.5cm
\noindent{\footnotesize Email: camassa@mat.uniroma2.it, longo@mat.uniroma2.it,
tanimoto@mat.uniroma2.it, mweiner@renyi.hu}
\newpage

\section{Introduction}\label{introduction}

Although Quantum Field Theory is primarily designed to study finitely many particle states, the thermal aspects in QFT are of crucial importance for various reasons and one naturally aims at
a general analysis of the thermal behavior starting from the basic properties shared by any QFT. As is known, at infinite volume the thermal equilibrium states are characterized by the Kubo-Martin-Schwinger condition (see \cite{H}), in other words KMS states are Gibbs states for infinite volume systems. A model independent construction of KMS states in QFT has been achieved in \cite{BJ} for QFT nets of $C^*$-algebras with the natural nuclearity property; the constructed states, however, are not necessarily locally normal, i.e. the restrictions of these KMS states to bounded spacetime regions are not associated with the vacuum representation.

We now mention that, among other motivations to study thermal states in QFT, an important one comes from cosmological considerations and in quantum black hole analysis, in particular concerning the Hawking-Unruh thermal radiation. An elementary situation where this can be illustrated is the Schwarzschild black hole case where the restriction of the vacuum state on the full Schwarzschild-Kruskal spacetime to the Schwarzschild spacetime algebra satisfies the KMS condition at Hawking temperature. This example also suggests the Operator Algebraic approach to be the natural one in this thermal analysis, indeed the Bisognano-Wichmann theorem provides a model independent derivation of this result. After all, the definition of a KMS state concerns a $C^*$-algebraic dynamical system.

In this work we initiate a general study of thermal states in CFT (conformal QFT), more precisely of the locally normal KMS states with respect to the translation one-parameter group. There are several motivations for us to focus our attention to low dimensional CFT, for example in the mentioned black hole context one gets a conformal net by restriction to the horizon (see \cite{GLRV}), but also because CFT represents a limit case of general QFT; moreover conformal nets naturally arise and play a crucial role in different mathematical and physical contexts.

Before explaining our result, we wish to recall the general Quantum Statistical Mechanics description of the chemical potential made in \cite{AHKT}, where the the chemical potential of a KMS state on the observable algebra turns to appear as a label for the different extremal KMS states on the field algebra. Here the observable algebra is the fixed-point algebra w.r.t. a compact gauge group. A similar structure appears in QFT on the four-dimensional Minkowski spacetime, where the main difference occurs because in QFT one deals with a net of local von Neumann algebras and different topologies are involved. One may extends, although not obviously, the results in \cite{AHKT} to the QFT framework and indeed we shall need and provide this extension at some point for model analysis in the second part of this paper.

Yet, for the general study of KMS states in chiral CFT the results in \cite{AHKT} are totally not applicable because there is no field algebra due to the occurrence of braid group statistics. Therefore, a completely different approach was proposed in \cite{SW} and
studied in detail in \cite{Longo}, making a crucial use of the conformal symmetries.
Starting with a local conformal net $\A$ of von Neumann algebras on the real line and
a KMS state $\varphi$ on $\A$ w.r.t. translations, a new local, M\"obius covariant net was
constructed, the thermal completion of $\A$ w.r.t. to $\varphi$, playing a main role
in the analysis.

Our main object in this paper is a local conformal net $\A$ of von Neumann algebras on $S^1$:
\[
I\in\I\mapsto \A(I)
\]
where conformal stands for diffeomorphism covariant, $\I$ is the set of intervals of $S^1$ and the $\A(I)$'s are von Neumann algebras on a fixed Hilbert space (see below). Indeed we take the ``real line picture'', namely $\A$ is restricted to the real line $\mathbb R$, where $\mathbb R$ is identified with $S^1\setminus \{-1\}$ by the stereographic map. Then we consider the quasi-local $C^*$-algebra
\[
\gA_\A \equiv \overline{\bigcup_{I\Subset \mathbb R}\A(I)}^{\|\cdot\|}\ .
\]
Here the union is over the bounded open intervals of $\mathbb R$ and the closure is in the norm topology. The translations $\t_s: t\mapsto t+s$
are unitarily implemented by $U(\t_s)$ and give rise to a one-parameter automorphism group
$\Ad U(\tau_s)$ of $\gA_\A$.
Our aim is to study the locally normal KMS states $\varphi$ of $\gA_\A$
w.r.t. $\Ad U(\t)$. We recall that $\varphi$ is KMS at inverse temperature $\beta>0$ if for all $x,y\in\gA_\A$ there is a bounded continuous function $f_{xy}$ on the strip $0\leq\Im z \leq \beta$, analytic in the interior $0<\Im z < \beta$ such that
\[
f_{xy}(t) = \varphi(\Ad U(\tau_t)(x)y),\qquad
f_{xy}(t+ i\beta) = \varphi(y\Ad U(\tau_t)(x))\ .
\]
Now, by the assumed scale invariance, we have a one-parameter
automorphism group $\Ad U(\delta_s)$
of $\gA_\A$ corresponding to the dilations $\d_s:t\mapsto e^s t$, so the state $\varphi$ is KMS at inverse temperature $\beta$ iff the state $\varphi \circ \Ad U(\delta_s)$ is KMS at inverse temperature $e^s \beta$. It follows that the structure of KMS states does not depend on the temperature; in physical terms, there are no phase transitions. For this reason we fix the inverse temperature $\beta = 1$ in the rest of this paper.

Our first observation is that there always exists a canonical KMS state, that is constructed by a geometric procedure. Indeed the restriction of the vacuum state to the von Neumann algebra associated with the positive real line is KMS w.r.t. the (rescaled) dilation group (Bisognano-Wichmann property \cite{BGL,FG}); now the exponential map intertwines translations with dilations and one can use it to pull back the vacuum state and define the geometric KMS state w.r.t. translations.

One may ask whether this geometric KMS state is the only one or there are other
locally normal KMS states (different phases, in physical terms).
Indeed in general there are many KMS states
as we shall see in particular by analyzing the KMS states  of the $U(1)$-current net in the second part of this paper.

We now state our main result: if $\A$ is a completely rational local conformal net,
there exists exactly one locally normal KMS state $\varphi$ with respect to the translation group $\Ad U(\tau)$. Moreover, $\varphi$ is canonically constructed by a geometric procedure.
As we shall see, the proof of this result is obtained in several steps by a crucial use of the thermal completion net and an inductive extension procedure. This is
in accordance with the previous result which showed the uniqueness of
ground state (which is considered as a state with zero temperature)
on loop algebras \cite{Tanimoto}.

Our results extends to the case of a local conformal net $\A$ of von Neumann algebras on the two-dimensional Minkowski spacetime. We shall show that, if $\A$ is completely rational, there exists a unique KMS state w.r.t. the time-translation one-parameter group. Also in this case the KMS state has a geometric origin.

In the second part of this paper we shall study the set of KMS states for local conformal nets that are not rational.

\section{Preliminaries}\label{preliminaries}
\subsection{Conformal QFT on $S^1$}\label{sec:prel1}
Here we exhibit the mathematical setting which we use to describe physical systems
on one-dimensional spacetime $S^1$.
Let $\intervals$ be the set of all open, connected, non-dense, non-empty subsets of $S^1$. We call elements of $\intervals$ {\bf intervals} in $S^1$. For an interval $I$, we denote by $I^\prime$ the interior of the complement $S^1 \setminus I$.
The group $\psl2r$ acts on $S^1$ by the linear fractional transformations.

A {\bf (local) M\"obius covariant net} is an assignment $\A$ to each interval
of a von Neumann algebra $\A(I)$ on a fixed separable Hilbert space $\H$
with the following conditions:
\begin{enumerate}
\item[(1)] {\bf Isotony.} If $I_1 \subset I_2$, then $\A(I_1) \subset \A(I_2)$.\label{isotony}
\item[(2)] {\bf Locality.} If $I_1 \cap I_2 = \emptyset$, then $[\A(I_1),\A(I_2)] = 0$.
\item[(3)] {\bf M\"obius covariance.} There exists a strongly continuous unitary
representation $U$ of the M\"obius group $\psl2r$ such that
for any interval $I$ it holds that
\begin{equation*}
U(g)\A(I)U(g)^* = \A(gI), \mbox{ for } g \in \psl2r.
\end{equation*}
\item[(4)]{\bf Positivity of energy.} The generator of the one-parameter subgroup of
rotations in the representation $U$ is positive.
\item[(5)] {\bf Existence of vacuum.} There is a unique (up to a phase) unit vector $\Omega$ in
$\H$ which is invariant under the action of $U$,
and cyclic for $\bigvee_{I \in \intervals} \A(I)$.
\end{enumerate}

It is well-known that, from these conditions, the following properties automatically follow
(see, for example, \cite{FG}):
\begin{enumerate}
\item[(6)] {\bf Reeh-Schlieder property.} The vector $\Omega$ is cyclic and separating for each $\A(I)$.
\item[(7)] {\bf Haag duality.} For any interval $I$ it holds that $\A(I)^\prime = \A(I^\prime)$.
\item[(8)] {\bf Bisognano-Wichmann property.} The Tomita-Takesaki operator $\Delta_I$ of
$\A(I)$ with respect to $\Omega$ satisfies the following:
\begin{equation*}
U(\d^I(2\pi t)) = \Delta_I^{-it},
\end{equation*}
where $\d^I$ is the one-parameter group in $\psl2r$ which preserves
the interval $I$ (which we call ``the dilation associated to $I$'': in the real line picture $\d^I: x\mapsto e^s x$ if $I\equiv \mathbb R^+$).
\item[(9)] {\bf Factoriality.} Each local algebra $\A(I)$ is a type ${\rm {\!I\!I\!I}}_1$-factor (unless $\H$ is one dimensional).
\end{enumerate}

The Bisognano-Wichmann property is of particular importance in our context.
Precisely, this property means that the vacuum state
$\omega(\cdot) = \<\Omega,\cdot \Omega\>$
is a KMS state for $\A(I)$ with respect to $\d^I$ (at inverse temperature $2\pi$), see below. This will be exploited to construct a standard KMS state with respect to the spacetime translation in Section \ref{geometric}.

\subsection{Subnets and extensions}\label{subnets}
Let $\B$ be a M\"obius covariant net on $\H$. Another assignment $\A$ of von Neumann algebras
$\{\A(I)\}_{I \in \intervals}$ on $\H$ is called a {\bf subnet} of $\B$ if it satisfies
isotony, M\"obius covariance with respect to the same $U$ for $\B$
and it holds that $\A(I) \subset \B(I)$ for every interval $I \in \intervals$.
If $\A(I)^\prime \cap \B(I) = \CC \1$ for an interval $I$ (hence for any interval, by
the covariance and the transitivity of the action of $\psl2r$ on $\intervals$),
we say that the inclusion of nets $\A \subset \B$ is {\bf irreducible}.

Let us denote by $\H_\A$ the subspace of $\H$ generated by $\{\A(I)\}_{I\in\intervals}$ from $\Omega$, and by $P_\A$ the orthogonal projection onto $\H_\A$.
Then it is easy to see that $P_\A$ commutes with all $\A(I)$ and $U$.
The assignment $\{\A(I)\vert_{\H_\A}\}_{I\in\intervals}$ with the representation $U\vert_{\H_\A}$
of $\psl2r$ and the vacuum $\Omega$ is a M\"obius covariant
net on $\H_\A$. Conversely, if a M\"obius covariant net $\A_0$ is unitarily equivalent to
such a restriction $\A|_{\H_\A}$ of a subnet $\A$ of $\B$, then $\B$ is called an
{\bf extension} of $\A_0$. We write simply $\A_0 \subset \B$ if no confusion arises.

When we have an inclusion of nets $\A \subset \B$, for each interval $I$ there is a
canonical conditional expectation $E_I: \A(I) \to \B(I)$ which preserves the vacuum state $\omega$
thanks to the Reeh-Schlieder property and Takesaki's theorem \cite[Theorem IX.4.2]{Takesaki2}.
We define the {\bf index} of the inclusion $\A \subset \B$ as the index $[\B(I),\A(I)]$
with respect to this conditional expectation \cite{Kosaki}, which
does not depend on $I$ (again by covariance, or even without covariance \cite{LR}).
If the index is finite, the inclusion is irreducible.

\subsection{Diffeomorphism covariance and Virasoro nets}\label{diffeomorphismcovariance}
In the present paper we will consider a class of nets with a much
larger group of symmetry, which still contains many interesting examples.
Let $\diffs1$ be the group of orientation-preserving
diffeomorphisms of the circle $S^1$. This group naturally contains $\psl2r$.

A M\"obius covariant net $\A$ is said to be a {\bf conformal net} if the representation
$U$ extends to a projective unitary representation of $\diffs1$ such that
for any interval $I$ and $x \in \A(I)$ it holds that
\begin{gather*}
U(g)\A(I)U(g)^* = \A(gI), \mbox{ for } g \in \diffs1,\\
U(g)xU(g)^* = x, \mbox{ if } \supp(g) \subset I^\prime,
\end{gather*}
where $\supp(g) \subset I^\prime$ means that $g$ acts identically on $I$. In this case we say that $\A$ is {\bf diffeomorphism covariant}.

 From the second equation above we see that $U(g) \in \A(I)$
if $\supp(g) \subset I$ by Haag duality. If we define
\[
\vir(I) = \{U(g): \supp(g) \subset I\}^{\prime\prime},
\]
one can show that $\vir$ is a subnet of $\A$. Such a net is called
a {\bf Virasoro net}. Let us consider its restriction to the space $\H_\vir$.
The representation $U$ of $\diffs1$ restricts to $\H_\vir$ as well, and this restriction is irreducible
by the Haag duality. In addition, the restriction of $U$ to $\psl2r$ admits
an invariant vector $\Omega$ and the rotation still has positive energy.
Such representations have been completely classified by positive numbers
$c$, the {\bf central charge}, see for example \cite[Appendix A]{C}. It is known that even to the full representation $U$
on $\H$ we can assign the central charge $c$. Since the representation $U$ which
makes $\A$ diffeomorphism covariant is unique \cite{CW}, the value of $c$ is
an invariant of $\A$. We say that the net $\A$ has the central charge $c$.

Throughout the present paper, $\A$ is assumed to be diffeomorphism covariant.

\subsection{Complete rationality}\label{completerationality}
We now define the class of conformal nets to which our main result applies.
Let us consider the following conditions on a net $\A$.
 For intervals   $I_1, I_2$, we shall write $I_1\Subset I_2$ if the closure of $I_1$ is contained in the interior of $I_2$.
\begin{itemize}
\item[(a)] {\bf Split property.} For intervals $I_1\Subset I_2$
there exists a type I factor $F$ such that $\A(I_1) \subset F \subset \A(I_2)$.
\item[(b)] {\bf Strong additivity.} For intervals $I,I_1,I_2$ such that
$I_1 \cup I_2 \subset I$, $I_1 \cap I_2 = \emptyset$, and $I \setminus (I_1\cup I_2)$
consists of one point, it holds that $\A(I) = \A(I_1)\vee \A(I_2)$.
\item[(c)] {\bf Finiteness of $\mu$-index.} For disjoint intervals $I_1,I_2,I_3,I_4$
in a clockwise (or counterclockwise) order with a dense union in $S^1$, the Jones index
of the inclusion $\A(I_1)\vee \A(I_3) \subset (\A(I_2)\vee \A(I_4))^\prime$ is finite
(it does not depend on the choice of intervals \cite{KLM}
and we call it the {\bf $\mu$-index} of $\A$).
\end{itemize}
A conformal net $\A$ is said to be {\bf completely rational} if it satisfies
the three conditions above. If $\A$ is diffeomorphism covariant, the strong additivity condition $(b)$ follows from the other two $(a)$ and $(c)$ \cite{LX}.

An important class of completely rational nets is given by the conformal nets with
$c < 1$, which have been completely classified \cite{KL}. Among other examples of completely rational nets
(with $c \geq 1$) are $SU(N)_k$ loop group
nets \cite{FG,Xu}. It is known that complete rationality passes to
finite index extensions and finite index subnets \cite{Longo03}. The importance
of complete rationality is revealed in representation theory of nets
(see Section \ref{prelimsectors}).

\subsection{Representations and sectors of conformal nets}\label{prelimsectors}

Let $\A$ be a conformal net on $S^1$. A {\bf representation} $\pi$ of $\A$
is a family of (normal) representations $\pi_I$ of algebras $\A(I)$ on a
common Hilbert space $\H_\pi$ with the consistency condition
\[ \pi_J|_{\A(I)} = \pi_I, \mbox{ for } I \subset J. \]
A representation $\pi$ satisfying
$\{\cup_I \pi_I(\A(I))\}'=\CC \1$ is called {\bf irreducible}.
Two representations $\pi,\pi^\prime$
are {\bf unitarily equivalent} iff there is a unitary operator $W$ such
that $\Ad(W)\circ \pi_I = \pi^\prime_I$ for every interval $I$. A unitary
equivalence class of an irreducible representations is called a {\bf
sector}. It is known
that any completely rational net admits only finitely many sectors
\cite{KLM}.

A representation may be given also on the original (vacuum-)Hilbert
space.
Such a representation $\rho$ which preserves each local algebra
$\A(I)$ is
called an {\bf endomorphism} of $\A$. Note that this notion of
endomorphisms differs from the terminology of localized endomorphisms of DHR representation
theory, in which not all local algebras are preserved.
If each representation of the local algebra is surjective, it is called
an {\bf automorphism}. An automorphism which
preserves the vacuum state is called
an {\bf inner symmetry}. Any inner symmetry is implemented by
a unitary operator and it is in the same sector as the vacuum
representation.

\subsection{The restriction of a net to the real line}\label{restriction}
Although conformal nets are defined on the circle $S^1$, it is natural
from a physical point of view to consider a theory on the real line $\RR$.
We identify $\RR$ with the punctured circle $S^1\setminus \{-1\}$ by the
Cayley transform:
\[
t = i\frac{1+z}{1-z} \Longleftrightarrow z = \frac{t-i}{t+i}, t \in \RR, z \in S^1 \subset \CC.
\]
The point $-1 \in S^1$ is referred to as ``the point at infinity'' $\infty$ when
considered in the real-line picture.

We recall that the M\"obius group $\psl2r$ is generated by the following
three one-parameter groups, namely rotations, translations and dilations \cite{Longo08}:
\begin{eqnarray*}
\rho_s(z) &=& e^{is}z, \mbox{ for } z \in S^1 \subset \mathbb{C}\\
\tau_s(t) &=& t + s, \mbox{ for } t \in \mathbb{R} \\
\delta_s(t) &=& e^{s}t, \mbox{ for } t \in \mathbb{R},
\end{eqnarray*}
where rotations are defined in the circle picture, on the other hand translations and
dilations are defined in the real line picture. Of these, translations and dilations
do not move the point at infinity.

According to this identification, we also restrict a conformal net $\A$ to
the real line. Namely, we consider all the finite-length open intervals
$I \Subset \RR = S^1\setminus \{-1\}$
under the identification. We still have an isotonic and local net of von Neumann algebras
corresponding to intervals in $\RR$, which is covariant under translation, dilation and
diffeomorphisms of $S^1$ which preserve $-1$. It is known that the positivity of energy
(the generator of rotations) is equivalent to the positivity of the generator of
translations \cite{Weiner05}, and the vacuum vector $\Omega$ is invariant under translations and
dilations. We denote this restriction to the real line by $\A|_\RR$.

The terminology
of representations easily translates to the real-line picture. Namely, a representation
of $\A|_\RR$ is a consistent family $\{\pi_I\}_{I \Subset \RR}$ of representations of
$\{\A(I)\}_{I\Subset \RR}$, and an endomorphism (respectively an automorphism) is
a representation on the same Hilbert space which maps $\A(I)$ into (respectively onto) itself.
Note that the family of bounded (connected) intervals is directed.
We shall denote by
$\gA_\A$ the associated quasi-local algebra, that is the
$C^*$-algebra
\begin{equation*}
\gA_\A:= \overline{\cup_{I\Subset\RR} \A(I)}
\end{equation*}
where the closure is meant in the operator norm topology. By the directedness,
any representation (resp. endomorphism, automorphism) of $\A|_\RR$ extends to a representation
(resp. endomorphism, automorphism) of the $C^*$-algebra $\gA_\A$.
Translations and dilations take bounded intervals $I \Subset \RR$ to
bounded intervals, hence
these transformations give rise to automorphisms of $\gA_\A$.

\subsection{KMS states on chiral nets: general remarks}\label{kmsstates}

In what follows we shall use the ``real-line'' picture.
A linear functional
$\psi:\gA_\A\to\CC$ such that its {\it local restriction}
$\psi|_{\A(I)}$ is normal for every bounded open interval $I\Subset
\RR$ is said to be {\bf locally normal} on $\gA_\A$.
Let now $\psi$ be a locally normal state on $\gA_\A$ and consider the
associated GNS representation $\pi_\psi$ of $\gA_\A$ on the Hilbert space
$\H_\psi$ with GNS vector $\Psi$. By construction, the vector
$\Psi$ is cyclic for the algebra $\pi_\psi(\gA_\A)$ and
$\langle \Psi, \pi_\psi(x)\Psi\rangle = \psi(x)$ for every
$x\in\gA_\A$.
\begin{lemma}
$\H_\pi$ is separable.
\end{lemma}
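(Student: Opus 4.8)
The plan is to exploit that the ambient Hilbert space $\H$ on which the net lives is separable, so that each local algebra $\A(I)$ has separable predual, and then to reduce the separability of the GNS space $\H_\psi$ to a countable exhaustion of $\gA_\A$ by local algebras. Throughout I write $\psi$ for the locally normal state and $(\H_\psi,\pi_\psi,\Psi)$ for its GNS triple.

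First I would fix an increasing sequence of bounded open intervals $I_n\Subset\RR$ (for instance $I_n=(-n,n)$) with $\bigcup_n I_n=\RR$. Since every bounded interval $I\Subset\RR$ is eventually contained in some $I_n$, isotony gives $\A(I)\subset\A(I_n)$, so $\bigcup_n\A(I_n)$ is norm-dense in $\gA_\A$. Because $\pi_\psi$ is norm-continuous and $\Psi$ is cyclic, it follows that
\[
\H_\psi=\overline{\pi_\psi(\gA_\A)\Psi}=\overline{\bigcup_n \K_n},\qquad \K_n:=\overline{\pi_\psi(\A(I_n))\Psi}\ .
\]
Thus it suffices to show that each $\K_n$ is separable, since the closed linear span of countably many separable subspaces is separable.

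For the key step I would observe that $\psi_n:=\psi|_{\A(I_n)}$ is normal by hypothesis, and that $(\K_n,\pi_\psi|_{\A(I_n)},\Psi)$ is precisely the GNS triple of $\psi_n$ on $\A(I_n)$ (indeed $\K_n$ is $\pi_\psi(\A(I_n))$-invariant, contains $\Psi=\pi_\psi(\1)\Psi$, on which $\Psi$ is cyclic and reproduces $\psi_n$). Now $\A(I_n)$ acts on the separable Hilbert space $\H$, hence every normal positive functional on it is a countable sum of vector states: $\psi_n(x)=\sum_{k}\langle\xi_k,x\xi_k\rangle$ for some sequence $(\xi_k)$ in $\H$ with $\sum_k\|\xi_k\|^2=1$. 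Forming $\xi=\sum_k \xi_k\otimes e_k$ in the separable Hilbert space $\H\otimes\K_0$, where $\K_0$ has orthonormal basis $(e_k)$, one gets $\psi_n(x)=\langle\xi,(x\otimes\1)\xi\rangle$, so $\psi_n$ is the vector state of $\xi$ in the (still normal) amplified representation $x\mapsto x\otimes\1$. Its GNS space is therefore spatially isomorphic to the cyclic subspace $\overline{(\A(I_n)\otimes\1)\xi}\subset\H\otimes\K_0$, which is separable; hence $\K_n$ is separable.

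The main obstacle is exactly this last step, i.e. passing from normality of $\psi_n$ to separability of its GNS space; the rest is routine bookkeeping. The crucial input is the representation of a normal positive functional as a countable sum of vector states on the separable space $\H$, which is where separability of $\H$ (part of the standing axioms of a M\"obius covariant net) enters. Once this local statement is established, the countable exhaustion $\gA_\A=\overline{\bigcup_n\A(I_n)}$ reduces the global claim to it, and separability of $\H_\psi$ follows.
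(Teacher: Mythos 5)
Your proof is correct and follows essentially the same route as the paper's: exhaust $\gA_\A$ by the local algebras $\A(I_n)$, identify $\overline{\pi_\psi(\A(I_n))\Psi}$ with the GNS space of the normal restriction $\psi|_{\A(I_n)}$, and conclude from countability of the union. The only difference is that you spell out, via the countable sum of vector states and the amplification to $\H\otimes\ell^2$, why a normal state on a von Neumann algebra acting on a separable Hilbert space has separable GNS space --- a step the paper leaves implicit by appealing to property (9).
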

\begin{proof}
Let $I\Subset \RR$ be a bounded interval. The restriction
of $\pi_\psi|_{\A(I)}$ to the Hilbert space
$\overline{\pi_\psi(\A(I))\Psi}$ may be viewed as the GNS
representation of $\A(I)$ coming from the state
$\psi|_{\A(I)}$. It follows that $\overline{\pi_\psi(\A(I))\Psi}$
is separable, since (property $(9)$ in Section \ref{sec:prel1})
the local algebra $\A(I)$ is a type I\!I\!I$_1$ factor given on a separable
Hilbert space.

Let now $I_n:=(-n,n)\in\RR$ and $\H_{\psi,n}:=\overline{\pi_\psi(\A(I_n))\Psi}$
for every $n\in\NN$. Then, on one hand, $\H_{\psi,n}$
is separable for every $n\in\NN$; on the other hand, using that every
finite length interval $I$ is contained in {\it some} interval $I_n$,
it follows easily that $\cup_n \H_{\psi,n}$ is dense in $\H_\psi$.
Thus $\H_\psi$ is separable, as it is the closure of the union of a countable
number of separable Hilbert spaces.
\end{proof}
\begin{corollary}
The restriction of $\pi_\psi$ to any local algebra $\A(I)$ $(I\Subset \RR)$
is normal; thus $\A_\psi(I):=\pi_\psi(\A(I))$ is a von Neumann algebra on
$\H_\psi$, and $\pi_\psi|_{\A(I)}:\A(I)\to \A_\psi(I)$ is actually a unitarily
implementable isomorphism between type I\!I\!I$_1$ factors.
\end{corollary}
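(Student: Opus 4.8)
The plan is to establish the three assertions in turn---normality of $\pi_\psi|_{\A(I)}$, weak closedness of its image, and spatial (unitary) implementation---of which only the first needs genuine work, the remaining two being standard type ${\rm III}$ factor theory.

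For normality I would show that \emph{every} vector state $\xi\mapsto\langle\xi,\pi_\psi(\cdot)\xi\rangle$ is normal on $\A(I)$; this suffices, because for a bounded increasing net $x_\alpha\uparrow x$ in $\A(I)$ the normality of all vector states gives $\langle\xi,\pi_\psi(x_\alpha)\xi\rangle\to\langle\xi,\pi_\psi(x)\xi\rangle$ for all $\xi$, and hence $\pi_\psi(x_\alpha)\uparrow\pi_\psi(x)$ $\sigma$-weakly by polarization and positivity. To exhibit a dense set of such vectors I would take $\xi=\pi_\psi(a)\Psi$ with $a\in\A(J)$ for some bounded $J\supseteq I$: then $\langle\xi,\pi_\psi(x)\xi\rangle=\psi(a^*xa)$ for $x\in\A(I)$, and since $a^*xa\in\A(J)$, the map $x\mapsto a^*xa$ is $\sigma$-weakly continuous, and $\psi|_{\A(J)}$ is normal by hypothesis, the functional $x\mapsto\psi(a^*xa)$ is normal on $\A(I)$. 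The analogous sesquilinear computation handles finite sums, so the vector state is normal for every $\xi$ in the algebraic span of $\bigcup_{J\supseteq I}\pi_\psi(\A(J))\Psi$, which is dense in $\H_\psi$ exactly as in the previous lemma. Finally I would observe that the set of $\xi$ with normal vector state is norm-closed (the map $\xi\mapsto\langle\xi,\pi_\psi(\cdot)\xi\rangle$ is norm-continuous into $\A(I)^*$ and the predual $\A(I)_*$ is norm-closed there), so normality in fact holds on all of $\H_\psi$. An equivalent route uses the invariant subspaces $\H_{\psi,n}=\overline{\pi_\psi(\A(I_n))\Psi}$: each projection onto $\H_{\psi,n}$ lies in $\pi_\psi(\A(I))'$, the representation on $\H_{\psi,n}$ is the GNS representation of the normal state $\psi|_{\A(I_n)}$ and is thus normal, and these subspaces increase to $\H_\psi$.

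Granting normality, the image $\A_\psi(I)=\pi_\psi(\A(I))$ is $\sigma$-weakly closed and hence a von Neumann algebra; its kernel is a $\sigma$-weakly closed two-sided ideal of the factor $\A(I)$ and therefore trivial, so $\pi_\psi|_{\A(I)}$ is a normal $*$-isomorphism onto $\A_\psi(I)$, and type being an isomorphism invariant, $\A_\psi(I)$ is again a type ${\rm III}_1$ factor. For the last clause I would note that the identity representation of $\A(I)$ on $\H$ and $\pi_\psi|_{\A(I)}$ on $\H_\psi$ are two faithful normal representations of the type ${\rm III}$ factor $\A(I)$ on separable Hilbert spaces ($\H_\psi$ separable by the lemma), form their direct sum on $\H\oplus\H_\psi$, and use that the commutant $N$ of the resulting type ${\rm III}$ factor is itself a type ${\rm III}$ factor; the two nonzero projections $1_\H\oplus 0$ and $0\oplus 1_{\H_\psi}$ of $N$ are then Murray--von Neumann equivalent, and the implementing partial isometry restricts to the desired unitary $W\colon\H\to\H_\psi$ with $\pi_\psi(x)=WxW^*$ on $\A(I)$.

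The step I expect to be the real obstacle is the normality argument: local normality is a hypothesis about the single vector $\Psi$, whereas $\H_\psi$ is generated by the strictly larger algebras $\A(J)$ with $J\supseteq I$, so the content lies precisely in propagating normality from $\Psi$ to the whole representation space. Once this is in place the remaining assertions are routine consequences of the type ${\rm III}_1$ factor property of the local algebras.
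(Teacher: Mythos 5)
Your proof is correct, but it reaches normality by a genuinely different route than the paper. The paper's entire proof is the remark that, by the preceding lemma, $\H_\psi$ is separable, and a representation of a type ${\rm III}$ factor (more generally, of a von Neumann algebra with no finite type ${\rm I}$ summand, given on a separable space) on a separable Hilbert space is \emph{automatically} normal and, being faithful since $\A(I)$ is a factor, unitarily implementable; thus the local normality of $\psi$ enters only indirectly, through the separability lemma. You instead prove normality by hand: you propagate the normality of $\psi|_{\A(J)}$ from the GNS vector $\Psi$ to the dense set of vectors $\sum_i\pi_\psi(a_i)\Psi$ via the functionals $x\mapsto\sum_{i,j}\psi(a_i^*xa_j)$, and then to all of $\H_\psi$ by norm-closedness of the predual inside the dual (your alternative route through the invariant subspaces $\H_{\psi,n}$, whose projections lie in $\pi_\psi(\A(I))'$ and increase to $\1$, is the cleanest form of this). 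Both arguments are sound. The paper's is a one-liner but leans on a nontrivial classical theorem (automatic normality of representations of properly infinite algebras); yours is self-contained, uses the local-normality hypothesis exactly where one expects it, and its first two conclusions (normality, and isomorphism onto a von Neumann algebra) need only that $\A(I)$ is a factor. The type ${\rm III}$ property and separability are then needed only in your final step, the comparison of the projections $\1_\H\oplus 0$ and $0\oplus\1_{\H_\psi}$ in the type ${\rm III}$ commutant of ${\rm id}\oplus\pi_\psi$, which is the same standard argument the paper implicitly invokes for unitary implementability.
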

\begin{proof}
The listed facts follow from the last lemma since $\A(I)$ is
a type ${\rm {\!I\!I\!I}}_1$ factor given on a separable Hilbert
space.
\end{proof}

A translation of the real line takes every bounded interval into a
bounded interval. Thus the adjoint action of the strongly continuous
one-parameter group of unitaries $t\mapsto U(\t_t)$ associated to
translations, which is originally given for the chiral net $\A$,
may be viewed as a one-parameter group of
$*$-automorphisms of $\gA_\A$. Similarly, we may consider dilations,
too, as a one-parameter group $t\mapsto \Ad U(\d_t)$ of
$*$-automorphisms of $\gA_\A$.  We have that
\begin{equation*}
\Ad U(\tau_t)(\A(I)) = \A(t+I),;\;\;\;\;\; \Ad U(\d_t)(\A(I)) = \A(e^t I)
\end{equation*}
and we have the group relations
\begin{equation*}
\d_s\circ\tau_t = \tau_{e^s t} \circ\d_s.
\end{equation*}

Let $\a_t$ be a one-parameter automorphism group of the $C^*$-algebra $\gA_\A$.
A {\bf $\b$-KMS state} $\varphi$ on $\gA_\A$ with respect to $\a_t$ is a state with the following
condition: for any $x, y \in \gA_\A$ there is an analytic function $f$ on the strip $0 < \Im z <  \beta$, bounded and continuous on the closure of the strip, such that
\begin{equation*}
f(t) = \varphi(x\a_t(y)),\quad f(t+i\beta) = \varphi(\a_t(y)x).
\end{equation*}

In what follows we will be interested in states
on $\gA_\A$ satisfying the $\beta$-KMS condition
w.r.t. the one-parameter group
$t\mapsto \Ad U(\tau_t)$. As already said in the introduction, as a direct consequence
of the last recalled group-relations,
$\varphi$ is such a $\beta$-KMS state if and only if
$\varphi\circ \Ad U(\d_t)$ is a KMS state with {\it inverse
temperature} $\beta/e^t$. Thus it is enough to study
KMS states at the fixed inverse temperature $\beta=1$,
which we shall simply call a {\bf KMS state}.

A KMS state $\varphi$ of $\gA_\A$ w.r.t. $t\mapsto \Ad U(\tau_t)$
is in particular an invariant state for $t\mapsto \Ad U(\tau_t)$.
Thus, considering the GNS representation $\pi_\varphi$ associated to $\varphi$ on the
Hilbert space $\H_\varphi$ with GNS vector $\Phi$, we have that there
exists a unique one-parameter group of unitaries $t\mapsto V_\varphi(t)$
of $\H_\varphi$ such that
\begin{equation*}
V_\varphi(t)\pi_\varphi(x)\Phi = \pi_\varphi(\Ad U(\tau_t(x)))\Phi
\end{equation*}
for all $t\in \RR$ and $x\in\gA_\A$. It is well-known that $\Phi$
is automatically cyclic and separating for the von Neumann algebra
$\pi_\varphi(\gA_\A)''$ \cite{Takesaki-Winnink}, and that the associated modular group
$t\mapsto \Delta^{it}$ actually coincides with
$t\mapsto V_\varphi(t)$.

By the general result \cite[Theorem 1]{Takesaki-Winnink}, a KMS state is
automatically locally normal. Moreover, by \cite[Theorem 4.5]{Takesaki-Winnink}
every KMS state can be decomposed into {\it primary} KMS states. We recall
that a KMS state $\varphi$ is {\bf primary} iff it cannot be written as a
nontrivial convex combination of other KMS states and that it is
equivalent with the property that $\pi_\varphi(\gA_\A)''$ is a factor.

We also recall the KMS version of the well-known Reeh-Schlieder property.
Its proof relies on standard arguments,
see e.g.\! \cite[Prop.\! 3.1]{Longo}.
\begin{lemma}
Let $\varphi$ be a KMS state on $\gA_\A$ w.r.t. the one-parameter group
$t\mapsto\Ad U(\tau_t)$, and let $\pi_\varphi$ be the associated GNS representation with GNS vector
$\Phi$. Then $\Phi$ is cyclic and separating for $\pi_\varphi(\A(I))$ for every bounded
(nonempty, open) interval $I\Subset \RR$.
\end{lemma}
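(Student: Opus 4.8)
The plan is to transfer the cyclic and separating properties of $\Phi$ from the global algebra $M:=\pi_\varphi(\gA_\A)''$, for which they are already known, down to the local algebra $\pi_\varphi(\A(I))$. The separating property is immediate: since $\pi_\varphi(\A(I))\subset M$ and $\Phi$ is separating for $M$, it is a fortiori separating for the subalgebra $\pi_\varphi(\A(I))$. The whole content is therefore the cyclicity, i.e.\ $\overline{\pi_\varphi(\A(I))\Phi}=\H_\varphi$. I would argue by contradiction: fix a slightly smaller interval $I_0\Subset I$ and an $\e>0$ with $I_0+s\Subset I$ for $|s|<\e$, and suppose $\eta\perp\pi_\varphi(\A(I))\Phi$. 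For $x\in\A(I_0)$ one has $\Ad U(\t_s)(x)\in\A(I)$ whenever $|s|<\e$, so, recalling that $V_\varphi(s)=\Delta^{is}$ and $V_\varphi(s)\pi_\varphi(x)\Phi=\pi_\varphi(\Ad U(\t_s)(x))\Phi$,
\[
F(s):=\langle\eta,\Delta^{is}\pi_\varphi(x)\Phi\rangle=0,\qquad |s|<\e .
\]

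The heart of the matter is to promote this local vanishing to vanishing for all real $s$, and this is where the modular structure of the KMS state replaces the spectrum condition used in the vacuum Reeh--Schlieder theorem. Here $\Delta^{is}$ has no bounded analytic continuation (the modular Hamiltonian is not semibounded), so one cannot continue the vector $\Delta^{is}\pi_\varphi(x)\Phi$ to a full half-plane as in the energy-positive case. Instead I would invoke Tomita--Takesaki theory: since $x\in M$, the vector $\pi_\varphi(x)\Phi$ lies in $D(\Delta^{1/2})$, whence $\Delta^{iz}\pi_\varphi(x)\Phi$ is defined throughout the strip $-\tfrac12<\Im z\le 0$ and, by the spectral theorem, $G(z):=\langle\eta,\Delta^{iz}\pi_\varphi(x)\Phi\rangle$ is analytic there and continuous up to the real axis, with boundary values $G(s)=F(s)$. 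As $F$ vanishes on the real interval $(-\e,\e)$, the Schwarz reflection principle continues $G$ analytically across that interval; being zero on an interval, by the identity theorem $G$ vanishes on the whole connected strip, so $F\equiv 0$ on $\RR$. Consequently $\eta\perp\pi_\varphi(\A(I_0+s))\Phi$ for every $s$; letting $I_0$ run over the subintervals of $I$ and $s$ over $\RR$, every $\eta\perp\K:=\overline{\pi_\varphi(\A(I))\Phi}$ is orthogonal to $\pi_\varphi(\A(K))\Phi$ for all $K$ with $|K|<|I|$, i.e.\ $\pi_\varphi(\A(K))\Phi\subseteq\K$ for all such $K$.

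It remains to cover the whole quasi-local algebra. Let $e$ be the projection onto $\K$, so $e\in\pi_\varphi(\A(I))'$. Using locality one checks cleanly that $e$ also commutes with $\pi_\varphi(\A(K))$ for every $K$ with $K\cap I=\emptyset$: such $\pi_\varphi(\A(K))$ commutes with $\pi_\varphi(\A(I))$ and already carries $\Phi$ into $\K$, hence preserves $\K=\overline{\pi_\varphi(\A(I))\Phi}$. Granting that $e\in M'$, one concludes at once: $e\Phi=\Phi$ and $\Phi$ is separating for $M'$ (being cyclic for $M$), so $\1-e\in M'$ kills $\Phi$ and thus $e=\1$, that is $\K=\H_\varphi$. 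I expect the genuine obstacle to lie exactly in the remaining step, namely passing from operators localized in $I$ and in intervals disjoint from $I$ to those localized in intervals straddling the endpoints of $I$ --- equivalently, handling products of operators from overlapping intervals, for which weak additivity alone does not suffice. This requires the multivariable refinement of the analytic continuation above, and like the single-variable case it rests entirely on the modular analyticity supplied by the KMS condition rather than on any positivity of the energy, as carried out in \cite[Prop.\! 3.1]{Longo}.
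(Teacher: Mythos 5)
Your proposal is correct and follows essentially the same route as the paper, which itself gives no argument beyond citing the standard KMS Reeh--Schlieder proof of \cite[Prop.\! 3.1]{Longo}: the one-variable modular-analyticity step (using $\pi_\varphi(x)\Phi\in D(\Delta^{1/2})$, Schwarz reflection across $(-\e,\e)$ and the identity theorem) is carried out correctly, and you rightly identify that the only remaining content is the multivariable tube argument needed to pass from single operators to products from overlapping intervals. Since that is precisely the ``standard argument'' the authors defer to the same reference, your write-up is in fact more detailed than the paper's.
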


\subsection{The geometric KMS state}\label{geometric}

Here we show that every local, diffeomorphism covariant net $\A$ admits at least one KMS state, indeed this state has a geometric origin. The construction of this geometric KMS state $\geo$ is essential for our results, hence we include it in the present paper.

The geometric KMS state is constructed using two properties: Bisognano-Wichmann
property (valid also in higher dimensions), which implies that the
vacuum state is a KMS state for the $C^*$-algebra $\A\left(\RR_+\right)$
w.r.t. dilations; diffeomorphism covariance, by which it is (locally)
possible to find a map from $\RR$ to $\RR_+$ that sends translations
to dilations. Such a map would (globally) be the exponential, which
is not a diffeomorphism of $\RR$ onto $\RR$, but for any given interval we can find a diffeomorphism which coincides with the exponential map on that interval.

\begin{proposition}\label{pro:local-diffeom-geom-kms-state}
For any conformal net $\A$, there is a canonical injective endomorphism
$\Exp$ of the $C^*$-algebra $\mathfrak A_\A\equiv\overline{\bigcup_{I \Subset \RR} \A(I)}^{\|\cdot\|}$ such that
\begin{enumerate}
\item[(1)] $\Exp\left(\A\left(I\right)\right)=\A\left(e^{2\pi I}\right)$
\item[(2)] $\Exp\circ\Ad U(\t_{t})=\Ad U(\d_{2\pi t})\circ\Exp$,
\item[(3)] $\Exp$ is a $C^*$-algebra isomorphism of $\gA_\A$ with $\gA(\RR_+)\equiv
\overline{\bigcup_{I \Subset \RR_+} \A(I)}^{\|\cdot\|}$.
\end{enumerate}
\end{proposition}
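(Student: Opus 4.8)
The plan is to build $\Exp$ locally, one interval at a time, by implementing the exponential map through a circle diffeomorphism and then gluing the pieces together.

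First, I would fix a bounded interval $I\Subset\RR$ and produce an orientation-preserving diffeomorphism $g_I\in\diff(S^1)$ that coincides with $x\mapsto e^{2\pi x}$ on an open neighborhood of $\overline I$. This is possible precisely because $\overline I$ is a compact subinterval of $\RR$ bounded away from the point at infinity, and its image $e^{2\pi\overline I}$ is a compact subinterval of $\RR_+$ likewise bounded away from infinity: on such a set $x\mapsto e^{2\pi x}$ is a smooth increasing map between two short arcs, so it can be completed to a diffeomorphism of the whole circle by any smooth interpolation on the complementary arc. This is the step where one sees concretely why the exponential is only a \emph{local} diffeomorphism of $S^1$ (globally it would have to send the single point at infinity to both limits $0$ and $+\infty$), and hence why $\Exp$ will be merely an endomorphism. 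I then set $\Exp:=\Ad U(g_I)$ on $\A(I)$; diffeomorphism covariance immediately gives $\Exp(\A(I))=\A(g_I I)=\A(e^{2\pi I})$, which is property $(1)$.

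The heart of the argument is a well-definedness lemma. If $g_I$ and $\tilde g_I$ both agree with $x\mapsto e^{2\pi x}$ near $\overline I$, then $h:=g_I^{-1}\tilde g_I$ is the identity on a neighborhood of $\overline I$, so $\supp(h)\subset I'$; the support clause of diffeomorphism covariance then yields $U(h)\,a\,U(h)^*=a$ for every $a\in\A(I)$, and since the projective phases of $U$ cancel in the adjoint action this means $\Ad U(g_I)=\Ad U(\tilde g_I)$ on $\A(I)$. The same computation shows the definitions are compatible with inclusions: if $I_1\subset I_2$ then $g_{I_2}$ also realizes the exponential on $I_1$, so the two recipes agree on $\A(I_1)$. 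Consequently the local maps patch to a single $*$-homomorphism on the dense $*$-subalgebra $\bigcup_{I\Subset\RR}\A(I)$; being implemented locally by unitaries it is isometric, hence it extends uniquely to an isometric, and therefore injective, $*$-endomorphism $\Exp$ of $\gA_\A$.

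It then remains to verify $(2)$ and $(3)$. For the intertwining property $(2)$ I would apply the well-definedness lemma once more: for $a\in\A(I)$ both sides are an adjoint action by a diffeomorphism, and one checks that $g_{t+I}\circ\t_t$ and $\d_{2\pi t}\circ g_I$ agree on $I$ (both send $x$ to $e^{2\pi(x+t)}$ there), so $\Ad U(g_{t+I})\circ\Ad U(\t_t)=\Ad U(\d_{2\pi t})\circ\Ad U(g_I)$ on $\A(I)$, which is exactly $\Exp\circ\Ad U(\t_t)=\Ad U(\d_{2\pi t})\circ\Exp$. For $(3)$, I note that $x\mapsto e^{2\pi x}$ is a bijection of $\RR$ onto $\RR_+$ carrying the bounded intervals $I\Subset\RR$ bijectively onto the bounded intervals $J\Subset\RR_+$; hence $\Exp$ sends the generating algebras $\A(I)$ onto the generating algebras $\A(J)$ of $\gA(\RR_+)$, and being already an isometric injection it is a $C^*$-isomorphism of $\gA_\A$ onto $\gA(\RR_+)\subsetneq\gA_\A$. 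The main obstacle throughout is the well-definedness lemma, since both the consistency of the construction and properties $(2)$–$(3)$ reduce to it; everything else is geometric bookkeeping around the fact that the exponential is a diffeomorphism of $S^1$ only after restriction to a bounded interval.
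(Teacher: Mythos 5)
Your proposal is correct and follows essentially the same route as the paper: define $\Exp$ on each $\A(I)$ as $\Ad U(g_I)$ for a diffeomorphism realizing $x\mapsto e^{2\pi x}$ near $I$, check consistency, extend by density, and derive the intertwining relation from the corresponding identity of maps on $I$. The only difference is presentational — you make explicit, via the support clause of diffeomorphism covariance, the well-definedness lemma that the paper invokes implicitly when asserting $\Ad U(\eta_I)|_{\A(I)}=\Ad U(\eta_J)|_{\A(I)}$ for $I\subset J$.
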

\begin{proof}
For any $I\Subset\RR,$ choose a map $\eta_{I}\in C^{\infty}\left(\RR,\RR\right)$
such that: $\eta_{I}\left(t\right)=e^{2\pi t}$, $\forall t\in I$; outside
an interval $J\Subset\RR$ ($J$ has to contain both $I$ and $e^{2\pi I})$
$\eta_{I}$ is the identity map $\eta_{I}\left(t\right)=t$; $\eta_{I}^{-1}\in C^{\infty}\left(\RR,\RR\right)$.
Then $\eta_{I}$ is a diffeomorphism and has a unitary representative
$U\left(\eta_{I}\right)$ such that $\Ad U\left(\eta_{I}\right)\left(\A\left(J\right)\right)=\A\left(\eta_{I}J\right)$
and in particular $\Ad U\left(\eta_{I}\right)\left(\A\left(I\right)\right)=\A\left(e^{2\pi I}\right)$.
Set $\Exp|_{\A\left(I\right)}=\Ad U\left(\eta_{I}\right)$, this is a well-defined
endomorphism of $\cup_{I \Subset \RR} \A(I)$
(since $\Ad U\left(\eta_{I}\right)|_{\A\left(I\right)}=\Ad U\left(\eta_{J}\right)|_{\A\left(I\right)}$
whenever $I\subset J$) which can be extended to the norm closure $\gA_\A$
satisfying (1) and (3). Condition (2) follows from the corresponding relation
for maps of $\RR,$ $\eta_{I}\circ \t_{t}=\d_{2\pi t}\circ \eta_{I}$, and
the fact that, on every local algebra $\A\left(I\right)$,
\begin{multline*}
\Exp\circ\Ad U(\t_{t})=\Ad U\left(\eta_{I}\right)\circ\Ad U(\t_{t}) =\Ad U\left(\eta_{I}\circ \t_{t}\right) =\\
=\Ad U\left(\d_{2\pi t}\circ \eta_{I}\right) =\Ad U(\d_{2\pi t})\circ\Ad U\left(\eta_{I}\right) = \Ad U(\d_{2\pi t})\circ\Exp.
\end{multline*}
\end{proof}

\begin{theorem}\label{geometricconstruction}
For any conformal net $\A$, the state $\geo:=\w\circ\Exp$ is a primary
KMS state w.r.t. translations.\end{theorem}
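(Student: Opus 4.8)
The plan is to transport the Bisognano--Wichmann (modular) KMS structure of the vacuum on the half-line algebra back to $\gA_\A$ through the isomorphism $\Exp$, and to read off primarity from the fact that the resulting GNS representation is, up to unitary equivalence, the vacuum representation of $\A(\RR_+)$.

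First I would record the one input provided by geometry. By the Bisognano--Wichmann property (property~(8) and the remark immediately following it), the vacuum state $\w$ restricted to $\A(\RR_+)$ is a KMS state for the dilation group $\Ad U(\d_s)$ at inverse temperature $2\pi$; here $\A(\RR_+)$ coincides with $\A(I)$ for the interval $I\subset S^1$ corresponding to the half-line $\RR_+$. Since the dilations send every $I_0\Subset\RR_+$ to $e^sI_0\Subset\RR_+$, the $C^*$-algebra $\gA(\RR_+)$ is globally invariant under $\Ad U(\d_s)$; and because for arbitrary $a,b\in\A(\RR_+)$ the KMS condition furnishes a function $g_{ab}$, bounded and continuous on $\{0\le\Im z\le 2\pi\}$ and analytic inside, with boundary values $g_{ab}(s)=\w(a\,\Ad U(\d_s)(b))$ and $g_{ab}(s+2\pi i)=\w(\Ad U(\d_s)(b)\,a)$, in particular $\w|_{\gA(\RR_+)}$ is KMS for dilations at inverse temperature $2\pi$ in the $C^*$-sense.

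Next comes the transport, which is the real content. Fix $x,y\in\gA_\A$ and set $a=\Exp(x)$, $b=\Exp(y)$, which lie in $\gA(\RR_+)$ by part~(3) of Proposition~\ref{pro:local-diffeom-geom-kms-state}. Using that $\Exp$ is a $*$-homomorphism together with the intertwining relation~(2), $\Exp\circ\Ad U(\t_t)=\Ad U(\d_{2\pi t})\circ\Exp$, I compute $\geo(x\,\Ad U(\t_t)(y))=\w(a\,\Ad U(\d_{2\pi t})(b))=g_{ab}(2\pi t)$ and, in the other order, $\geo(\Ad U(\t_t)(y)\,x)=\w(\Ad U(\d_{2\pi t})(b)\,a)=g_{ab}(2\pi t+2\pi i)$. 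Hence $f(z):=g_{ab}(2\pi z)$ is bounded and continuous on $\{0\le\Im z\le 1\}$, analytic inside, and satisfies $f(t)=\geo(x\,\Ad U(\t_t)(y))$ and $f(t+i)=\geo(\Ad U(\t_t)(y)\,x)$. As $x,y$ were arbitrary, $\geo$ is a KMS state for $\Ad U(\t)$ at inverse temperature $1$. The contraction of the strip width from $2\pi$ to $1$ is precisely the effect of the factor $2\pi$ built into $\Exp$.

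For primarity I use that $\Exp$ is an isomorphism of $\gA_\A$ onto $\gA(\RR_+)$, so $\geo=\w_0\circ\Exp$ with $\w_0:=\w|_{\gA(\RR_+)}$, whence the GNS representation $(\pi_\geo,\H_\geo,\Phi)$ is unitarily equivalent to the GNS representation of $(\gA(\RR_+),\w_0)$. The latter is just the vacuum representation: its space is $\overline{\gA(\RR_+)\Omega}=\overline{\A(\RR_+)\Omega}=\H$, the last equality because $\A(\RR_+)=\A(I)$ and $\Omega$ is cyclic for it by Reeh--Schlieder (property~(6)). Consequently $\pi_\geo(\gA_\A)''\cong\A(\RR_+)$, which is a type ${\rm III}_1$ factor by property~(9); being a factor, this makes $\geo$ primary. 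The step I expect to demand the most care is the bookkeeping in the previous paragraph: one must check that the combined conventions of the Bisognano--Wichmann normalization, the factor $2\pi$ in $\Exp$, and the KMS definition place the analytic function on the upper strip $\{0<\Im z<1\}$ — so that the inverse temperature is the positive value $\beta=1$ rather than its reflection — with positivity of the energy fixing the correct orientation, and that $f$ genuinely inherits boundedness and continuity up to the closed strip from $g_{ab}$.
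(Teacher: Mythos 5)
Your proof is correct and follows essentially the same route as the paper: both arguments identify the GNS representation of $\geo$ with the vacuum representation of $\A(\RR_+)$ (giving primarity via factoriality of $\A(\RR_+)$) and obtain the KMS property by transporting the Bisognano--Wichmann modular/KMS structure of $(\A(\RR_+),\Omega)$ through the intertwining relation $\Exp\circ\Ad U(\t_t)=\Ad U(\d_{2\pi t})\circ\Exp$. The only difference is presentational: the paper phrases the last step by identifying the modular group of the represented net with translations, whereas you write out the rescaled analytic functions explicitly.
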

\begin{proof}
By definition, the GNS representation of $\geo$ is (unitarily equivalent
to) the composition of the vacuum (identity) representation with $\Exp$:
$\left(\Exp,\H_{\Omega},\Omega\right)$. Thus $\pi_{\geo}\left(\gA_\A\right)^{\prime\prime}=\A\left(\RR_+\right)$
which is a factor: $\geo$ is a primary state.

The vector $\Omega$ is cyclic
and separating for $\A\left(\RR_+\right)$ and by the Bisognano-Wichmann property the
modular group is the group $t\mapsto U(\delta_{2\pi t})$
of (rescaled) dilations (dilations associated to the interval $\RR_+\subset S^{1},$
i.e. the ``true'' dilations), therefore
$\Ad\Delta_{\Omega}^{it}\circ\Exp=\Ad U(\d_{2\pi t})\circ\Exp=\Exp\circ\Ad U(\t_{t})$.

Hence, as the modular group w.r.t. $\Omega$ is the translation group
for the represented net $\RR\Supset I\mapsto\Exp\left(\A\left(I\right)\right)$,
the vector state $\Omega$ is a KMS state w.r.t. translations.
\end{proof}

\begin{rem}\label{essential-duality-geometric}
Consider the case where $\A$ is strongly additive. Then, in the vacuum representation of $\A$, we have
$\A\left(e^{2\pi a},\infty\right)\cap\A\left(e^{2\pi b},\infty\right)^{\prime}=\A\left(e^{2\pi a},e^{2\pi b}\right)$, therefore, by construction,
$\A_{\rm geo}\left(a,\infty\right)\cap\A_{\rm geo}\left(b,\infty\right)^{\prime}=\A_{\rm geo}\left(a,b\right)$,
for any $a<b<\infty$, where  $\A_{\rm geo}=\A_{\varphi_{{\rm geo}}}$ is defined as in eq.\!  \eqref{geo} here below.

%
\end{rem}

By the same arguments used in the proof of Theorem \ref{geometricconstruction}, we have the following.
\begin{proposition}\label{dil-tra}
There is a one-to-one map between the sets of
\begin{itemize}
\item KMS states on $\gA(\RR_+)\equiv \overline{\bigcup_{I \Subset \RR_+} \A(I)}^{\|\cdot\|}$ with respect to dilations
\item KMS states on $\gA_\A \equiv \overline{\bigcup_{I \Subset \RR} \A(I)}^{\|\cdot\|}$ with respect to translations.
\end{itemize}
The correspondence is given by $\varphi \mapsto \varphi\circ\Exp$.
\end{proposition}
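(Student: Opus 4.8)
The plan is to exploit the $C^*$-isomorphism $\Exp:\gA_\A\to\gA(\RR_+)$ supplied by Proposition \ref{pro:local-diffeom-geom-kms-state}, which by its property (2) intertwines the translation automorphisms of $\gA_\A$ with the (rescaled) dilation automorphisms of $\gA(\RR_+)$, namely $\Exp\circ\Ad U(\t_t)=\Ad U(\d_{2\pi t})\circ\Exp$. The whole statement is then an instance of the general principle that pulling a KMS state back along a dynamics-intertwining $C^*$-isomorphism produces a KMS state for the intertwined dynamics; this is exactly the argument already run for the single vacuum state $\w$ in the proof of Theorem \ref{geometricconstruction}, now carried out for an arbitrary dilation-KMS state.

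First I would take a KMS state $\varphi$ on $\gA(\RR_+)$ w.r.t.\ dilations and set $\psi:=\varphi\circ\Exp$. Since $\Exp$ is a unital $*$-homomorphism and $\varphi$ is a state, $\psi$ is automatically a state on $\gA_\A$. To check the KMS condition w.r.t.\ translations I fix $x,y\in\gA_\A$ and put $a:=\Exp(x)$, $b:=\Exp(y)\in\gA(\RR_+)$; the intertwining relation then gives
\[
\psi\big(x\,\Ad U(\t_t)(y)\big)=\varphi\big(a\,\Ad U(\d_{2\pi t})(b)\big),
\]
together with the analogous identity for the opposite product order. Feeding $a,b$ into the KMS condition for $\varphi$ yields an analytic function on the appropriate strip whose boundary values are $\varphi(a\,\Ad U(\d_{s})(b))$ and $\varphi(\Ad U(\d_{s})(b)\,a)$; rescaling the variable by the factor $2\pi$ produces a bounded continuous function, analytic in the interior of the rescaled strip, whose boundary values are precisely $\psi(x\,\Ad U(\t_t)(y))$ and $\psi(\Ad U(\t_t)(y)\,x)$. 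Hence $\psi$ is a KMS state on $\gA_\A$ w.r.t.\ translations.

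For bijectivity I would simply observe that, $\Exp$ being a $C^*$-isomorphism onto $\gA(\RR_+)$ by property (3), the assignment $\varphi\mapsto\varphi\circ\Exp$ admits the two-sided inverse $\psi\mapsto\psi\circ\Exp^{-1}$; running the same transport argument in reverse shows that this inverse carries translation-KMS states back to dilation-KMS states, so the two sets are in canonical one-to-one correspondence. The only point demanding care, rather than constituting a genuine obstacle, is the bookkeeping of the scaling factor $2\pi$ relating the two inverse temperatures: as recalled after the definition of the KMS condition, rescaling the implementing one-parameter group rescales the inverse temperature accordingly, so the correspondence matches dilation-KMS states at inverse temperature $2\pi$ with translation-KMS states at the fixed inverse temperature $\beta=1$, consistently with the geometric state $\geo=\w\circ\Exp$ arising from the Bisognano--Wichmann $2\pi$-KMS property of $\w$.
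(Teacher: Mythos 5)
Your proposal is correct and follows exactly the route the paper intends: the paper's own justification is simply ``by the same arguments used in the proof of Theorem \ref{geometricconstruction}'', i.e.\ transporting the KMS condition along the $C^*$-isomorphism $\Exp$ via the intertwining relation $\Exp\circ\Ad U(\t_t)=\Ad U(\d_{2\pi t})\circ\Exp$, which is precisely what you carry out. Your explicit handling of the $2\pi$ rescaling of the inverse temperature and of bijectivity via $\Exp^{-1}$ fills in the details the paper leaves implicit.
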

By definition, the \emph{geometric} KMS state $\varphi_{\rm geo}$ of $\gA_\A$ is the KMS state corresponding to the vacuum state on $\gA(\RR_+)$ according to the above proposition: $\varphi_{\rm geo}\equiv \omega \circ \Exp$.

\section{The thermal completion and the role of relative
commutants}\label{thermalcompletion}

Let $\varphi$ be a locally normal state on the quasi-local algebra
$\gA_\A$ associated to a conformal net $(\A,U)$ and
$\pi_\varphi$ be the GNS representation with respect to $\varphi$.
For an $I\subset \RR$ we shall set
\begin{equation}\label{geo}
\A_\varphi(I)\equiv \{\mathop{\cup}_{I\supset
\tilde{I}\Subset\RR}{\pi_\varphi}(\A(\tilde{I}))\}''.
\end{equation}
Note that, when $I$ is a finite length (open) interval, $\A_\varphi(I)$ is
simply the image of $\A(I)$ under the representation ${\pi_\varphi}$; however,
$\A_\varphi$ is defined even for infinite length intervals.

Recall that representatives of local diffeomorphisms are contained in $\A$
(see Section \ref{diffeomorphismcovariance}).
Similarly as above, to simplify notations, for a diffeomorphism $\eta:\RR\to\RR$
localized in some finite length interval $I\Subset \RR$ we shall set
$U_\varphi(\eta):={\pi_\varphi}(U(\eta))$. The following basic properties can be easily
checked.

\begin{itemize}

\item
$\A_\varphi$ is {\it local} and {\it isotonous}: $[\A_\varphi(I_1),\A_\varphi(I_2)]=0$
whenever $I_1 \cap I_2 = \emptyset$ and $\A_\varphi(I_1)\subset \A_\varphi(I_2)$
whenever $I_1\subset I_2$.

\item
$U_\varphi(\eta)\A_\varphi(K)U_\varphi(\eta)^* = \A_\varphi(\eta(K))$
for every diffeomorphism $\eta$ localized in some finite length
interval and for every $K\subset \RR$.

\item
If $\A$ is {\it strongly additive}, then so is $\A_\varphi$: we have that
$\A_\varphi(r,t)\vee \A_\varphi(t,s) = \A_\varphi(r,s)$ for all $r<t<s$,
$r,t,s\in\RR\cup\{\pm \infty\}$.

\item
Assuming that $\A$ is strongly additive, if $\A_\varphi(\RR)={\pi_\varphi}(\gA_\A)''$
is a {\it factor}, then so are the algebras
$\A_\varphi(t+\RR_+), \A_\varphi(t+\RR_-)\; (t\in\RR)$, too (notice that $\A_\varphi(\RR_+) \cap \A_\varphi(\RR_+)^{\prime} \subset \A_\varphi(\RR_-)^{\prime} \cap \A_\varphi(\RR_+)^{\prime} = \left( \A_\varphi(\RR_-) \vee \A_\varphi(\RR_+) \right)^{\prime} = \A_\varphi(\RR)^{\prime}$).
\end{itemize}

Suppose $\varphi$ is a primary KMS state on $\gA_\A$ w.r.t.\! the
translations $t\mapsto \Ad U(\tau_t)$ and ${\pi_\varphi}$ is the GNS representation
associated to $\varphi$ with GNS vector $\Phi$. Then one can easily find
that $(\Phi,\A_\varphi(\RR_+)\subset \A_\varphi(\RR))$ is a standard half-sided
modular inclusion \cite{Wiesbrock,AZ} and, by the last listed property, it is actually an
inclusion of factors. In this situation, there
exists a unique (possibly not ``fully'' diffeomorphism covariant) M\"obius covariant, strongly additive net $(\hat{\A}_\varphi,\hat{U}_\varphi)$ such that
\begin{itemize}
\item $\hat{U}_\varphi(g)\Phi = \Phi$ for every M\"obius transformation $g$,
\item $\hat{\A}_\varphi (\RR_+) = \A_\varphi(\RR)$ and
$\hat{\A}_\varphi(1+\RR_+)=\A_\varphi(\RR_+)$.
\end{itemize}
The net $(\hat{\A}_\varphi,\hat{U}_\varphi)$ is called the {\bf thermal
completion} of $\A$ w.r.t.\! to the primary KMS state $\varphi$ and it was
previously studied in \cite{Longo, SW}\footnote{The notion of thermal completion
was proposed in [24] based on heuristic considerations.}.
One has that
\begin{equation}\label{eq:thermal-completion}
\hat{\A}_\varphi(e^{2\pi t},e^{2\pi s}) = \A^d_\varphi(t,s)
\end{equation}
where
\begin{equation*}
\A^d_\varphi(t,s) = \A_\varphi(t,\infty)\cap \A_\varphi(s,\infty)'
\;\;\;\;\; (t<s, \; t,s\in\RR\cup\{\pm \infty\}).
\end{equation*}
Note that $\A_\varphi(t,s)\subset \A^d_\varphi(t,s)$ and, by
Remark \ref{essential-duality-geometric}, if $\A$ is strongly
additive and $\varphi$ is the geometric KMS state, this inclusion is
actually an equality \footnote{We warn the reader that, in \cite{Longo},
the implication $(i)\Rightarrow (ii)$ in Prop.\! 3.5 and Cor.\! 3.6 are incorrect,
yet they have not been used in the sequel of that paper.}:
\begin{equation}\label{eq:dual-net-geometric-KMS}
\A^d_{\rm geo}(t,s) = \A_{\rm geo}(t,s).
\end{equation}

\begin{theorem}
Let $\A$ be a conformal net satisfying the split property,
$\varphi$ a primary KMS state on $\gA_\A$ with GNS representation
${\pi_\varphi}$, and assume that
\begin{equation}\label{a=ad}
\A_\varphi(t,s)= \A^d_\varphi(t,s)
\end{equation}
for some $t<s, \, t,s\in \RR$. Then
$\A$ is strongly additive and $\varphi$ is of the form
$\varphi=\geo \circ \alpha$ where $\geo$
is the geometric KMS state and
$\alpha\in {\rm Aut}(\gA_\A)$ such that
\begin{itemize}
\item
$\alpha(\A(I))=\A(I)$ for all $I\Subset\RR$
\item
$\alpha\circ \Ad U(\tau_t) = \Ad U(\tau_t)\circ \alpha$ for all $t\in\RR$.
\end{itemize}
In particular, in this case the thermal completion
and the original net in the vacuum representation, as M\"obius covariant nets,
are unitarily equivalent.
\end{theorem}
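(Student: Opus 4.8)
The plan is to extract everything from the single-interval hypothesis \eqref{a=ad} in three stages: first propagate the equality $\A_\varphi=\A^d_\varphi$ to every bounded interval, then read off strong additivity, and finally identify the thermal completion with $\A$ and peel off the automorphism $\alpha$. First I would propagate the hypothesis using local diffeomorphism covariance. The local net $\A_\varphi$ and the dual net $\A^d_\varphi(a,b)=\A_\varphi(a,\infty)\cap\A_\varphi(b,\infty)'$ are both covariant under the representatives $U_\varphi(\eta)=\pi_\varphi(U(\eta))$ of diffeomorphisms $\eta$ localized in finite intervals: applying the basic property $\Ad U_\varphi(\eta)(\A_\varphi(K))=\A_\varphi(\eta K)$ with $K=(c,\infty)$ gives $\Ad U_\varphi(\eta)(\A_\varphi(c,\infty))=\A_\varphi(\eta c,\infty)$, whence $\Ad U_\varphi(\eta)(\A^d_\varphi(a,b))=\A^d_\varphi(\eta a,\eta b)$. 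Since localized diffeomorphisms act transitively on ordered pairs of points of $\RR$, choosing $\eta$ with $\eta(t)=a$, $\eta(s)=b$ transports the hypothesis $\A_\varphi(t,s)=\A^d_\varphi(t,s)$ to $\A_\varphi(a,b)=\A^d_\varphi(a,b)$ for every bounded interval $(a,b)$.

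Next I would deduce strong additivity. For $a<b<c$ the thermal completion is strongly additive, so by \eqref{eq:thermal-completion} the dual net obeys $\A^d_\varphi(a,c)=\A^d_\varphi(a,b)\vee\A^d_\varphi(b,c)$. Inserting the equality just established on all three intervals yields $\A_\varphi(a,c)=\A^d_\varphi(a,c)=\A_\varphi(a,b)\vee\A_\varphi(b,c)$, and applying the local isomorphism $\pi_\varphi^{-1}$ gives $\A(a,c)=\A(a,b)\vee\A(b,c)$, i.e.\ $\A$ is strongly additive.

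The heart of the matter is the last stage. With the equality holding everywhere, \eqref{eq:thermal-completion} becomes $\hat{\A}_\varphi(e^{2\pi t},e^{2\pi s})=\pi_\varphi(\A(t,s))$, so the thermal completion is generated by the $\pi_\varphi$-images of the original local algebras and inherits through $\pi_\varphi$ the diffeomorphism representatives $U_\varphi(\eta)$; checking that these are compatible with its intrinsic half-sided modular Möbius group $\hat{U}_\varphi$ (the uniqueness of the diffeomorphism symmetry \cite{CW} being the relevant rigidity) exhibits $\hat{\A}_\varphi$ as a diffeomorphism covariant net of the same central charge as $\A$. I would then produce a unitary $W\colon\H_\Omega\to\H_\varphi$ with $W\Omega=\Phi$, intertwining the Möbius representations $U$ and $\hat{U}_\varphi$ and satisfying $\Ad W(\A(I))=\hat{\A}_\varphi(I)$; this is exactly the final assertion of the theorem (the thermal completion and $\A$ in the vacuum representation are unitarily equivalent). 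Once $W$ is in hand the automorphism is forced upon us: since $W^{*}\pi_\varphi(\A(t,s))W=\A(e^{2\pi t},e^{2\pi s})=\Exp(\A(t,s))$, the composite $\alpha:=\Exp^{-1}\circ\Ad W^{*}\circ\pi_\varphi$ is a well-defined automorphism of $\gA_\A$ with $\alpha(\A(I))=\A(I)$. From $W\Omega=\Phi$ one computes $\geo(\alpha(x))=\langle\Omega,W^{*}\pi_\varphi(x)W\Omega\rangle=\langle\Phi,\pi_\varphi(x)\Phi\rangle=\varphi(x)$, so $\varphi=\geo\circ\alpha$; and combining $\pi_\varphi\circ\Ad U(\tau_t)=\Ad V_\varphi(t)\circ\pi_\varphi$, the relation $\Exp\circ\Ad U(\tau_t)=\Ad U(\delta_{2\pi t})\circ\Exp$ of Proposition \ref{pro:local-diffeom-geom-kms-state}, and $WU(\delta_{2\pi t})W^{*}=\hat{U}_\varphi(\delta_{2\pi t})=V_\varphi(t)$ (the dilations of $\hat{\A}_\varphi$ are the modular group of $\A_\varphi(\RR)$) shows that $\alpha$ commutes with the translations.

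The hard part is the construction of $W$, that is, the unitary equivalence of the thermal completion with $\A$. The naive identification $\pi_\varphi\circ\Exp^{-1}$ matches the local algebras, but it runs through the exponential, which does \emph{not} carry dilations to Möbius transformations, so it cannot implement a Möbius equivalence; and trying to fix $W$ by demanding that it intertwine the two vacuum states is circular, because that demand is equivalent to $\varphi=\geo$ and would annihilate the very automorphism $\alpha$ we must allow — the residual freedom that $\alpha$ records is precisely the chemical-potential label discussed in Section \ref{introduction}. The genuine input must therefore be the diffeomorphism covariance: one has to show that the $\Diff$-structure transported to $\hat{\A}_\varphi$ by $\pi_\varphi$ is compatible with the half-sided modular Möbius group $\hat{U}_\varphi$ and pins $\hat{\A}_\varphi$ down as a copy of $\A$ carrying the Möbius-invariant vacuum $\Phi$, the discrepancy between this identification and $\Exp$ being exactly $\alpha$.
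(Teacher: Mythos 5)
Your first two stages coincide with the paper's own argument: the hypothesis \eqref{a=ad} is propagated to all pairs $t<s$ by local diffeomorphism covariance, and strong additivity of $\A$ is then read off from the strong additivity of the thermal completion via the local isomorphism $\pi_{\varphi,I}$, exactly as you describe. The formal endgame is also the right one: once a unitary $W$ intertwining the two thermal completions, their M\"obius representations and their vacua is in hand, the definition $\alpha_I=\pi_{{\rm geo},I}^{-1}\circ\Ad(W^*)\circ\pi_{\varphi,I}$, the identity $\varphi=\geo\circ\alpha$ and the commutation with translations all come out as you compute.

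The gap is at the step you yourself flag as ``the hard part'': you do not construct $W$, and the rigidity you point to is the wrong one. The paper's route is as follows. One first observes that the split property of $\A$ passes to the thermal completion --- this is where the split hypothesis of the theorem enters, and your proposal never uses it, which is a warning sign --- and that $\hat{\A}_{\geo}$ is just $\A$ in the vacuum representation (using \eqref{eq:dual-net-geometric-KMS} and strong additivity from the previous step). Since $\pi_{\varphi,I}$ and $\pi_{{\rm geo},I}$ are both unitarily implementable on a fixed finite interval $I$, a \emph{single} unitary $V$ carries $\hat{\A}_\varphi(e^{2\pi t},e^{2\pi s})$ onto $\hat{\A}_{\geo}(e^{2\pi t},e^{2\pi s})$ for all $t,s\in I$ simultaneously (here the hypothesis $\A_\varphi=\A^d_\varphi$ and its geometric counterpart are used to replace dual algebras by the algebras themselves). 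The passage from this local unitary equivalence to a global equivalence of M\"obius covariant nets --- with the unitary $W$ also intertwining the M\"obius representations, hence the vacua, which is precisely what you need for $W\Omega=\Phi$ and for $\alpha$ to commute with translations --- is Weiner's algebraic version of Haag's theorem \cite[Thm.\ 5.1]{Weiner}, whose hypotheses are exactly strong additivity and the split property of the two nets. The rigidity you invoke instead, the uniqueness of diffeomorphism symmetry of \cite{CW}, answers a different question (uniqueness of the $\Diff(S^1)$-representation for one \emph{fixed} net) and cannot identify two a priori distinct nets; in particular, equality of central charges does not force unitary equivalence, since distinct extensions of one Virasoro net share the same $c$. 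Without Weiner's theorem or an equivalent substitute, the existence of $W$ --- and with it the decomposition $\varphi=\geo\circ\alpha$ --- remains unproved.
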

\begin{proof}
By (local) diffeomorphism covariance, if the assumption
regarding the
relative commutant holds for a particular $t<s,\, t,s\in\RR$, then it
holds for all such pairs. So fix $t_1<t_2<t_3,\, t_1,t_2,t_3\in\RR$;
then by the strong
additivity of the thermal completion we have that
\begin{equation*}
\A_\varphi(t_1,t_2)\vee\A_\varphi(t_2,t_3) =
\hat{\A}_\varphi(e^{2\pi t_1},e^{2\pi t_2})\vee\hat{\A}_\varphi(e^{2\pi t_2},e^{2\pi t_3}) =
\hat{\A}_\varphi(e^{2\pi t_1},e^{2\pi t_3}) = \A_\varphi(t_1,t_3).
\end{equation*}
Since $\pi_{\varphi,I} \equiv {\pi_\varphi}|_{\A(I)}$ is a unitarily implementable
isomorphism for any finite length interval $I\Subset \RR$,
the above equation shows that $\A$ is strongly additive.
A similar argument shows that the split property of $\A$
implies the split property of the thermal completion.

Consider the GNS representations $\pi_\varphi$ and $\pi_{\rm geo}$
and the thermal completions $\hat{\A}_\varphi$ and $\hat{\A}_{\geo}$
associated to $\varphi$ and $\geo$, respectively. By \eqref{eq:thermal-completion}, \eqref{eq:dual-net-geometric-KMS} and point (1) of Prop. \ref{pro:local-diffeom-geom-kms-state},
the thermal completion given by the geometric KMS state is equivalent to
the {\it (strongly additive) dual} of the original net in the vacuum representation,
so, in our case, simply to the original net (which is already strongly additive):
\begin{equation*}
\hat{\A}_{\rm geo}(e^{2\pi t},e^{2\pi s}) = \A^d_{\rm geo}(t,s)= \A_{\rm geo}(t,s)= \A(e^{2\pi t},e^{2\pi s}). 
\end{equation*}

Fix a nonempty, finite length open interval $I\Subset \RR$. Since
both $\pi_{\varphi,I}$ and $\pi_{{\rm geo},I}$ are unitarily implementable,
there exists a unitary $V$ such that
\begin{equation*}
{\rm Ad}(V)|_{\A_{\varphi}(I)}=\pi_{{\rm geo},I}\circ\pi_{\varphi,I}^{-1}
\end{equation*}
and one has that for all $t,s\in I$
\begin{multline*}
V\hat{\A}_\varphi(e^{2\pi t},e^{2\pi s})V^* = V\A^d_\varphi(t,s)V^* =V\A_\varphi(t,s)V^* = \\
= \pi_{{\geo},I}(\A(t,s))= \A_{\geo}(t,s) = \A^d_{\rm geo}(t,s) = \hat{\A}_{\geo}(e^{2\pi t},e^{2\pi s}).
\end{multline*}
Thus, by \cite[Thm.\! 5.1]{Weiner}, it follows that two
thermal completions are equivalent:
there exists a unitary operator $W$
such that $W\hat{\A}_\varphi(a,b)W^* = \hat{\A}_{\geo}(a,b)$
for {\it all} $a,b\in\RR$ and
$W\hat{U}_\varphi(g)W^*= \hat{U}_{\geo}(g)$ for all
M\"obius transformations $g$. (Note that this latter fact
implies that ${\rm Ad}(W)$ also connects the respective
vacuum states of the two thermal completions.)
Then, using that both
$\A_{\geo}^d(I) =\A_{\geo}(I)$
and $\A_\varphi^d(I) =\A_\varphi(I)$,
one sees that the automorphism
of $\A(I)$
\begin{equation*}
\alpha_I:= \pi_{{\rm geo},I}^{-1}\circ {\rm Ad}(W^*)\circ \pi_{\varphi,I}
\end{equation*}
is well-defined (i.e.\! $W^*\A_\varphi(I)W = \A_{\geo}(I)$)
for every $I\Subset \RR$. Moreover, it is also clear that
$\alpha_I = \alpha_K|_{\A(I)}$ whenever $I\subset K$, hence
that it defines an automorphism $\alpha$ of $\gA_\A$ which preserves
every local algebra $\A(I),\, I\Subset \RR$.

The fact that ${\rm Ad}(W)$ connects the relevant representations of
the M\"obius group shows that $\alpha$ commutes with the one-parameter
group of translations $t\mapsto \Ad U(\tau_t)$. Moreover, since ${\rm Ad}(W)$
also connects the vacuum states of the two thermal completions, one can
also easily verify that $\geo\circ\alpha = \varphi$.
\end{proof}

As will be shown by examples in the second part of this paper,
without the assumption of the previous
theorem the inclusion $\A_\varphi(t,s)\subset \A_\varphi^d(t,s) \equiv
\A_\varphi(t,\infty)\cap \A_\varphi(s,\infty)'$ is not necessarily an equality. We
shall now investigate the completely rational case.

\begin{lemma}\label{le:finiteindextc}
Let ${\pi_\varphi}$ be the GNS representation of a primary KMS state $\varphi$ on $\gA_\A$.
If $\A$ is completely rational, then
$\A_\varphi(t,s)\subset \A_\varphi^d(t,s)$
is a finite index irreducible
inclusion.
\end{lemma}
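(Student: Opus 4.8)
The plan is to treat the two assertions separately, disposing of irreducibility first (which does not need finiteness of the $\mu$-index) and then concentrating on finite index, which is the substantial point. Throughout I would use that complete rationality forces strong additivity of $\A$, hence of the represented net $\A_\varphi$ (third bullet preceding the lemma), and that primarity of $\varphi$ makes $\A_\varphi(\RR)$, and therefore each $\A_\varphi(t+\RR_+)$, a factor (fourth bullet).

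For irreducibility, take $x\in\A_\varphi(t,s)'\cap\A^d_\varphi(t,s)$. By the definition of $\A^d_\varphi(t,s)=\A_\varphi(t,\infty)\cap\A_\varphi(s,\infty)'$ one has $x\in\A_\varphi(t,\infty)$, while $x\in\A_\varphi(s,\infty)'$ means $x$ commutes with $\A_\varphi(s,\infty)$; by hypothesis $x$ also commutes with $\A_\varphi(t,s)$. Strong additivity gives $\A_\varphi(t,\infty)=\A_\varphi(t,s)\vee\A_\varphi(s,\infty)$, so $x\in\A_\varphi(t,\infty)'$. Since $\A_\varphi(t,\infty)$ is a factor, $x\in\A_\varphi(t,\infty)\cap\A_\varphi(t,\infty)'=\CC\1$, which is the asserted irreducibility.

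For finiteness of the index — the heart of the matter — I would identify $\A_\varphi(t,s)\subset\A^d_\varphi(t,s)$ with a unitary copy of the canonical two-interval inclusion of the vacuum net $\A$, whose index is the $\mu$-index. The guiding principle is that $\A^d_\varphi(t,s)=\A_\varphi(t,\infty)\cap\A_\varphi(s,\infty)'$ measures exactly the failure of Haag duality of $\A_\varphi$ across the single point at infinity: in the vacuum representation $\A(s,\infty)'=\A(-\infty,s)$ and strong additivity collapse $\A^d(t,s)$ to $\A(t,s)$ (Remark \ref{essential-duality-geometric} and \eqref{eq:dual-net-geometric-KMS}), whereas in the KMS representation $\pi_\varphi$ duality across $\infty$ need not hold. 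I would use the split property of $\A_\varphi$, inherited from $\A$, to interpose type I factors and localise the discrepancy to a neighbourhood of $\infty$, thereby realising $[\A^d_\varphi(t,s):\A_\varphi(t,s)]$ as the Jones index of a genuine two-interval inclusion $\A(K_1)\vee\A(K_2)\subset(\A(K_1')\cap\A(K_2'))'$ of the original net, where $K_1,K_2$ are the two arcs abutting the point at infinity. By the Kawahigashi--Longo--M\"uger analysis \cite{KLM} this index is the $\mu$-index $\sum_\rho d_\rho^2$, finite because a completely rational net has finitely many sectors, each of finite statistical dimension.

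I expect the main obstacle to be precisely this identification: one must show that $\A_\varphi(t,\infty)\cap\A_\varphi(s,\infty)'$ is captured by the two-interval dual algebra and is not strictly larger, which requires controlling the algebras of the unbounded intervals $\A_\varphi(t,\infty)$ and $\A_\varphi(s,\infty)$ — these are not merely images of local algebras — and relating them, through the half-sided modular structure underlying the thermal completion together with the split property, to the vacuum two-interval configuration. An alternative route I would keep in reserve bypasses the explicit localisation: the thermal completion $\hat\A_\varphi$ is strongly additive and, by the inheritance argument in the preceding theorem, has the split property, and via \eqref{eq:thermal-completion} the inclusion in question is $\A_\varphi(t,s)\subset\hat\A_\varphi(e^{2\pi t},e^{2\pi s})$; exhibiting $\A_\varphi$ as a subnet of $\hat\A_\varphi$ whose index is again governed by $\mu_\A$ would yield the conclusion directly.
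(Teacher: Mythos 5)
Your irreducibility argument is correct and is essentially identical to the paper's: strong additivity gives $\A_\varphi(t,\infty)=\A_\varphi(t,s)\vee\A_\varphi(s,\infty)$, so the relative commutant sits in the centre of the factor $\A_\varphi(t,\infty)$. No issues there.

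The finite-index part, however, has a genuine gap, and the specific identification you propose would in fact prove something false. You aim to realise $[\A^d_\varphi(t,s):\A_\varphi(t,s)]$ \emph{as} the Jones index of a two-interval inclusion of the vacuum net, i.e.\ as the $\mu$-index itself. But for the geometric KMS state one has $\A^d_{\rm geo}(t,s)=\A_{\rm geo}(t,s)$ (eq.\ \eqref{eq:dual-net-geometric-KMS}), so the inclusion is trivial and its index is $1$, while $\mu_\A$ can be strictly larger. The correct statement is only an upper bound $[\A^d_\varphi(t,s):\A_\varphi(t,s)]\le\mu_\A$, so no unitary identification with the vacuum two-interval configuration can exist in general. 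The obstacle you flag --- controlling the unbounded-interval algebras $\A_\varphi(t,\infty)$, $\A_\varphi(s,\infty)$, which are not images of local algebras under $\pi_\varphi$ --- is precisely where your plan cannot be completed as stated: outside bounded intervals there is no reason for $\pi_\varphi$ to be spatially equivalent to the vacuum representation (indeed the whole point of Theorem \ref{thm:thermal-completion-extension} is that it generally is not).

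The paper's route avoids this by never leaving the bounded-interval regime until the very last step. It introduces the doubly indexed inclusions $\N_{n,m}=\A_\varphi(t,s)\vee\A_\varphi(s+n,s+m)\subset\A_\varphi(t,s+m)\cap\A_\varphi(s,s+n)'=\M_{n,m}$; each of these involves only bounded intervals, where $\pi_\varphi$ \emph{is} a unitarily implementable isomorphism, so each has index exactly $\mu_\A$ and carries an expectation satisfying the Pimsner--Popa inequality with constant $1/\mu_\A$. One then passes to the monotone limits $m\to\infty$ and $n\to\infty$; the Pimsner--Popa bound survives (this is where the normality results of Appendix \ref{pimsnerpopa} are needed), yielding an inclusion $\N\subset\M$ of index at most $\mu_\A$. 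The remaining work --- and this is the step your sketch of ``localising the discrepancy near $\infty$ via the split property'' gestures at but does not perform --- is to show $\N=\A_\varphi(t,s)$ and $\M=\A^d_\varphi(t,s)$, i.e.\ to interchange ``$\vee$'' with ``$\cap_n$''. This is done by a concrete tensor-product splitting $W\A_\varphi(t,s)W^*=\K\otimes\CC\1$, $W\A_\varphi(s+n,\infty)W^*=\CC\1\otimes\R_n$ with $\cap_n\R_n=\CC\1$ (the latter because the intersection lies in the centre of the factor $\A_\varphi(\RR)$). To repair your proof you would need to carry out this limiting argument rather than the direct unitary identification.
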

\begin{proof}
We noted at the beginning of this section some basic properties of $\A_\varphi$. In particular, the strong
additivity of $\A$ implies the strong additivity of $\A_\varphi$, hence $\A_\varphi(t,\infty) = \A_\varphi(t,s)\vee \A_\varphi(s,\infty)$ and the relative commutant of the inclusion in question is simply the
center of $\A_\varphi(t,\infty)$. On the other hand, when
our KMS state is primary, the algebra $\A_\varphi(t,\infty)$ is a factor.
So our inclusion is indeed irreducible:
\begin{multline*}
\A_\varphi^d(t,s) \cap \A_\varphi(t,s)'  \! = \!
(\A_\varphi(t,\infty)\cap \A_\varphi(s,\infty)')\cap \A_\varphi(t,s)' = \\
 = \A_\varphi(t,\infty)\cap (\A_\varphi(s,\infty)\vee \A_\varphi(t,s))' = \A_\varphi(t,\infty)\cap \A(t,\infty)' = \CC\1.
\end{multline*}
Let now $n,m\in\NN$ with $0<n<m$. Since locally ${\pi_\varphi}$ is a unitarily
implementable isomorphism, the index of the inclusion
\begin{equation}
\label{eq:basic_inclusion}
\N_{n,m}:=\A_\varphi(t,s)\vee \A_\varphi(s+n,s+m) \subset \A_\varphi(t,s+m)\cap
\A'_\varphi(s,s+n)=:\M_{n,m}
\end{equation}
is simply the so called $\mu$-index $\mu_\A$ of the completely rational
net $\A$. Now it is clear that, as $m$ increases, both sides of
(\ref{eq:basic_inclusion}) increase, whereas, as $n$ increases, both sides
of (\ref{eq:basic_inclusion}) decrease. So let us set
\begin{eqnarray*}
\noindent
\N_n:=\{\cup_{m>n}\N_{n,m}\}'', & & \M_n:=\{\cup_{m>n}\M_{n,m}\}'',
\;\;{\textrm{and in turn}}\;\; \\
\N:= \cap_n \N_n, & & \M:= \cap_n \M_n.
\end{eqnarray*}
Fixing the value of $n$ and considering the sequence of
inclusions $m\mapsto (\N_{n,m}\subset \M_{n,m})$, by \cite[Prop.\! 3]{KLM}
we have that there is an expectation $E_n:\M_n\to \N_n$ satisfying the
Pimsner-Popa inequality with constant $1/\mu$. Note that even
without {\it a priori} assuming the normality of $E_n$, this implies
that the index of $\N_n\subset \M_n$ is less or equal to $\mu$; see
\ref{pimsnerpopa}. Then in turn, considering the sequence $n\mapsto
(\N_n\subset \M_n)$, we find that the index of the inclusion $\N\subset
\M$ is also smaller or equal to $\mu$.

Now it is rather straightforward that $\N_n = \A_\varphi(t,s)\vee
\A_\varphi(s+n,\infty)$. Moreover, we have
$\cap_n \A_\varphi(s+n,\infty) =\CC
\1$
since the intersection in question is clearly in the center of the factor
$\A_\varphi(\RR)={\pi_\varphi}(\gA_\A)''$. It is not obvious whether the order
of the operations ``$\vee$'' and ``$\cap$'' can be inverted:
\begin{equation*}
\N = \cap_{n} \big(\A_\varphi(t,s)\vee \A_\varphi(s+n,\infty)\big)
\stackrel{?}{=} \A_\varphi(t,s)\vee \big(\cap_n
\A_\varphi(s+n,\infty)\big)= \A_\varphi(t,s).
\end{equation*}
We shall now show that using the split property the above equation can be
justified. Indeed, by the split property, there exists a pair of
Hilbert spaces $\H_1$ and $\H_2$ and a unitary operator
$W$ such that
\begin{equation*}
W\A_\varphi(t,s) W^* \subset \B(\H_1)\otimes \CC\1_{\H_2},
\;\;\;{\rm and}\;\;
W\A_\varphi(t-1,s+1)' W^* \subset \CC\1_{\H_1}\otimes \B(\H_2).
\end{equation*}
In particular, $W\A_\varphi(t,s) W^*  = \K\otimes \CC\mathbbm 1_{\H_2}$ for some
$\K\subset \B(\H_1)$. Now, if $n\geq 1$, then by locality the
algebras $\A_\varphi(s+n,\infty)$ and
$\A_\varphi(t-1,s+1)$ commute and hence
\begin{equation*}
W\A_\varphi(s+n,\infty) W^* \subset \CC\1_{\H_1}\otimes \B(\H_2)
\end{equation*}
implying that $W\A_\varphi(s+n,\infty) W^* = 1_{\H_1}\otimes \R_n$ for some
$\R_n\subset \B(\H_2)$. Since it holds that $\cap_n \A_\varphi(s+n,\infty)=\CC\1$,
we have that $\cap_n \R_n = \CC\1_{\H_2}$ and
\begin{equation*}
W(\cap_n \N_n) W^*=
\cap_n (W\N_nW^*) = \cap_n (\K\otimes \R_n)
= \K\otimes (\cap_n \R_n) = \K\otimes \1_{\H_2}
= W\A_\varphi(t,s)W^*
\end{equation*}
which justifies that $\N=\cap_n \N_n = \A_\varphi(t,s)$.
By a similar argument,
again relying on the split property,
we can also show that
$\M_n = \A_\varphi(t,\infty)\cap \A_\varphi(s,s+n)'$
and hence that
\begin{equation*}
\M = \cap_n \M_n = \A_\varphi(t,\infty)\cap \A_\varphi(s,\infty)' = \A_\varphi^d(t,s)
\end{equation*}
which concludes our proof.
\end{proof}

\begin{theorem}\label{thm:thermal-completion-extension}
Let ${\pi_\varphi}$ be the GNS representation of a primary KMS state $\varphi$ on $\gA_\A$.
If $\A$ is completely rational, then the thermal
completion $(\hat{\A}_\varphi,\hat{U}_\varphi)$,
as a M\"obius covariant net, is conformal and unitarily equivalent to an irreducible local
extension of the original net $(\A,U)$. Moreover, this extension is
trivial (i.e.\! coincides with the original net in the vacuum representation) if and only if
$\A_\varphi^d(t,s)= \A_\varphi(t,s)$
for some (and hence for all) $t<s, \, t,s\in \RR$.
\end{theorem}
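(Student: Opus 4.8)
The plan is to compare $\hat{\A}_\varphi$ with the geometric thermal completion by means of the finite-index inclusion of Lemma \ref{le:finiteindextc}, and then to globalize the resulting local picture into a genuine net extension. Since $\A$ is completely rational it is strongly additive (by \cite{LX}), so the thermal completion $(\hat{\A}_\varphi,\hat{U}_\varphi)$ is Möbius covariant and strongly additive, with $\hat{\A}_\varphi(e^{2\pi t},e^{2\pi s})=\A_\varphi^d(t,s)$. The natural reference object is the geometric thermal completion: by \eqref{eq:thermal-completion}, \eqref{eq:dual-net-geometric-KMS} and Prop. \ref{pro:local-diffeom-geom-kms-state}(1) one has $\hat{\A}_{\rm geo}(e^{2\pi t},e^{2\pi s})=\A_{\rm geo}^d(t,s)=\A_{\rm geo}(t,s)=\A(e^{2\pi t},e^{2\pi s})$, so that, as a Möbius covariant net, $\hat{\A}_{\rm geo}$ is nothing but the original $(\A,U)$ in the vacuum representation.

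I would then set up the local comparison. Fix a bounded interval $I\Subset\RR$; since $\pi_{\varphi,I}$ and $\pi_{{\rm geo},I}$ are unitarily implementable isomorphisms, choose a unitary $V$ with $\Ad(V)|_{\A_\varphi(I)}=\pi_{{\rm geo},I}\circ\pi_{\varphi,I}^{-1}$, so that $V\A_\varphi(t,s)V^*=\A_{\rm geo}(t,s)=\hat{\A}_{\rm geo}(e^{2\pi t},e^{2\pi s})$ for $t,s\in I$. Combining this with $\A_\varphi(t,s)\subset\A_\varphi^d(t,s)=\hat{\A}_\varphi(e^{2\pi t},e^{2\pi s})$ and Lemma \ref{le:finiteindextc} yields, for $t,s\in I$, a finite-index irreducible inclusion
\begin{equation*}
\hat{\A}_{\rm geo}(e^{2\pi t},e^{2\pi s})\subset V\,\hat{\A}_\varphi(e^{2\pi t},e^{2\pi s})\,V^*,
\end{equation*}
i.e.\! locally the original net $\A$ sits as a finite-index subnet of the $\varphi$-thermal completion.

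The main obstacle is to promote this purely local, finite-index inclusion to a global, Möbius covariant extension in the vacuum representation. Here I would invoke the identification technology already used in the previous theorem, namely \cite[Thm.\! 5.1]{Weiner} in the form adapted to subnets: as both $\hat{\A}_{\rm geo}$ and $\hat{\A}_\varphi$ are Möbius covariant, strongly additive, and built as standard (half-sided modular) inclusions, one extracts a single unitary $W$ intertwining their Möbius representations and carrying $\hat{\A}_{\rm geo}$ onto a finite-index Möbius covariant subnet of $\hat{\A}_\varphi$ on all intervals. Equivalently, $\hat{\A}_\varphi$ is unitarily equivalent to a local extension $\B\supset\A$ of $(\A,U)$, which is irreducible and of finite index by Lemma \ref{le:finiteindextc}. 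Conformality is then inherited: $\B$ is a finite-index extension of the conformal, completely rational net $\A$, hence completely rational by \cite{Longo03}, and the Virasoro subnet $\Vir\subset\A\subset\B$ supplies the projective $\Diff(S^1)$-representation so that the covariance conditions for $\B$ follow from those of $\A$; thus $\hat{\A}_\varphi$ is conformal.

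Finally, the triviality criterion follows from a direct index count. The index of the extension $\A\subset\B$ equals the local index $[\A_\varphi^d(t,s):\A_\varphi(t,s)]$ of Lemma \ref{le:finiteindextc}. If the extension is trivial ($\B=\A$) this index is $1$, which forces $\A_\varphi^d(t,s)=\A_\varphi(t,s)$; conversely, if $\A_\varphi^d(t,s)=\A_\varphi(t,s)$ for some $t<s$, then by the previous theorem the thermal completion and $\A$ in the vacuum representation are unitarily equivalent, so the extension is trivial. That ``for some'' is equivalent to ``for all'' $t<s$ is, as there, the local diffeomorphism covariance of $\A_\varphi$ and $\A_\varphi^d$.
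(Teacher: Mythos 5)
Your overall strategy is the right one, and the local comparison you set up (the unitaries $V$ identifying $\A_\varphi(t,s)$ with $\A_{\rm geo}(t,s)$, hence exhibiting $\A$ interval-by-interval as a finite-index irreducible subalgebra of $\hat{\A}_\varphi(e^{2\pi t},e^{2\pi s})$) matches the paper's setup. But the step you yourself flag as ``the main obstacle'' is exactly where the proof lives, and your resolution of it is an assertion, not an argument. \cite[Thm.\! 5.1]{Weiner} compares two M\"obius covariant nets that a \emph{single} unitary identifies on all subintervals of one fixed interval; there is no ``form adapted to subnets'' that converts a family of interval-dependent finite-index inclusions into a global, M\"obius covariant subnet. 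Two things must actually be proved: (i) that the a priori interval-dependent identifications can be replaced by one unitary, and (ii) that the copy of $\A$ so obtained sits \emph{covariantly} inside $\hat{\A}_\varphi$, i.e.\! that the interval $(e^{2\pi t},e^{2\pi r})$ of $\hat{\A}_\varphi$, which is defined through the modular data of the GNS vector $\Phi$, matches the interval $(e^{2\pi t},e^{2\pi r})$ of the would-be extension of $\A$. Neither is automatic, and neither appears in your proposal.

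The paper's route through this is modular-theoretic: on the single algebra $\A^d_\varphi(t,s)$ it introduces the faithful normal state $\tilde{\varphi}_{\rm geo}\circ E$, with $E$ the canonical expectation of Lemma \ref{le:finiteindextc}, and shows --- using the identity $\A^d_\varphi(t,s)\cap \A_\varphi(r,s)'=\A^d_\varphi(t,r)$, itself a consequence of strong additivity --- that its modular group moves the dual algebras $\A^d_\varphi(t,r)$ by the \emph{same} geometric flow $f_t$ as the modular group of $\Phi$ does. After checking standardness of the half-sided modular inclusion $\A^d_\varphi(t,r_0)\subset\A^d_\varphi(t,s)$ for both states (the cyclicity of the GNS vector of $\tilde{\varphi}_{\rm geo}\circ E$ requires proving $\A^d_\varphi(t,r_0)\vee\A_\varphi(r_0,s)=\A^d_\varphi(t,s)$, which is again not free), one obtains two M\"obius covariant nets carried by the same family of local algebras $\A^d_\varphi(t,r)$: one is $\hat{\A}_\varphi$, the other is by construction a finite-index local extension of the (geometric) thermal completion of $\A$, hence of $\A$ itself. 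Only then, after verifying that this extension is split via \cite{Longo03}, does \cite[Thm.\! 5.1]{Weiner} apply to identify the two. Your proposal skips all of this, which is the mathematical core of the theorem. A smaller gap: in the triviality criterion, ``trivial'' means unitarily equivalent to $(\A,U)$ in the vacuum representation, and you cannot directly conclude ``index $=1$'' from that, since a priori a net could be abstractly isomorphic to a proper extension of itself; the paper rules this out with the $\mu$-index formula of \cite[Prop.\! 24]{KLM}.
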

\begin{proof}
First note that by strong additivity, for all $r\in (t,s)$, we have that
\begin{multline}
\label{A(r,s)'capA^d(t,s)}
\A^d_\varphi(t,s)\cap \A_\varphi(r,s)' = (\A_\varphi(t,\infty) \cap \A_\varphi(r,s)')\cap \A_\varphi(s,\infty)' = \\
=\A_\varphi(t,\infty) \cap (\A_\varphi(r,s)\vee\A_\varphi(s,\infty))' = \A_\varphi(t,\infty)\cap \A_\varphi(r,\infty)'
= \A^d_\varphi(t,r).
\end{multline}
Similarly, we have that $\A^d_\varphi(t,s)\cap \A(t,r)' =\A^d_\varphi(r,s)$, too.
Now consider the faithful normal state $\tilde{\varphi}_{\rm geo}\circ E$
on  $\A^d_\varphi(t,s)$, where $E:\A_\varphi^d(t,s)\to \A_\varphi(t,s)$
is the (unique) faithful normal expectation whose existence
is guaranteed  by Lemma \ref{le:finiteindextc} and the state $\tilde{\varphi}_{\rm geo}$
on $\A_\varphi(t,s)$ is defined by the formula
\begin{equation*}
\tilde{\varphi}_{\rm geo}(x):=\geo({\pi_\varphi^{-1}}(x)),
\end{equation*}
$\forall x \in \A_\varphi(t,s)$.
Note that the above formula indeed well-defines a faithful
normal state since ${\pi_\varphi}$ is locally an isomorphism.
Being a faithful normal state on $\A^d_\varphi(s_1,s_2)$, it gives rise
to a one-parameter group of modular automorphisms $t\mapsto
\tilde{\sigma}_t$. By construction, $t\mapsto \tilde{\sigma}_t$
preserves $\A_\varphi(s_1,s_2)$ and on this subalgebra it acts like its
modular group associated to the state $\tilde{\varphi}_{\rm geo}$.

Locally,
both ${\pi_\varphi}$ and $\pi_{\rm geo}$ (the GNS representations
associated to $\varphi$ and $\geo$, respectively) are isomorphisms and the algebras
$\pi_{\rm geo}(\A(s,r))$ are local algebras of the thermal completion
net $\hat{\A}_{\geo}$. Hence, by the Bisognano-Wichmann property, it follows
that
\begin{equation*}
\tilde{\sigma}_t(\A_\varphi(s_1,r)) = \A_\varphi(s_1,f_t(r))\;\;\; \textrm{and}\;\;\;
\tilde{\sigma}_t(\A_\varphi(r,s_2)) = \A_\varphi(f_t(r),s_2).
\end{equation*}
Note that the actual formula of the function $f_t:(s_1,s_2)\to (s_1,s_2)$ could be
easily worked out (we would then also need to take account of the fact
that, when passing to the thermal completion net, one needs to perform a
re-parametrization). However, in what follows, we shall not need a
concrete formula for $f_t$, so for simplicity of the discussion we leave
the expression in this abstract form. Note further that,
by eq.\! \eqref{A(r,s)'capA^d(t,s)}, our previous formula holds for
the dual algebras, too:
\begin{equation}\label{eq:modular-action-extension}
\tilde{\sigma}_t(\A^d_\varphi(s_1,r)) = \A^d_\varphi(s_1,f_t(r))\;\;\;
\textrm{and}\;\;\;
\tilde{\sigma}_t(\A^d_\varphi(r,s_2)) = \A^d_\varphi(f_t(r),s_2).
\end{equation}

Let now $\Phi$ be the GNS vector given by the state $\varphi$ in its GNS
representation ${\pi_\varphi}$. Since $\A^d_\varphi(s_1,s_2)$ is a local algebra of the
thermal completion net $\hat{\A}_\varphi$ and $\Phi$ is the vacuum-vector
of this net, the modular group of unitaries
$t\mapsto \Delta_{\Phi}^{it}$ associated to $(\Phi, \A^d_\varphi(s_1,s_2))$
also acts in a ``geometrical manner'' on
$\A^d_\varphi(s_1,r)$ and we have that
\begin{equation}\label{eq:modular-action-thermal}
\Delta^{it}_\Phi \A^d_\varphi(s_1,r) \Delta^{-it}_\Phi = \A^d_\varphi(s_1,f_t(r)).
\end{equation}

Consider the inclusion of factors $\A^d_\varphi(t,r_0)\subset \A^d_\varphi(t,s)$
for some fixed $t<r_0<s$. It becomes a standard
half-sided modular inclusion of factors both when it is considered with
the state $\tilde{\varphi}$ given by the vector $\Phi$ and with the state
$\tilde{\varphi}_{\rm geo}\circ E$.
Indeed, it has been already shown that is is a half-sided modular inclusion.
Standardness with respect to $\Phi$ follows from the
Reeh-Schlieder property for KMS states (see Section \ref{kmsstates}).
As for $\tilde{\varphi}_{\rm geo}\circ E$, let $\Phi'$
be the GNS vector in the GNS representation $\pi'$. The subspace generated
by $\pi'(\A_\varphi(t,s))$ and $\Phi'$ is equivalent to the representation
space with respect to $\geo$, hence it holds that
$\overline{\pi'(\A_\varphi(t,r_0))\Phi'} = \overline{\pi'(\A_\varphi(r_0,s))\Phi'}$
again by the Reeh-Schlieder property. Note that
$\overline{\pi'(\A^d_\varphi(t,r_0))\Phi'}
 = \overline{\pi'(\A^d_\varphi(t,r_0)\vee\A_\varphi(r_0,s))\Phi'}$,
since $\A^d_\varphi(t,r_0)$ commutes with $\A_\varphi(r_0,s)$ and
$\overline{\pi'(\A_\varphi(r_0,s))\Phi'}$ is already included in
$\overline{\pi'(\A^d_\varphi(t,r_0))\Phi}$. By strong additivity of $\A$,
$\A^d_\varphi(t,r_0)\vee\A_\varphi(r_0,s)$ includes
$\A_\varphi(t,s)$, in particular the representatives of local diffeomorphisms
supported in $(t,s)$. Therefore it holds that
$\A^d_\varphi(t,r_0)\vee\A_\varphi(r_0,s) = \A^d_\varphi(t,s)$
and this implies the
cyclicity of $\Phi'$ for $\A^d_\varphi(t,r_0)$. The cyclicity for
$\A^d_\varphi(r_0,t)$ can be proved analogously.

Thus we can construct two M\"obius
covariant nets.
Of course, the one constructed with $\tilde{\varphi}$ simply gives back the
thermal completion $\hat{\A}_\varphi$. The other one, constructed
with the help of $\tilde{\varphi}_{\rm geo}\circ E$, is easily seen to be
a local  extension of the net obtained by the inclusion
$(\tilde{\varphi}_{\rm geo},\A_\varphi(t,r_0)\subset \A_\varphi(t,s))$ which in
turn is
equivalent to the thermal completion obtained with $\geo$ and hence with
the original net $\A$ (in the vacuum representation).

However, as we have seen their modular actions in equations
\eqref{eq:modular-action-extension} \eqref{eq:modular-action-thermal},
both constructed nets will have
$\A^d_\varphi(t,r)$ as the local algebra corresponding to the interval
$(e^{2\pi t},e^{2\pi r})$ for all $r\in [t,s]$.
Furthermore, it turns out that the extension of $\A$ is split.
Indeed, we have already observed in Lemma \ref{le:finiteindextc} that
the inclusion is irreducible and of finite index, and the original net $\A$ is
completely rational by assumption. Then by \cite{Longo03} a finite
index extension is split as well.
Hence the two strongly additive split nets coincide
on all intervals $(e^{2\pi r_1},e^{2\pi r_2})$, with $t<r_1,r_2<s$, and thus
by an application of \cite[Thm.\! 5.1]{Weiner} they are equivalent.
At this point we can infer that the extension $\A^d$ is conformal.
Indeed, it includes $\A$ as a subnet, in particular its Virasoro subnet,
hence there is a representation of $\diffs1$. Local representatives of $\diffs1$
supported in $(t,s)$ act covariantly on $\A^d_\varphi(t,s)$. Any interval
in $S^1$ can be obtained from $(t,s)$ and an action of M\"obius group,
any local diffeomorphism acts covariantly. The group $\diffs1$ is generated
by local diffeomorphisms, hence diffeomorphism covariance holds.

We have obtained that the thermal completion constructed with $\varphi$ is a
local extension of the original net (in the vacuum representation).
If $\A^d_\varphi(t,s)=\A_\varphi(t,s)$, then
of course the extension is trivial. On the other hand, a completely
rational net cannot be equivalent to a nontrivial extension of itself
since we have the formula \cite[Prop.\! 24]{KLM} relating the
$\mu$-indices of the net and of the extension to the index of the
extension.
\end{proof}

A M\"obius covariant net for which the only irreducible local extension is
the trivial one (i.e.\! itself) is said to be a {\bf maximal} net.
Putting together the two presented theorems, the following conclusion can
be drawn.
\begin{corollary}\label{cor:maximal-net-KMS-automorphisms}
Let $\A$ be a conformal net and $\varphi$ a primary KMS state on its
quasi-local algebra $\gA_\A$ w.r.t.\! the translations $t\mapsto \Ad U(\tau_t)$.
If $\A$ is completely rational and maximal, then there exists an
automorphism $\alpha\in{\rm Aut}(\gA_\A)$ satisfying
\begin{itemize}
\item
$\alpha(\A)(I)=\A(I)$ for all $I\Subset\RR$
\item
$\alpha\circ\Ad U(\tau_t) = \Ad U(\tau_t)\circ \alpha$ for all $t\in\RR$
\end{itemize}
such that $\varphi=\geo\circ\alpha$ where $\geo$ is the geometric KMS
state.
\end{corollary}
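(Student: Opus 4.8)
The plan is to assemble the two theorems of this section, using maximality as the bridge between them. First I would record that complete rationality of $\A$ includes the split property (condition $(a)$ of Section~\ref{completerationality}), so that both theorems of this section are applicable to $\A$, and that $\varphi$ is by hypothesis a primary KMS state with respect to the translations $t\mapsto\Ad U(\tau_t)$.

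The first step is to invoke Theorem~\ref{thm:thermal-completion-extension}: since $\A$ is completely rational, the thermal completion $(\hat{\A}_\varphi,\hat{U}_\varphi)$ associated to $\varphi$ is, as a M\"obius covariant net, conformal and unitarily equivalent to an irreducible local extension of $(\A,U)$. Now I would bring in the hypothesis that $\A$ is \emph{maximal}: by definition the only irreducible local extension of $\A$ is $\A$ itself, so this extension must be the trivial one, i.e.\ it coincides with the original net in the vacuum representation.

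The second step is to translate triviality back into a statement about relative commutants. By the ``if and only if'' clause of Theorem~\ref{thm:thermal-completion-extension}, triviality of the extension is equivalent to the equality
\[
\A^d_\varphi(t,s)=\A_\varphi(t,s)
\]
for some (and hence for all) $t<s$, $t,s\in\RR$. Having secured this equality, the final step is to feed it into the first theorem of this section: $\A$ has the split property, $\varphi$ is a primary KMS state, and $\A_\varphi(t,s)=\A^d_\varphi(t,s)$. That theorem then produces directly an automorphism $\alpha\in\Aut(\gA_\A)$ which preserves every local algebra $\A(I)$ for $I\Subset\RR$ and commutes with the translations $t\mapsto\Ad U(\tau_t)$, and which satisfies $\varphi=\geo\circ\alpha$. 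This is exactly the assertion.

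Since all of the analytic content has already been carried out in the two preceding theorems, I do not expect a genuine obstacle here; the corollary is a clean assembly. The only point requiring care is verifying that maximality is being invoked against precisely an \emph{irreducible local} extension, which is the type of extension that Theorem~\ref{thm:thermal-completion-extension} produces, so that the definition of a maximal net applies verbatim and forces the extension to be trivial.
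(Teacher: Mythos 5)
Your proof is correct and is exactly the argument the paper intends: the corollary is stated immediately after the sentence ``Putting together the two presented theorems, the following conclusion can be drawn,'' and your assembly (maximality forces the extension in Theorem~\ref{thm:thermal-completion-extension} to be trivial, the ``if and only if'' clause yields $\A^d_\varphi(t,s)=\A_\varphi(t,s)$, and the first theorem of the section then produces $\alpha$) is precisely that synthesis. No gaps.
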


\section{Uniqueness results}\label{uniquenessresults}

\subsection{Maximal completely rational nets}\label{maximal}
As seen in Section \ref{thermalcompletion}, any KMS state $\varphi$ on a completely
rational maximal net is a composition
of the geometric KMS state $\geo$ and an automorphism
$\a \in {\rm Aut}(\gA_\A)$ such that $\a\circ\Ad U(\tau_t)=\Ad U(\tau_t)\circ\a$
for all $t\in\RR$ and $\a(\A(I))=\A(I)$ for all $I\Subset \RR$.
 From now on, we simply call such $\a$ an {\bf automorphism of the net $\A|_\RR$}
commuting with translations. Here we study these automorphisms.

As noted in the introduction, among many examples, completely rational
nets are of particular interest. A completely rational net admits only
finitely many sectors \cite{KLM}.  In this subsection we will show
the uniqueness of KMS state in cases where the net is completely rational and maximal
with respect to extension. To obtain the uniqueness, we need
to connect automorphisms on $\RR$ and sectors (on $S^1$ by definition).
Proposition \ref{sectorauto} will demonstrate that there is a nice correspondence
between them.

Let us begin with simple observations on automorphisms which commute with
rotations or translations.
\begin{proposition}\label{uniquenessauto}
Let $\s_1$ and $\s_2$ be two automorphisms of the net $\A$ commuting with
rotations.
If they are in the same sector, namely if there is a unitary operator $W$
which intertwines $\s_1$ and $\s_2$, then actually $\Ad (W)$ is
an inner symmetry.

\end{proposition}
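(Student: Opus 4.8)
The plan is to show that the intertwiner $W$, which a priori only implements an equivalence of sectors, is in fact a \emph{local} object—that is, it belongs to every local algebra $\A(I)$—so that its adjoint preserves each $\A(I)$, and then to check that it fixes the vacuum. First I would recall the precise situation: $\s_1,\s_2$ are automorphisms of the net $\A$ (in the sense of Section \ref{prelimsectors}, i.e.\! they preserve each $\A(I)$ and act on the vacuum Hilbert space $\H$), each commuting with the rotation one-parameter group $\r_s$. The hypothesis that they lie in the same sector gives a unitary $W$ with $\Ad(W)\circ(\s_1)_I = (\s_2)_I$ for every interval $I$, equivalently $W\s_1(x)W^* = \s_2(x)$ for all $x\in\A(I)$ and all $I$.

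The key step is to exploit the commutation with rotations together with positivity of energy. Since each $\s_i$ commutes with $\Ad U(\r_s)$ and preserves $\Omega$ (being an inner symmetry candidate—though at this stage I only know $\s_i$ preserves each local algebra), I would first argue that the unitary $W$ intertwines the rotation-covariant structure in a controlled way: the two automorphisms being in the same sector while both commuting with rotations forces $W$ to commute (up to a phase that can be absorbed) with the rotation unitaries $U(\r_s)$. Concretely, $\Ad(U(\r_s)WU(\r_s)^*)$ also intertwines $\s_1$ and $\s_2$ on each $\A(\r_s I)=\A(I)$ after relabeling, so $U(\r_s)WU(\r_s)^* W^*$ lies in the commutant of $\bigvee_I \s_2(\A(I)) = \bigvee_I \A(I)$, which is $\CC\1$ by irreducibility of the net; hence $U(\r_s)WU(\r_s)^* = \chi(s) W$ for a character $\chi$. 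Using positivity of the rotation generator one then forces $\chi$ to be trivial (a one-sided spectral condition rules out a nontrivial character on the compact group of rotations, or equivalently the vector $W\Omega$ lies in the spectral subspace fixed by rotations), so $W$ commutes with all $U(\r_s)$ and, in particular, $W\Omega = \lambda\Omega$ for a scalar $\lambda$; replacing $W$ by $\bar\lambda W$ we may assume $W\Omega=\Omega$.

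With $W\Omega=\Omega$ in hand, I would then localize $W$. The standard argument is as follows: for an interval $I$ and any $y\in\A(I')=\A(I)'$ (Haag duality), the element $y$ commutes with $\A(I)$; since both $\s_1,\s_2$ preserve $\A(I')$ and agree with $\Ad(W)$ up to the sector relation, one checks that $W$ commutes with $\A(I')$, whence by Haag duality $W\in \A(I')' = \A(I)$. As this holds for every $I$, and combined with $W\Omega=\Omega$, the unitary $\Ad(W)$ is an automorphism of the net preserving every local algebra and fixing the vacuum—that is, an inner symmetry in the sense of Section \ref{prelimsectors}.

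The main obstacle I anticipate is the localization/character step: ruling out a nontrivial phase $\chi(s)$ and genuinely pinning down $W\in\A(I)$ for \emph{every} $I$ simultaneously. The cleanest route is to use irreducibility ($\bigvee_I\A(I)$ has trivial commutant) to reduce relative commutants to scalars, and positivity of energy to kill the rotation phase; one must be careful that the relation $\Ad(W)\circ\s_1=\s_2$ holds on all local algebras (not just one) so that the intertwiner is genuinely global, and that the automorphisms really do fix $\Omega$—if that is not assumed, it follows because an automorphism commuting with rotations that preserves the net must send the (unique, up to phase) rotation-invariant vector to a rotation-invariant vector. Once these two points are secured, the conclusion that $\Ad(W)$ is an inner symmetry is immediate from the definitions.
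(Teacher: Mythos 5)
Your first step --- observing that $\Ad(W)=\s_2\circ\s_1^{-1}$ on local elements, deducing from irreducibility that $U(\rho_s)WU(\rho_s)^*W^*$ is a scalar $\chi(s)$, and killing the character $\chi$ by positivity of the rotation generator so that $W\Omega\in\CC\Omega$ --- is exactly the paper's proof (the paper phrases it via $L_0'=W^*L_0W=L_0+\epsilon$ and unitary equivalence of spectra). Up to that point you are done: the definition of inner symmetry in Section \ref{prelimsectors} only requires that $\Ad(W)$ be an automorphism of the net preserving the vacuum state, and $\Ad(W)$ preserves each $\A(I)$ for free because it equals $\s_2\s_1^{-1}$ there, while $W\Omega=\lambda\Omega$ gives $\omega\circ\Ad(W)=\omega$.

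Your second step, however, is both unnecessary and wrong. You claim that $W$ commutes with $\A(I')$ and hence lies in $\A(I)$ by Haag duality, for \emph{every} interval $I$. The commutation fails in general: $\Ad(W)$ restricted to $\A(I')$ is $\s_2\s_1^{-1}|_{\A(I')}$, which is a nontrivial automorphism of $\A(I')$ unless $\s_1=\s_2$ --- the automorphisms here are global automorphisms of the net, not DHR morphisms localized in $I$, so there is no interval on which they act trivially. Moreover the conclusion would be self-defeating: $\bigcap_I\A(I)=\CC\1$ (two disjoint intervals already give $\A(I_1)\cap\A(I_2)\subset\A(I_1)\cap\A(I_1)'=\CC\1$ by factoriality), so $W\in\A(I)$ for all $I$ would force $W$ to be a scalar and $\s_1=\s_2$. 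Delete this step and replace it with the one-line remark that $\Ad(W)$ preserves each local algebra because it agrees with the net automorphism $\s_2\s_1^{-1}$; then your argument coincides with the paper's.
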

\begin{proof}
By the definition of inner symmetry, we have just to prove that the vacuum vector $\Omega$ is invarian for $W$.

For any local element $x$ of $\A$ it holds that
$W\s_1(x)W^* = \s_2(x)$. Since $\s_1$ is an automorphism and surjective,
this is equivalent to $\Ad(W)(x) = \s_2\s_1^{-1}(x)$.
By assumption $\s_1$ and $\s_2$ commute with rotations, so does $\Ad(W)$.

Let $L_0$ be the generator of rotations. The observation above implies
that $\Ad(W)\circ \Ad(e^{itL_0}) = \Ad(e^{itL_0})\circ \Ad(W)$, for $t \in \RR$, or, by setting
$L_0^\prime := W^*L_0W$, that $\Ad(e^{itL_0^\prime}) = \Ad(e^{itL_0})$.
Since the net is irreducible in the vacuum representation, this
in turn shows that $e^{itL_0^\prime}$ is a scalar multiple of $e^{itL_0}$.
Let us denote the scalar by $\lambda(t)$.

It is immediate that $t \mapsto \lambda(t)$ is a continuous homomorphism from the group $\RR$
to the group of complex numbers of modulus $1$. Thus it follows that $L_0^\prime = W^*L_0W = L_0 + \epsilon$
where $\lambda(t) = e^{it\epsilon}$.
On the other hand, by the positivity of energy, the spectrum of
$L_0$ is bounded below. But $L_0$ and $L_0^\prime$ must have the same spectrum
since they are unitarily equivalent, hence
$\epsilon$ must be $0$. Namely, $W$ commutes with $L_0$.
This implies in particular that $W$ preserves
$\Omega$, an eigenvector of $L_0$ with multiplicity one.
\end{proof}

\begin{proposition}\label{geometricgaugeinvariance}
If an automorphism $\a$ of $\A|_\RR$ preserves the vacuum state $\omega$,
then $\a$ commutes with any diffeomorphism and it preserves also
the geometric KMS state.
\end{proposition}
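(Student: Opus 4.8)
The plan is to implement $\alpha$ by a vacuum-fixing unitary, to prove that this unitary commutes with the whole M\"obius group by a purely modular argument, to upgrade this to commutation with all of $\diffs1$ via the uniqueness of the diffeomorphism symmetry, and finally to read off invariance of $\geo$. First I would construct the implementing unitary. Since $\omega\circ\alpha=\omega$ and $\Omega$ is cyclic for $\gA_\A$ (Reeh--Schlieder), the prescription $Wx\Omega:=\alpha(x)\Omega$ is isometric, and it is onto because $\alpha$ is surjective; hence $W$ is a unitary on $\H$ with $W\Omega=\Omega$ and $\Ad W=\alpha$ on $\gA_\A$. In particular $W\A(I)W^*=\A(I)$ for every $I\Subset\RR$.

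The key step is to show that $W$ commutes with $U(\psl2r)$. Fix a finite interval $I\Subset\RR$. Then $\Ad W$ restricts to an automorphism of the factor $\A(I)$ that preserves the state $\omega|_{\A(I)}$ and fixes its cyclic separating vector $\Omega$; by the standard fact that a state-preserving automorphism commutes with the modular flow, $W$ commutes with the modular unitaries $\Delta_I^{it}$ of $(\A(I),\Omega)$. By the Bisognano--Wichmann property these modular unitaries are exactly the dilations $U(\delta^I(-2\pi t))$ associated with $I$. As $I$ ranges over the finite intervals $\Subset\RR$, the one-parameter groups $\delta^I$ are hyperbolic subgroups with varying pairs of fixed points and generate the full group $\psl2r$; hence $W$ commutes with $U(g)$ for every $g\in\psl2r$. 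Since every interval $J\in\intervals$ is of the form $gI$ with $g\in\psl2r$ and $I\Subset\RR$, it follows that $\Ad W$ preserves \emph{every} local algebra, i.e.\ $W\A(J)W^*=\A(J)$ for all $J$.

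Now I would pass to diffeomorphisms. Consider $U^\alpha(g):=WU(g)W^*$. This is a strongly continuous projective representation of $\diffs1$; its rotation generator $WL_0W^*$ is unitarily equivalent to $L_0$ and so has positive spectrum; it agrees with $U$ on $\psl2r$ by the previous step; and it is covariant, since $U^\alpha(g)\A(J)U^\alpha(g)^*=W\A(gJ)W^*=\A(gJ)$, using $W\A(J)W^*=\A(J)$. By the uniqueness of the diffeomorphism symmetry \cite{CW}, $U^\alpha=U$, that is $W$ commutes with $U(g)$ for every $g\in\diffs1$. Consequently $\alpha=\Ad W$ commutes with $\Ad U(g)$ for every diffeomorphism $g$. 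Finally, invariance of the geometric state is then immediate: writing $\Exp|_{\A(I)}=\Ad U(\eta_I)$ as in Proposition \ref{pro:local-diffeom-geom-kms-state}, commutation of $\alpha$ with $\Ad U(\eta_I)$ gives $\Exp\circ\alpha=\alpha\circ\Exp$, so for $x\in\A(I)$ one has $\geo(\alpha(x))=\omega(\Exp(\alpha(x)))=\omega(\alpha(\Exp(x)))=\omega(\Exp(x))=\geo(x)$, where the third equality uses $\omega\circ\alpha=\omega$ on $\A(e^{2\pi I})\Subset\RR$.

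The hard part is the M\"obius commutation step: converting bare vacuum-invariance into commutation with $U(\psl2r)$. The crucial inputs are the modular rigidity (a state-preserving automorphism implemented by a vacuum-fixing unitary commutes with the modular group of each finite-interval algebra), Bisognano--Wichmann identifying those modular flows with the geometric dilations $\delta^I$, and the group-theoretic fact that these dilations generate $\psl2r$. Once $W$ is known to commute with $\psl2r$ and to preserve all local algebras, the appeal to \cite{CW} and the computation for $\geo$ are routine.
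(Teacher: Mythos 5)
Your proof is correct, and its overall skeleton coincides with the paper's: implement $\a$ by the vacuum-fixing unitary $W$ via Reeh--Schlieder, show that $\Ad(W)$ preserves \emph{every} local algebra of the circle net, invoke Carpi--Weiner to get commutation with all diffeomorphisms, and then read off invariance of $\geo$ from $\geo=\omega\circ\Exp$ with $\Exp$ built from local diffeomorphisms. The difference lies in the middle. The paper passes from ``$\Ad(W)$ preserves $\A(I)$ for $I\subset\RR$'' to ``$\Ad(W)$ preserves all $\A(J)$, $J\in\intervals$'' in one line via Haag duality ($\A(I')=\A(I)'$), and then cites \cite[Corollary 5.8]{CW}, which states precisely that a vacuum-preserving unitary normalizing every local algebra commutes with the diffeomorphism representation. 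You instead first prove commutation of $W$ with $U(\psl2r)$ by the modular argument (state-preserving, vector-fixing automorphisms commute with $\Delta_I^{it}$, Bisognano--Wichmann identifies these with $U(\d^I(-2\pi t))$, and the $\d^I$ generate $\psl2r$), use transitivity to cover all intervals, and then appeal to the uniqueness theorem of \cite{CW} for the representation $U^\a(g)=WU(g)W^*$. In effect you re-derive the relevant part of \cite[Corollary 5.8]{CW} from the main uniqueness theorem of that paper; this costs more work but makes the M\"obius part self-contained and makes explicit which hypotheses of the uniqueness theorem (positive energy, covariance, agreement on $\psl2r$, and the locality condition $U^\a(g)xU^\a(g)^*=x$ for $\supp(g)\subset J'$, which you should note follows from $W\A(J')W^*=\A(J')$) are actually being used. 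The only caveat worth recording is that the uniqueness of the projective representation gives $WU(g)W^*=\l(g)U(g)$ a priori only up to phase, which is exactly what is needed for the stated conclusion that $\a$ commutes with $\Ad U(g)$.
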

\begin{proof}
The second part of the statement follows immediately from the first part,
since the geometric state is a ``composition of the vacuum with diffeomorphism'',
as seen from the construction in Section \ref{geometric}.

To show the first part, we observe that $\a_I$ is implemented by a unitary
operator $W$, since it preserves the vector state $\omega$ and this implementation
does not depend on the interval $I$, by the Reeh-Schlieder property.
Since, for any $I\subset \RR$, $\A(I)$ is preserved by $\Ad(W)$, so is $\A(I')$ ($= \A(I)'$ by the Haag duality), where
$I'$ is the complementary interval on $S^1$. Any interval on $S^1$ is either of the form $I$ or $I'$ with $I\subset \RR$.
By \cite[Corollary 5.8]{CW}, $W$ commutes with all
the diffeomorphisms.
\end{proof}

By an analogous proof as Proposition \ref{uniquenessauto},
we easily obtain the following proposition for the net on the real line $\RR$.
\begin{proposition}\label{uniquenessautor}
Let $\a_1$ and $\a_2$ be two automorphisms of the net $\A|_\RR$ commuting with
translations.
If they are unitarily equivalent, then the unitary operator $W$ which
intertwines $\a_1$ and $\a_2$ implements an inner symmetry.
\end{proposition}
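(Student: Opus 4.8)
The plan is to run the argument of Proposition~\ref{uniquenessauto} essentially verbatim, with the generator of rotations $L_0$ replaced by the generator of translations. Write $U(\tau_s)=e^{isP}$, so that $P$ is the self-adjoint generator of the translation group. Given the intertwiner $W$ with $W\a_1(x)W^*=\a_2(x)$ for all local $x$, surjectivity of $\a_1$ turns this into $\Ad(W)=\a_2\circ\a_1^{-1}$ on $\gA_\A$. Since both $\a_1$ and $\a_2$ commute with translations by hypothesis, so does $\Ad(W)$; hence $\Ad(W)\circ\Ad(e^{isP})=\Ad(e^{isP})\circ\Ad(W)$ for every $s\in\RR$.

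Next I would exploit irreducibility exactly as in the rotation case. Setting $P':=WPW^*$, the commutation just obtained reads $\Ad(e^{isP'})=\Ad(e^{isP})$ on all local algebras, hence on $\B(\H)$ by irreducibility of the net in the vacuum representation together with normality of the automorphisms involved. Therefore $e^{-isP}e^{isP'}$ is a scalar $\lambda(s)$, and $s\mapsto\lambda(s)$ is a continuous character of $\RR$, so $\lambda(s)=e^{is\epsilon}$ for some $\epsilon\in\RR$ and thus $P'=P+\epsilon$ by Stone's theorem. Here comes the one genuinely different input: in place of ``$\mathrm{spec}(L_0)$ bounded below'' I would invoke positivity of the translation generator (equivalent to positivity of energy, \cite{Weiner05}), which gives $\mathrm{spec}(P)=[0,\infty)$ with infimum $0$. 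Since $P'=WPW^*$ is unitarily equivalent to $P$ it has the same spectrum, whereas $P+\epsilon$ has infimum $\epsilon$; comparing infima forces $\epsilon=0$, so $WPW^*=P$, i.e.\ $W$ commutes with $P$.

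It remains to deduce that $W$ preserves the vacuum state. Because $W$ commutes with $P$, it preserves the eigenspace $\ker P$, and granting $\ker P=\CC\Omega$ one gets $W\Omega=\mu\Omega$ for a phase $\mu$, whence $\langle\Omega,\Ad(W)(x)\Omega\rangle=\langle\Omega,x\Omega\rangle$ for all $x$. As $\Ad(W)=\a_2\circ\a_1^{-1}$ already preserves every local algebra $\A(I)$, $I\Subset\RR$, this shows $\Ad(W)$ is an inner symmetry, as claimed.

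The step I expect to be the main obstacle is precisely the simplicity of the eigenvalue $0$ of $P$, that is, the fact that the vacuum is the \emph{unique} (up to phase) translation-invariant vector. This is where the translation picture genuinely departs from the rotation picture: $L_0$ has discrete spectrum with a manifestly one-dimensional lowest eigenspace, while $P$ has continuous spectrum $[0,\infty)$ and $0$ sits at the bottom of a continuum. I would establish the simplicity using dilation covariance: the relation $\delta_t\circ\tau_s\circ\delta_{-t}=\tau_{e^t s}$ gives $U(\delta_t)e^{isP}U(\delta_t)^*=e^{ie^t sP}$, so dilations merely rescale $P$ and in particular leave $\ker P$ invariant. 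Thus $\ker P$ is invariant under the whole affine group generated by translations and dilations, and combining this with positivity of energy and the uniqueness of the vacuum (condition~(5)) pins $\ker P$ down to $\CC\Omega$.
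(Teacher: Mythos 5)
Your proof is correct and is essentially the paper's own: the paper gives no separate argument for this proposition, stating only that it follows ``by an analogous proof as Proposition \ref{uniquenessauto}'', which is exactly the substitution $L_0\rightsquigarrow P$ you carry out, with positivity of $P$ replacing semiboundedness of $L_0$. You rightly isolate the one point needing extra care, namely $\ker P=\CC\Omega$; just note that your affine-group sketch does not by itself close this (a positive-energy representation of the translation--dilation group can have translation-invariant vectors that are not dilation-invariant, e.g.\ a one-dimensional representation with $P=0$ and a nontrivial dilation character), and the clean route is the representation theory of $\psl2r$: every nontrivial irreducible positive-energy representation has trivial $\ker P$, so $\ker P$ equals the space of $U$-invariant vectors, which is $\CC\Omega$ by the uniqueness of the vacuum.
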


The following lemmas will serve to connect different KMS states and
inequivalent automorphisms.
\begin{lemma}\label{dilationvariance}
If a locally normal state $\psi$ on $\A|_\RR$ is invariant under dilation
$\Ad U(\d_s)$ with some $s \in \RR \setminus \{0\}$, then
$\psi$ is equal to the vacuum state $\omega$.
\end{lemma}
\begin{proof}
It is obvious that $\psi$ is invariant under $\d_{ns}, n \in \ZZ$. Hence
we may assume that $s > 0$.

Let us consider intervals $I_T = [-T,T]$. As noted in \cite[Lemma 4.1]{Weiner},
the norm-difference of restrictions $\psi|_{\A(I_T)}, \omega|_{\A(I_T)}$
tends to $0$ when $T$ decreases to $0$. On the other hand, $\psi$ and $\omega$
are invariant under $\Ad U(\d_{ns})$ by assumption and definition respectively.
Therefore the norm-difference on $\A(e^{ns}I)$ is the same as on $\A(I)$ by the
invariance. Namely,
\begin{multline*}
\left\| \psi|_{\A(I_T)} - \omega|_{\A(I_T)} \right\| =
\left\| \psi\circ\Ad U(\d_{ns})|_{\A(I_{e^{-ns}T})}  - \omega\circ\Ad U(\d_{ns})|_{\A(I_{e^{-ns}T})} \right\| = \\
 = \left\| \psi|_{\A(I_{e^{-ns}T})} - \omega|_{\A(I_{e^{-ns}T})} \right\| \to 0,
\end{multline*}
which shows that the two states are the same state when restricted to $I_T$. As
$T$ is arbitrary, they are the same.
\end{proof}

\begin{lemma}\label{dilationinequivalence}
Let $\a$ be an automorphism of $\A|_\RR$ commuting with translations.
Let us denote the ``dilated'' automorphism $\Ad U(\d_s) \circ \a \circ \Ad U(\d_{-s})$ by
$\a_s$.
If $\a$ does not preserve the vacuum state $\omega$, then
the automorphisms of the family $\{\a_s\}_{s\in\RR_+}$ are
mutually unitarily inequivalent.
\end{lemma}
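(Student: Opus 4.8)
The plan is to argue by contradiction, reducing the unitary inequivalence of the dilated automorphisms to the dilation rigidity of locally normal states supplied by Lemma \ref{dilationvariance}. As a preliminary, I would record that each $\a_s$ is again an automorphism of $\A|_\RR$ commuting with translations. Indeed, since $\Ad U(\d_{-s})(\A(I)) = \A(e^{-s}I)$ is again a bounded interval and $\a$ preserves every local algebra, one gets $\a_s(\A(I)) = \A(I)$ for all $I\Subset\RR$. Moreover, using the group relation $\d_s\circ\t_t = \t_{e^s t}\circ\d_s$ (hence $\Ad U(\d_s)\circ\Ad U(\t_t) = \Ad U(\t_{e^s t})\circ\Ad U(\d_s)$) together with $\a\circ\Ad U(\t_t) = \Ad U(\t_t)\circ\a$, a short computation gives
\[
\a_s\circ\Ad U(\t_t) = \Ad U(\d_s)\circ\a\circ\Ad U(\t_{e^{-s}t})\circ\Ad U(\d_{-s}) = \Ad U(\t_t)\circ\a_s,
\]
so $\a_s$ commutes with translations.

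Now suppose, toward a contradiction, that $\a_s$ and $\a_{s'}$ were unitarily equivalent for some $s\neq s'$ in $\RR_+$, via an intertwining unitary $W$ with $\Ad(W)\circ\a_s = \a_{s'}$ (the reverse orientation is handled identically). Since both automorphisms commute with translations, Proposition \ref{uniquenessautor} applies and tells us that $\gamma:=\Ad(W)$ is an inner symmetry; in particular $\gamma$ preserves the vacuum, $\w\circ\gamma = \w$. Composing the intertwining relation with $\w$ then yields
\[
\w\circ\a_{s'} = \w\circ\gamma\circ\a_s = \w\circ\a_s,
\]
so the two states $\w\circ\a_s$ and $\w\circ\a_{s'}$ coincide.

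To finish, I would exploit the dilation invariance of the vacuum, $\w\circ\Ad U(\d_s)=\w$. This gives $\w\circ\a_s = \w\circ\Ad U(\d_s)\circ\a\circ\Ad U(\d_{-s}) = (\w\circ\a)\circ\Ad U(\d_{-s})$, and likewise for $s'$. Hence the equality $\w\circ\a_s=\w\circ\a_{s'}$ becomes the statement that the locally normal state $\w\circ\a$ is invariant under $\Ad U(\d_{s-s'})$, with $s-s'\neq 0$. By Lemma \ref{dilationvariance} this forces $\w\circ\a = \w$, contradicting the hypothesis that $\a$ does not preserve the vacuum. The only point requiring a moment's care --- and the one I would flag as the technical, rather than conceptual, obstacle --- is the local normality of $\w\circ\a$ needed to invoke Lemma \ref{dilationvariance}; this is immediate since $\ast$-automorphisms of von Neumann algebras are automatically normal, so $\w\circ\a|_{\A(I)}$ is normal for each $I\Subset\RR$. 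The conceptual crux is simply the observation that unitary equivalence of the dilated automorphisms collapses, via the inner-symmetry property, to plain equality of the associated states $\w\circ\a_s$.
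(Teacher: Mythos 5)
Your proof is correct and follows essentially the same route as the paper: both arguments combine Proposition \ref{uniquenessautor} (unitary equivalence of translation-commuting automorphisms forces conjugation by a vacuum-preserving inner symmetry, hence equality of the states $\w\circ\a_s$) with the dilation invariance of $\w$ and Lemma \ref{dilationvariance} to derive $\w\circ\a=\w$, contradicting the hypothesis. The only difference is organizational --- the paper first notes that the states $\{\w\circ\a\circ\Ad U(\d_s)^{-1}\}$ are mutually distinct and then derives the contradiction, while you run the same implications in the opposite order --- so nothing further is needed.
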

\begin{proof}
By assumption $\omega \circ \a$ is different from
$\omega$. Thus Lemma \ref{dilationvariance} implies that
the states of the family $\{\omega\circ\a\circ\Ad U(\d_s^{-1})\}_{s\in\RR}$ are mutually
different. We recall that $\omega$ is invariant under dilations, hence
this family is the same as the family $\{\omega \circ \a_s\}_{s\in\RR}$.

It is immediate that all the automorphisms $\{\a_s\}_{s\in\RR}$
commute with translations.
Then, by Proposition \ref{uniquenessautor}, any two of such automorphisms are
unitarily equivalent if and only if they are conjugate by an inner symmetry.
If there were such a pair of automorphisms, then their compositions
with the vacuum state $\omega$ would be equal, but this
contradicts the observation in the first paragraph.
\end{proof}

Next we construct a correspondence from automorphisms on $\A|_\RR$
to automorphic sectors of $\A$.
\begin{proposition}\label{sectorauto}
For any automorphism $\a$ on $\A|_\RR$ which commutes with translations,
there corresponds an automorphism $\s_\a$ of $\A$ which commutes with rotations.
The images $\s_{\a_1}$ and $\s_{\a_2}$ are unitarily equivalent if and only if
$\a_1$ and $\a_2$ are unitarily equivalent.
\end{proposition}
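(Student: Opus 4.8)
The plan is to attach to $\a$ a representation of the net on $\H$, upgrade it to a M\"obius covariant automorphism of the circle net, and then read off the equivalence of sectors from the two ``inner symmetry'' propositions already established.

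First I would attach to $\a$ the representation $\pi_\a$ of $\gA_\A$ on $\H$ given by $\pi_\a(x):=\a(x)$. Since $\a$ preserves every $\A(I)$ ($I\Subset\RR$) and commutes with the translations $\Ad U(\tau_t)$, and since $U(\tau_t)\Omega=\Omega$, the very same unitaries $U(\tau_t)$ implement the translations in $\pi_\a$; thus $\pi_\a$ is a translation covariant representation of $\A|_\RR$ with positive energy and with the cyclic, translation invariant vector $\Omega$. On each bounded interval $\pi_\a$ acts exactly as $\a$, so it is locally an isomorphism and $\pi_\a(\A(I))=\A(I)$.

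Next I would promote $\pi_\a$ to an automorphism $\s_\a$ of the whole net $\A$ on $S^1$. Here I use that $\A$ is completely rational, hence strongly additive and Haag dual: a translation covariant, positive-energy representation of the restricted net then extends to a genuine (DHR) representation of the circle net, and such a representation is automatically M\"obius covariant. Because $\pi_\a$ is invertible (its inverse being attached to $\a^{-1}$) and locally surjective, the extended representation is an \emph{automorphic} sector, i.e.\ it is implemented by an automorphism of $\A$ preserving every local algebra; after a suitable normalisation of the covariant implementation, so that the rotation subgroup is implemented by the vacuum rotations $U(\rho_s)$ ---which can be arranged since modifying the representative by an inner symmetry preserves both the sector and, by Proposition \ref{geometricgaugeinvariance}, commutation with rotations--- one obtains an automorphism $\s_\a$ with $\s_\a\circ\Ad U(\rho_s)=\Ad U(\rho_s)\circ\s_\a$ and $\s_\a|_{\A(I)}=\a|_{\A(I)}$ for all bounded $I$. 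I expect this extension-and-covariance step to be the main obstacle, precisely because the translation group (parabolic, fixing $\infty$) and the rotation group (elliptic, fixing no point of $S^1$) are not conjugate, so no naive geometric transport turns $\a$ into a rotation covariant object; the content must come from the representation theory made available by complete rationality.

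Finally, for the equivalence of the two correspondences I would argue both implications through inner symmetries. If $\a_1\sim\a_2$ via a unitary $W$ with $W\a_1(\cdot)W^*=\a_2(\cdot)$, then Proposition \ref{uniquenessautor} shows $\Ad W$ is an inner symmetry; in particular $W$ fixes $\Omega$ and, by Proposition \ref{geometricgaugeinvariance}, commutes with all diffeomorphisms, hence with the rotations $U(\rho_s)$. Using $\s_{\a_i}|_{\A(I)}=\a_i$ on bounded intervals together with rotation covariance to reach the intervals containing $\infty$ (writing any such $J$ as $\rho^{-1}(\rho J)$ with $\rho J\Subset\RR$), one checks $W\s_{\a_1}(\cdot)W^*=\s_{\a_2}(\cdot)$, so $\s_{\a_1}\sim\s_{\a_2}$. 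Conversely, if $\s_{\a_1}\sim\s_{\a_2}$ via $W$, then since these automorphisms commute with rotations Proposition \ref{uniquenessauto} again forces $\Ad W$ to be an inner symmetry; restricting the intertwining relation to bounded intervals gives $W\a_1(\cdot)W^*=\a_2(\cdot)$, i.e.\ $\a_1\sim\a_2$. This closes the ``if and only if''.
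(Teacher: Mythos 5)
There is a genuine gap, and it sits exactly where you predicted the difficulty would be --- but your diagnosis points in the wrong direction. You discard the geometric transport as impossible because $\tau$ and $\rho$ are not conjugate in $\psl2r$, and you replace it by an appeal to representation theory: extend $\pi_\a$ to a DHR representation of the circle net and then ``normalise'' the covariance so that rotations are implemented by the vacuum unitaries $U(\rho_s)$. Neither step is available. First, a translation covariant, positive-energy, locally normal representation of the \emph{restricted} net $\A|_\RR$ need not extend to a representation of the net on $S^1$ at all: these are precisely the soliton representations, and complete rationality (which, incidentally, is not a hypothesis of the proposition --- the paper's proof uses only diffeomorphism covariance) controls the DHR sectors of the circle net, not the solitons of $\A|_\RR$. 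Your $\pi_\a$ preserves every $\A(I)$ with $I\Subset\RR$, but nothing you have said produces its action on an algebra $\A(J)$ with $\infty\in J$, and $\a$ is not assumed to be globally unitarily implemented. Second, even granting an extension to a M\"obius covariant automorphism $\s$ of the circle net, ``commuting with rotations'' is strictly stronger than rotation covariance: it requires the covariance unitaries of the sector $\s$ for the rotation subgroup to coincide (up to phase) with the vacuum $U(\rho_s)$, and composing $\s$ with an inner symmetry does not remove that obstruction; Proposition \ref{geometricgaugeinvariance} says inner symmetries commute with diffeomorphisms, which is not what is needed here.

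The paper's construction is the ``naive geometric transport'' done correctly. Although rotations and translations are not conjugate in $\psl2r$, they \emph{are} locally conjugate by a diffeomorphism: the arc-length map $\eta_0$ from $S^1\setminus\{\infty\}$ onto an interval $I_0$ of length $2\pi$ satisfies $\eta_0\circ\rho_{s'}=\tau_{s'}\circ\eta_0$ wherever both sides make sense. Using diffeomorphism covariance one sets, for an interval $I\subset S^1$ and a rotation $\rho_s$ moving $\overline{\rho_s(I)}$ away from $\infty$,
\[
\s_{\a,I}:=\Ad\bigl(U(\rho_s)\bigr)^{-1}\circ\Ad\bigl(U(\eta)\bigr)^{-1}\circ\a\circ\Ad\bigl(U(\eta)\bigr)\circ\Ad\bigl(U(\rho_s)\bigr),
\]
with $\eta$ a diffeomorphism of $S^1$ agreeing with $\eta_0$ on $\rho_s(I)$. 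Independence of the choice of $\eta$ follows from locality and projectivity of $U$; independence of $s$ and commutation with rotations follow from the intertwining property of $\eta_0$ together with the hypothesis that $\a$ commutes with $\Ad U(\tau_t)$. No extension of $\pi_\a$ to intervals containing $\infty$ is ever invoked, and no rationality is used. Your final ``if and only if'' paragraph is essentially the paper's argument (Propositions \ref{uniquenessauto} and \ref{uniquenessautor} reduce both directions to inner symmetries, which commute with all diffeomorphisms), so that part would survive once $\s_\a$ is constructed correctly.
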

\begin{proof}
Recall that the real line $\RR$ is identified with a subset of $S^1$
as explained in Section \ref{preliminaries}.
First we fix an open interval $I_0$ whose closure does not contain the
point at infinity and has the length $2\pi$ in the real line picture.
Note that $S^1 \setminus \{\infty\}$ is naturally diffeomorphic to an interval
$I_0$ of length $2\pi$. Indeed, there is a diffeomorphism
from $S^1\setminus\{\infty\}$ onto $I_0$ which preserves the lengths in the circle picture
of $S^1\setminus\{\infty\}$ with respect to the lengths in the real-line picture of $I_0$. Let us
call this diffeomorphism $\eta_0$. Let $p$ be a point in $S^1\setminus\{\infty\}$.
If $s_p > 0$ (or $s_p < 0$) is small enough so that for any $0 \le s^\prime \le s_p$
(or $0 \ge s^\prime \ge s_p$) it holds that $\rho_{s^\prime}(p) \in S^1\setminus\{\infty\}$,
then it is easy to see that $\eta_0\circ\rho_{s^\prime}(p) = \tau_{s^\prime}\circ\eta_0(p)$.

We have to define an automorphism $\s_\a$ through $\a$. Let us take an
interval $I \subset S^1$. We can choose a rotation
$\rho_s$ such that $\overline{\rho_s(I)}$ is inside $S^1\setminus\{\infty\}$.
It is again easy to see that there is a diffeomorphism $\eta$ of $S^1$ which
coincides with $\eta_0$ on $\rho_s(I)$.
The desired automorphism is defined by
\[
\s_{\a,I} := (\Ad(U(\rho_s)))^{-1} \circ (\Ad(U(\eta)))^{-1} \circ \a \circ
 \Ad(U(\eta)) \circ \Ad(U(\rho_s)).
\]
Since $\a$ preserves each algebra $\A(I)$ on any interval $I$, this is an automorphism.
We must check that this definition does not depend on $s$ and $\eta$ and that
$\s_{\a,I}$ satisfy the consistency condition w.r.t. inclusions of intervals.

Let us fix $s$ which satisfies the condition that $\rho_s(I)$ does not touch
the point at infinity. A different choice of $\eta$ under the condition that
$\eta$ coincides with $\eta_0$ on $\rho_s(I)$ does not matter at all.
Indeed, let $\eta^\prime$ be another diffeomorphism which complies with the
condition. Then $\eta^{-1}\circ\eta^\prime$ does not move points in
$\rho_s(I)$; in other words, the support of $\eta^{-1}\circ\eta^\prime$
is in the complement of $\rho_s(I)$. Since $U$ is a projective unitary representation,
it holds that $U(\eta^\prime) = c\cdot U(\eta)U(\eta^{-1}\circ\eta^\prime)$,
where $c$ is a scalar with modulus $1$, hence the adjoint actions of $U(\eta^\prime)$ and $U(\eta)$ on $\A(\rho_s(I))$ are the same
by the locality of the net.

We consider next different choices $s_1 < s_2$ of rotations. A rotation of $2\pi$ is
just the identity, thus we may assume that $s_2<2\pi$ and that, for any $s_1 \le s \le s_2$,
the interval $\rho_s(I)$ never contains $\infty$.
Then, for any point $p$ of $I$ and for any $0 \le t \le s_2-s_1$,
it holds that $\eta_0\circ\rho_{t}\circ\rho_{s_1}(p) = \tau_{t}\circ\eta_0\circ\rho_{s_1}(p)$.
The adjoint action of a diffeomorphism on $\A(I)$ is determined by the
action of the diffeomorphism on $I$ (by a similar argument to that in the previous
paragraph), so it holds that
\[
\Ad(U(\eta)) \circ \Ad(U(\rho_t)) |_{\A(\rho_{s_1}(I))}= \Ad(U(\tau_t)) \circ \Ad(U(\eta)) |_{\A(\rho_{s_1}(I))}.
\]
By assumption $\a$ commutes with $\Ad(U(\tau_t))$ for any $t$, hence, putting $t=s_2-s_1$, we have on $\A(I)$
\begin{multline*}
\Ad(U(\rho_{s_2}))^{-1} \circ \Ad(U(\eta))^{-1} \circ \a \circ  \Ad(U(\eta)) \circ \Ad(U(\rho_{s_2})) = \\
= \Ad(U(\rho_{s_1}))^{-1} \circ \Ad(U(\rho_{t}))^{-1} \circ \Ad(U(\eta))^{-1} \circ \a \circ  \Ad(U(\eta)) \circ \Ad(U(\rho_{t})) \circ \Ad(U(\rho_{s_1})) = \\
= \Ad(U(\rho_{s_1}))^{-1} \circ \Ad(U(\eta))^{-1} \circ \Ad(U(\tau_t))^{-1} \circ \a \circ \Ad(U(\tau_t)) \Ad(U(\eta)) \circ \Ad(U(\rho_{s_1})) = \\
= \Ad(U(\rho_{s_1}))^{-1} \circ \Ad(U(\eta))^{-1} \circ \a \circ \Ad(U(\eta)) \Ad(U(\rho_{s_1})).
\end{multline*}

This completes the proof of well-definedness of $\sigma_{\a,I}$.

Let us check the consistency w.r.t. inclusions of intervals. If $I \subset J$, then the $\eta$ and $\r_s$ chosen
for the larger interval $J$ still work also for $I$ and their action on $I$ is just a restriction.

To confirm that $\sigma_\a$ commutes with rotations, let us fix an interval $I$.
Let us choose $\eta$ and $s$ as above. If $t$ is small enough so that $\rho_t(\rho_s(I))$
does not touch $\infty$, then a similar calculation as above shows that
$\Ad(U(\rho_t))$ commutes with $\sigma_{\a,I}$. By repeating a small rotation we obtain arbitrary rotations.
We just have to check that the set of allowed $t$ above, for $\rho_{s'}(I)$ ($s'\in\RR$), depends on the
length of $I$ and not on the position of $\rho_{s'}(I)$. Indeed, for any $s'$, we can choose $s$ so that
$\rho_{s} \left(\rho_{s'}(I) \right)$ is at the same fixed distance from $\infty$.

Automorphisms on $S^1$ commuting with rotations (respectively on $\RR$ commuting with translations)
are unitarily equivalent if and only if they are conjugated by an inner symmetry
by Proposition \ref{uniquenessauto} (respectively Proposition \ref{uniquenessautor}).
An inner symmetry commutes with any diffeomorphism, on the other hand the correspondence
$\a \mapsto \sigma_\a$ is constructed with composition with diffeomorphisms. From this
it is immediate to see the last statement.
\end{proof}

Let us conclude this subsection with a uniqueness result for maximal rational nets.
\begin{theorem}\label{uniquenessmaximal}
If a net $\A$ is completely rational and maximal,
then it admits a unique KMS state, the geometric state $\geo$.
\end{theorem}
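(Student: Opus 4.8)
The plan is to reduce the statement to the structure theorem already obtained, namely Corollary \ref{cor:maximal-net-KMS-automorphisms}, and then to eliminate every nontrivial automorphism by a counting argument resting on complete rationality. First, since by \cite[Theorem 4.5]{Takesaki-Winnink} every KMS state decomposes into primary KMS states, it suffices to prove that $\geo$ is the unique \emph{primary} KMS state: a barycentric decomposition supported on the single state $\geo$ can only be $\geo$ itself, and $\geo$ is indeed primary by Theorem \ref{geometricconstruction}. So let $\varphi$ be an arbitrary primary KMS state. Because $\A$ is completely rational and maximal, Corollary \ref{cor:maximal-net-KMS-automorphisms} applies and lets us write $\varphi=\geo\circ\a$ for some automorphism $\a$ of $\A|_\RR$ that commutes with the translations $t\mapsto\Ad U(\tau_t)$ and preserves every local algebra $\A(I)$, $I\Subset\RR$.

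The key step is then to show that $\a$ must preserve the vacuum state $\w$. I would argue by contradiction: suppose $\w\circ\a\neq\w$ and consider the dilated automorphisms $\a_s:=\Ad U(\d_s)\circ\a\circ\Ad U(\d_{-s})$ for $s\in\RR_+$. Using the group relation $\d_s\circ\tau_t=\tau_{e^s t}\circ\d_s$ one checks that each $\a_s$ again commutes with translations and preserves each $\A(I)$, so each $\a_s$ falls within the scope of Proposition \ref{sectorauto}. By Lemma \ref{dilationinequivalence}, the assumption $\w\circ\a\neq\w$ forces the family $\{\a_s\}_{s\in\RR_+}$ to be mutually unitarily inequivalent. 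Applying the correspondence $\a\mapsto\s_\a$ of Proposition \ref{sectorauto}, together with its equivalence-preserving property, then produces an uncountable family $\{\s_{\a_s}\}_{s\in\RR_+}$ of mutually inequivalent automorphisms of $\A$ commuting with rotations, hence an uncountable family of pairwise distinct automorphic sectors.

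This is the crux of the proof, and the point where complete rationality enters decisively: by \cite{KLM} a completely rational net admits only finitely many sectors, contradicting the existence of uncountably many inequivalent $\s_{\a_s}$. Hence $\a$ must preserve $\w$. Once $\w\circ\a=\w$, Proposition \ref{geometricgaugeinvariance} yields that $\a$ also preserves the geometric state, i.e.\ $\geo\circ\a=\geo$, so that $\varphi=\geo\circ\a=\geo$. This shows every primary KMS state coincides with $\geo$, and combined with the decomposition argument of the first step it follows that $\geo$ is the unique KMS state. I expect the main obstacle to be purely in the assembly: all the analytic content is already packaged in the cited lemmas, and the only delicate point is verifying that the dilated $\a_s$ remain admissible automorphisms commuting with translations (so that both Lemma \ref{dilationinequivalence} and Proposition \ref{sectorauto} genuinely apply), after which the clash between a continuum of inequivalent sectors and the finiteness from \cite{KLM} closes the argument.
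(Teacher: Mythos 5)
Your proof is correct and follows essentially the same route as the paper: reduction to primary states, the representation $\varphi=\geo\circ\a$ from Corollary \ref{cor:maximal-net-KMS-automorphisms}, the dilated family $\{\a_s\}$ with Lemma \ref{dilationinequivalence} and Proposition \ref{sectorauto} clashing against the finiteness of sectors, and Proposition \ref{geometricgaugeinvariance} to conclude. The only (immaterial) difference is that the paper runs the last step in the contrapositive direction, assuming $\varphi\neq\geo$ and deducing $\w\circ\a\neq\w$, whereas you first establish $\w\circ\a=\w$ and then conclude.
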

\begin{proof}
We have seen in Cor. \ref{cor:maximal-net-KMS-automorphisms} that any primary KMS
state $\varphi$ on such a net
is a composition of the geometric state with an automorphism $\a$ commuting with
translations.
Let us assume that $\varphi$ were different from the geometric state.
Then by Proposition \ref{geometricgaugeinvariance}, $\a$ must change
the vacuum state $\omega$. Then Lemma \ref{dilationinequivalence} would imply that
all the automorphisms $\{\a_s\}$ are mutually unitarily inequivalent.
 From these automorphisms we could construct mutually inequivalent sectors
by Proposition \ref{sectorauto}. This contradicts with the finiteness of
the number of sectors in a completely rational net. Thus if a KMS state
$\varphi$ is primary, then it is the geometric state.

An arbitrary KMS state is a convex combination of primary KMS states
\cite{Takesaki-Winnink}, hence in this case the geometric state itself.
\end{proof}

\subsection{General completely rational nets}\label{general}
Here we show the uniqueness of KMS state for general completely rational nets.
In the previous section we have proved that any maximal completely rational net
admits only the geometric state. One would naturally expect that, if one has an
inclusion of nets with finite index, then every KMS state on the smaller net
should extend to the larger net, thereafter the uniqueness would follow from
the uniqueness for maximal nets. Unfortunately the present authors are not aware
of such a general statement. Instead, we will see that if we have a KMS state
then its thermal completion admits some KMS state. We repeat this procedure
and arrive at the maximal net, where any KMS state is geometric, and find
that the initial state was in fact geometric as well.
\subsubsection{Extension trick}\label{extensiontrick}
Let $\A$ be a completely rational net and $\varphi$ be a KMS state on $\A$.
In this case, as we saw in Theorem \ref{thm:thermal-completion-extension},
the thermal completion ${\hat{\A}_\varphi}$ of $\A$ with respect to $\varphi$ is identified with
an extension of the net $\A$. The objective here is to construct another KMS state
on ${\hat{\A}_\varphi}$.

By Lemma \ref{le:finiteindextc} and eq.\! \ref{eq:thermal-completion},
$\A_\varphi(a,b) \subset {\hat{\A}_\varphi}(e^{2\pi a},e^{2\pi b})$ is an irreducible finite index inclusion for each interval $(a,b)$;
therefore, there is a unique conditional expectation
\begin{equation*}
E_{(a,b)}: {\hat{\A}_\varphi}(e^{2\pi a},e^{2\pi b}) \longmapsto \A_\varphi(a,b).
\end{equation*}
It is easy to see that this is a consistent
family w.r.t. inclusions of intervals. We denote simply by $E$ the map defined on the closed union
$\overline{\bigcup_{I\Subset \RR_+} {\hat{\A}_\varphi}(I)}^{\|\cdot\|}$.
Let us define the state
\begin{equation}\label{eq:KMS-cond-expectation}
\hat{\omega} = \omega \circ \Exp\circ {\pi_\varphi}^{-1} \circ E
\end{equation}
on $\overline{\bigcup_{I\Subset \RR_+} {\hat{\A}_\varphi}(I)}^{\|\cdot\|}$.
We will show that $\hat{\omega}$ is a KMS state with respect to
dilations. We collect general remarks in \ref{irreducibleinclusion} and \ref{kmscondition}.

First of all,
we recall that the original net $\A$ in the vacuum representation is diffeomorphism covariant.
Even in the GNS representation ${\pi_\varphi}$ with respect to $\varphi$, as explained at the beginning of Section \ref{thermalcompletion}, local diffeomorphisms act covariantly on each intervals, implemented by $U_\varphi$:
$U_\varphi(\eta)\A_\varphi(I)U_\varphi(\eta)^* = \A_\varphi(\eta(I))$. Since the extended net
$\A^d_\varphi$ is defined as the relative commutant
$\A^d_\varphi(a,b) := \A_\varphi(a,\infty)\cap\A_\varphi(b,\infty)^\prime$,
local diffeomorphisms $U_\varphi$ respect the structure of intervals:
\begin{equation*}
U_\varphi(\eta)\A^d_\varphi(a,b)U_\varphi(\eta)^* = \A^d_\varphi(\eta(a),\eta(b)).
\end{equation*}
In particular, if a diffeomorphism $\eta$ preserves an interval of finite length
$I$, then it acts on $\A^d_\varphi(I)$ as an automorphism.

On the original net $\A$, we know that the modular automorphism of $\A(I)$ with respect
to the vacuum $\omega$ acts as the dilation associated to $I = (a,b)$. On $\A(I)$ such dilation can be
implemented by local diffeomorphisms $\eta_t$. In fact, the dilation preserves $I$, hence
it is enough to modify this outside $I$ so that the support is compact.
If we restrict $\hat{\o} = \omega \circ \Exp \circ \pi_\varphi^{-1} \circ E$ to
$\A_\varphi(a,b)$, where $\Exp$ is defined in Prop.\! \ref{pro:local-diffeom-geom-kms-state}, the modular automorphism is
\[
(\pi_\varphi\circ \Exp^{-1}) \circ\Ad U(\d^{\exp I}_t)\circ (\Exp \circ \pi_\varphi^{-1}),
\]
where
$\d^{\exp I}_t$ is the dilation associated to
$(e^{2\pi a},e^{2\pi b})$.
Take diffeomorphisms $\eta_t$ with
the condition specified above and notice that, although $\exp$ and $\log$ are diffeomorphisms only locally, $\log \circ \eta_t \circ \exp$ are global diffeomorphisms.
It holds on $\A_\varphi(a,b)$ that
\begin{multline*}
(\pi_\varphi\circ \Exp^{-1}) \circ\Ad U(\d^{\exp I}_t)\circ (\Exp \circ \pi_\varphi^{-1})
= (\pi_\varphi\circ \Exp^{-1}) \circ\Ad U(\eta_t)\circ (\Exp \circ \pi_\varphi^{-1}) = \\
= \pi_\varphi\circ \Ad U(\log \circ \eta_t \circ \exp) \circ \pi_\varphi^{-1}
= \Ad U_\varphi(\log \circ \eta_t \circ \exp).
\end{multline*}
By Lemma \ref{modularextension}, we see that $\Ad(U_\varphi(\log \circ \eta_t \circ \exp))$
is the modular automorphism of $\hat{\A}_\varphi(I)$ with respect to
$\hat{\o}$.
Let us assume that there is a sequence of local diffeomorphisms $\zeta^{I_n}_t$ supported
in $\RR_+$ whose actions
on $I_n := [\frac{1}{n},n]$ are dilation by $e^t$.
The adjoint action $\Ad(U_\varphi(\log \circ \zeta^{I_n}_t \circ \exp))$ of diffeomorphisms
on a local algebra $\hat{\A}_\varphi(e^{2\pi a},e^{2\pi b})$
is determined by the action of $\zeta^{I_n}_t$ on $(e^{2\pi a},e^{2\pi b})$,
hence we can consider the limit of these adjoint actions and we denote
it by $\s_t$.

On the other hand, translation on $\A_\varphi(a,b)$ is implemented
by unitaries $V_\varphi(t)$ in this GNS representation (note that a translation is not a local
diffeomorphism, hence we cannot define the representative through ${\pi_\varphi}$).
This in turn shows that $\Ad(V_\varphi(t))$ takes ${\hat{\A}_\varphi}(e^{2\pi a},e^{2\pi b})$ to
${\hat{\A}_\varphi}(e^{2\pi (a+t)},e^{2\pi (b+t)})$, by recalling the definition of ${\hat{\A}_\varphi}$.

We show that the two actions $\Ad(V_\varphi(t))$ and $\s_t$
are the same even on the thermal completion ${\hat{\A}_\varphi}$. In fact,
these two actions take ${\hat{\A}_\varphi}(e^{2\pi a},e^{2\pi b})$ to ${\hat{\A}_\varphi}(e^{2\pi (a+t)},e^{2\pi (b+t)})$,
hence the composition $\Ad(V_\varphi(t)) \circ \s^{-1}_t$ is an automorphism of ${\hat{\A}_\varphi}(e^{2\pi a},e^{2\pi b})$
and $\s$-weakly continuous.
It is obvious that this composition acts identically on $\A_\varphi(a,b)$, by considering
the two actions in the original representation, and if $t=0$ it is the identity.
Then by Lemma \ref{finiteautos}, second statement, it is constant for all $t$.

\begin{proposition}\label{newkms}
The state $\hat{\omega}$ on $\overline{\bigcup_{I\Subset \RR_+} {\hat{\A}_\varphi}(I)}^{\|\cdot\|}$ defined in (\ref{eq:KMS-cond-expectation}) is a KMS state with respect to dilations.
\end{proposition}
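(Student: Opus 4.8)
The plan is to obtain the KMS property of $\hat{\omega}$ with respect to dilations from the modular data already assembled, by exhibiting the global dilation flow as a limit of the local modular flows of $\hat{\omega}$ and passing the KMS condition to the limit. The ingredients are all in hand: locally $\hat{\omega}=\tilde{\varphi}_{\rm geo}\circ E$; by Lemma~\ref{modularextension} the modular automorphism group of $\hat{\omega}$ on each finite-length algebra $\hat{\A}_\varphi(I)$ is the geometric dilation associated to $I$; and the global dilation $\sigma_t$ is simultaneously the limit of these local dilations and equal to $\Ad V_\varphi(t)$.

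First I would record that $\hat{\omega}$ is $\sigma_t$-invariant: since $\sigma_t=\Ad V_\varphi(t)$ is geometric and preserves the subnet $\A_\varphi$, it commutes with the canonical conditional expectation $E$ of Lemma~\ref{le:finiteindextc}, and, using the invariance of $\tilde{\varphi}_{\rm geo}$ under its own dynamics, $\hat{\omega}\circ\sigma_t=\tilde{\varphi}_{\rm geo}\circ E\circ\sigma_t=\tilde{\varphi}_{\rm geo}\circ\sigma_t\circ E=\tilde{\varphi}_{\rm geo}\circ E=\hat{\omega}$. Next, for each $n$ the local dilation $\sigma^{I_n}_t:=\Ad U_\varphi(\log\circ\zeta^{I_n}_t\circ\exp)$ preserves $\hat{\A}_\varphi(I_n)$ and, by Lemma~\ref{modularextension}, is the modular automorphism group of $\hat{\omega}|_{\hat{\A}_\varphi(I_n)}$; hence this restriction is a KMS state with respect to $\sigma^{I_n}_t$, and for $x,y\in\hat{\A}_\varphi(I_n)$ the modular theory furnishes a function $f_n$, bounded and analytic on the strip, with $f_n(t)=\hat{\omega}(x\,\sigma^{I_n}_t(y))$ and $f_n(t+i\beta)=\hat{\omega}(\sigma^{I_n}_t(y)\,x)$. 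Since the dilation associated to $I_n=(\tfrac{1}{n},n)$ converges, as $n\to\infty$, to the dilation associated to $\RR_+$, one has $\sigma^{I_n}_t\to\sigma_t$, so for fixed $x,y$ (lying in some $\hat{\A}_\varphi(I_{n_0})$, hence in all $\hat{\A}_\varphi(I_n)$ with $n\ge n_0$) the $f_n$ are uniformly bounded by $\|x\|\,\|y\|$ and converge pointwise on the real axis to $t\mapsto\hat{\omega}(x\,\sigma_t(y))$. By Vitali's theorem they then converge, locally uniformly on the strip, to a bounded analytic function whose boundary values give the KMS condition for $\hat{\omega}$ with respect to the global dilation $\sigma_t$; norm density of $\bigcup_n\hat{\A}_\varphi(I_n)$ extends this to the whole quasi-local algebra.

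The crux, and the main obstacle, is precisely this local-to-global passage. Conceptually, the modular group of $\hat{\omega}$ on a finite interval is the dilation that \emph{fixes} that interval, whereas the dynamics for which we claim the KMS property is the global dilation $\sigma_t$, which moves intervals; knowing the modular flows on an increasing family of local algebras does not a priori determine the flow for which the union is KMS, and the limiting argument above is exactly what bridges the gap. Two points require care. First, the inverse temperatures of the local flows $\sigma^{I_n}$ must be matched so that $\sigma_t$ inherits the correct normalization; this is guaranteed by the Bisognano-Wichmann parametrization, the factor $e^{2\pi}$ being absorbed into $\Exp$ (Prop.~\ref{pro:local-diffeom-geom-kms-state}). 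Second, and more delicate, is the justification that $\hat{\omega}(x\,\sigma^{I_n}_t(y))\to\hat{\omega}(x\,\sigma_t(y))$: the adjoint actions $\sigma^{I_n}_t$ converge to $\sigma_t=\Ad V_\varphi(t)$ only in the strong operator topology, so passing the limit through the state requires an appropriate normality input on the half-line von Neumann algebra. I would handle this by working in the GNS representation of $\hat{\omega}$, where $\sigma_t$-invariance provides a unitary implementation $\hat{V}(t)$ fixing the GNS vector and the local modular flows are implemented by the modular unitaries; the convergence of these implementations, together with the uniform bound $\|x\|\,\|y\|$, then legitimizes the passage to the limit and, via Vitali's theorem, produces the required analytic KMS function.
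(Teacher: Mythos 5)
Your overall strategy coincides with the paper's: identify the modular flow of $\hat{\omega}$ on each $\hat{\A}_\varphi(\tfrac1n,n)$ with the local dilation implemented by $U_\varphi(\log\circ\zeta^{I_n}_t\circ\exp)$, observe that these flows converge to the global dilation $\sigma_t=\Ad V_\varphi(t)$, and pass the local KMS conditions to the limit. The difference is that the paper delegates the limiting step to the general machinery of Appendix C (Proposition \ref{kmsconvergence} for ``locally normal systems''), and it is exactly at this step that your argument has a genuine gap. You invoke Vitali's theorem for the functions $f_n$, which are analytic on the open strip $0<\Im z<\beta$ and bounded by $\|x\|\|y\|$, using their pointwise convergence \emph{on the real axis}. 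But the real axis is the boundary of the strip, not a subset of the open domain with an accumulation point inside it, so Vitali does not apply. Uniform boundedness plus pointwise convergence on one boundary line does not by itself yield convergence in the interior, and even granting interior convergence it does not identify the boundary values of the limit on the line $\Im z=\beta$, which is what the KMS condition requires. To make this work one needs \emph{locally uniform} (in $t$) convergence of both $\hat{\omega}(x\,\sigma^{I_n}_t(y))$ and $\hat{\omega}(\sigma^{I_n}_t(y)\,x)$ on the two boundary lines together with a Phragm\'en--Lindel\"of/harmonic-measure argument on the strip; note that the convergence cannot be uniform on all of $\RR$, since for large $|t|$ the dilation of $I_n$ diverges from the global dilation, so this is not a cosmetic point.

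The paper avoids this difficulty by a different analytic route: it first verifies the \emph{smeared} KMS condition (which only needs pointwise convergence plus dominated convergence under an integral), then passes to Gaussian-smeared analytic elements $y_\e$ for which the relevant functions are entire, and finally removes the smearing using the uniform estimate $|\hat{\omega}(x\,\sigma_t(y_\e-y))|^2\le\hat{\omega}(x^*x)\,\hat{\omega}((y_\e-y)^*(y_\e-y))$ --- where the $t$-dependence disappears by invariance of the state --- together with the three-line theorem. Your approach has no analogue of this uniform estimate, because $\sigma^{I_n}_t(y)-\sigma_t(y)$ is not of the form $\sigma_t(\,\cdot\,)$ applied to a $t$-independent element. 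A secondary, smaller point: the $*$-strong convergence $\sigma^{I_n}_t\to\sigma_t$ that you use is obtained in the paper from the smooth convergence of the local diffeomorphisms (Appendix \ref{localdiffeomorphisms}), which gives strong convergence of the unitary representatives; your appeal to ``convergence of the implementations'' in the GNS space of $\hat{\omega}$ presupposes essentially what has to be proved there.
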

\begin{proof}
To apply the general statement of Proposition \ref{kmsconvergence}
to the inclusion of factors
${\hat{\A}_\varphi}(\frac{1}{2},2) \subset {\hat{\A}_\varphi}(\frac{1}{3},3) \subset \cdots {\hat{\A}_\varphi}(\frac{1}{n},n) \subset \cdots$,
and $\hat{\omega}$, we need to confirm that for each interval $I \Subset \RR_+$
the action of the modular automorphisms of ${\hat{\A}_\varphi}(\frac{1}{n},n)$ with respect
to $\hat{\omega}$ (for sufficiently large $n$) on
${\hat{\A}_\varphi}(I)$ is *-strongly convergent and
the limit is normal.
As remarked above, the action of the modular automorphisms is implemented
by local diffeomorphisms $U_\varphi$ and by \ref{localdiffeomorphisms} we may assume
that these diffeomorphisms $\eta^{I_n}_t$ are smoothly convergent. Then the
representatives $U_\varphi(\log \circ \eta^{I_n}_t\circ \exp)$ are strongly convergent, hence
their adjoint actions are *-strongly convergent as well, and the limit is normal.

Moreover, in this way we find diffeomorphisms $\zeta^{I}_t = \lim_n \eta^{I_n}$
which appeared in the previous remarks. Thus, when $I$ tends to $(0,\infty)$,
the limit of these adjoint actions $\Ad(U_\varphi(\log \circ \zeta^{I}_t \circ \exp))$
is $\s_t$, which in turn is equal to $\Ad V_\varphi(t)$.
\end{proof}

\subsubsection{Proof of uniqueness}\label{proofofuniqueness}
We continue to use the same notations as in Section \ref{extensiontrick}.

\begin{lemma}\label{trivialitycriterion}
The extended state $\hat{\omega}$ is the vacuum if and only if
$\varphi$ is the geometric KMS state.
\end{lemma}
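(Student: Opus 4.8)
The plan is to compare $\hat\omega$ directly with the vacuum state of the thermal completion. Recall that the GNS vector $\Phi$ of $\varphi$ is precisely the vacuum vector of $\hat{\A}_\varphi$, so the vacuum state of the thermal completion, restricted to $\overline{\bigcup_{I\Subset\RR_+}\hat{\A}_\varphi(I)}^{\|\cdot\|}$, is the dilation-KMS state $\omega_{\hat{\A}_\varphi}(\cdot)=\langle\Phi,\cdot\,\Phi\rangle$; this is the state meant by ``the vacuum'', and by Proposition \ref{newkms} both it and $\hat\omega$ are KMS with respect to dilations. The key observation is that it suffices to test the equality $\hat\omega=\omega_{\hat{\A}_\varphi}$ on the distinguished subalgebras $\A_\varphi(a,b)\subset\hat{\A}_\varphi(e^{2\pi a},e^{2\pi b})$. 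On such a subalgebra the conditional expectation $E$ of \eqref{eq:KMS-cond-expectation} (which exists by the finite-index Lemma \ref{le:finiteindextc}) acts as the identity, so that for $x=\pi_\varphi(y)$ with $y\in\A(a,b)$ one computes $\hat\omega(x)=\omega(\Exp(y))=\geo(y)$, i.e.\ $\hat\omega|_{\A_\varphi(a,b)}=\geo\circ\pi_\varphi^{-1}$, whereas $\omega_{\hat{\A}_\varphi}(x)=\langle\Phi,\pi_\varphi(y)\Phi\rangle=\varphi(y)$, i.e.\ $\omega_{\hat{\A}_\varphi}|_{\A_\varphi(a,b)}=\varphi\circ\pi_\varphi^{-1}$.

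For the implication ``$\hat\omega$ is the vacuum $\Rightarrow$ $\varphi=\geo$'', I would simply equate the two restrictions just computed: $\geo\circ\pi_\varphi^{-1}=\varphi\circ\pi_\varphi^{-1}$ on $\A_\varphi(a,b)$, and since $\pi_\varphi$ is locally a faithful isomorphism this yields $\varphi=\geo$ on every $\A(a,b)$, hence on all of $\gA_\A$ by norm-density of $\bigcup_{I\Subset\RR}\A(I)$.

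For the converse, assume $\varphi=\geo$. Since $\A$ is completely rational it is strongly additive \cite{LX}, so by Remark \ref{essential-duality-geometric} and \eqref{eq:dual-net-geometric-KMS} we have $\A^d_{\geo}(a,b)=\A_{\geo}(a,b)$; equivalently the inclusion $\A_{\geo}(a,b)\subset\hat{\A}_{\geo}(e^{2\pi a},e^{2\pi b})$ is trivial and $E$ is the identity on the whole of $\overline{\bigcup_{I\Subset\RR_+}\hat{\A}_{\geo}(I)}^{\|\cdot\|}$. Now every bounded interval of $\RR_+$ has the form $(e^{2\pi a},e^{2\pi b})$, so every local algebra of $\hat{\A}_{\geo}$ equals some $\A_{\geo}(a,b)$, and the restriction computation above applies verbatim to all of them, giving $\hat\omega=\omega_{\hat{\A}_{\geo}}$ on the whole positive half-line algebra; that is, $\hat\omega$ is the vacuum.

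I do not expect a deep obstacle here: the content is essentially bookkeeping of the representation-theoretic identifications. The two points requiring care are (i) identifying ``the vacuum'' with $\langle\Phi,\cdot\,\Phi\rangle$ and recalling that $\Phi$ is simultaneously the GNS vector of $\varphi$ and the vacuum vector of $\hat{\A}_\varphi$, so that $\omega_{\hat{\A}_\varphi}\circ\pi_\varphi=\varphi$ locally; and (ii) justifying that $E$ collapses to the identity precisely in the geometric case via \eqref{eq:dual-net-geometric-KMS}, which in turn rests on the strong additivity of the completely rational net $\A$.
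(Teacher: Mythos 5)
Your proof is correct and follows essentially the same route as the paper's: both directions rest on the identifications $\hat\omega|_{\A_\varphi(a,b)}=\geo\circ\pi_\varphi^{-1}$ and $\langle\Phi,\cdot\,\Phi\rangle|_{\A_\varphi(a,b)}=\varphi\circ\pi_\varphi^{-1}$, with the converse using that $E$ is trivial in the geometric case. You spell out the triviality of $E$ via \eqref{eq:dual-net-geometric-KMS} slightly more explicitly than the paper does, but the argument is the same.
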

\begin{proof}
If $\varphi$ is geometric, then, as we saw in Section \ref{geometric}, we have
$\pi_{\geo} = \Exp$ and the conditional expectation $E$ is trivial. Hence
$\hat{\omega} = \omega \circ \Exp \circ \Exp^{-1} = \omega$.

Conversely, suppose that $\hat{\omega}$ is the vacuum of the extended net.
We note that
\[
\hat{\omega}|_{\A_\varphi(a,b)} = \omega \circ \Exp \circ {\pi_\varphi}^{-1}|_{\A_\varphi(a,b)} = \geo\circ{\pi_\varphi}^{-1}|_{\A_\varphi(a,b)},
\]
but the vacuum of the extended net is the vector state $\<\Phi, \cdot \Phi\>$;
when restricted to ${\A_\varphi(a,b)}$ we have $\<\Phi, \cdot \Phi\> = \varphi\circ{\pi_\varphi}^{-1}(\cdot)$.
Hence this in turn means that the initial KMS state $\varphi$ is in fact $\geo$.
\end{proof}

\begin{theorem}
Any completely rational net $\A$ admits only the geometric KMS state $\geo$.
\end{theorem}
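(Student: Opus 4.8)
The plan is to run the ``extension trick'' of Section \ref{extensiontrick} as the inductive step of an induction on the $\mu$-index, with the sector-counting argument of Theorem \ref{uniquenessmaximal} serving as the base case. Since every KMS state is a combination of primary ones \cite[Theorem 4.5]{Takesaki-Winnink}, it suffices to prove that every \emph{primary} KMS state $\varphi$ on a completely rational net $\A$ coincides with $\geo$ (the combination then being a combination of copies of $\geo$, hence $\geo$). I would establish this by strong induction on $\mu_\A$, the inductive hypothesis being that \emph{every} KMS state on a completely rational net of strictly smaller $\mu$-index is geometric.

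Fix a primary KMS state $\varphi$ and form its thermal completion $\hat{\A}_\varphi = \A^d_\varphi$, which by Theorem \ref{thm:thermal-completion-extension} is a conformal, irreducible, finite-index local extension of $\A$, hence again completely rational by \cite{Longo03}. I would split into two cases according to whether this extension is trivial. If $\A^d_\varphi = \A_\varphi$, then the hypotheses of the first theorem of Section \ref{thermalcompletion} are met (complete rationality supplies the split property and strong additivity), so $\varphi = \geo\circ\alpha$ for some $\alpha\in{\rm Aut}(\gA_\A)$ preserving each local algebra and commuting with translations. Here I would invoke the reasoning of Theorem \ref{uniquenessmaximal}, which nowhere uses maximality once $\varphi=\geo\circ\alpha$ is known: were $\alpha$ to move the vacuum state $\omega$, Lemma \ref{dilationinequivalence} together with Proposition \ref{sectorauto} would manufacture infinitely many inequivalent sectors, contradicting the finiteness of sectors of a completely rational net; hence $\alpha$ fixes $\omega$ and $\varphi = \geo\circ\alpha = \geo$ by Proposition \ref{geometricgaugeinvariance}.

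In the remaining case the extension is nontrivial. By Proposition \ref{newkms} the state $\hat{\omega}$ of \eqref{eq:KMS-cond-expectation} is a KMS state of $\hat{\A}_\varphi$ with respect to dilations, so Proposition \ref{dil-tra}, applied now to the net $\hat{\A}_\varphi$ with its own exponential map, yields a KMS state $\hat{\varphi}$ of $\hat{\A}_\varphi$ with respect to translations; Lemma \ref{trivialitycriterion} shows that $\hat{\varphi}$ is geometric for $\hat{\A}_\varphi$ exactly when $\varphi$ is geometric for $\A$. Because the extension is nontrivial, the index formula \cite[Prop.\! 24]{KLM} gives $\mu_{\hat{\A}_\varphi} = \mu_\A/[\hat{\A}_\varphi:\A]^2 < \mu_\A$, so the inductive hypothesis forces $\hat{\varphi}$ to be geometric, and therefore $\varphi$ is geometric as well. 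The induction is well founded since a nontrivial subfactor index is at least $2$ by Jones' theorem, so each nontrivial extension divides the $\mu$-index (which stays $\geq 1$) by at least $4$; thus no infinite descending chain of $\mu$-indices can occur.

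The step I expect to demand the most care is the one that actually powers the descent: verifying that the thermal completion of a \emph{non-geometric} primary state is a genuinely larger net, i.e.\! that $\A^d_\varphi = \A_\varphi$ cannot hold unless $\varphi = \geo$. This is exactly the contrapositive of the trivial-extension case above, and it is where complete rationality enters twice --- once through the sector count bounding the admissible automorphisms $\alpha$, and once through the $\mu$-index being finite and multiplicative under extensions. A secondary technical point to check is that $\hat{\varphi}$ is genuinely a (locally normal) KMS state of the new net and that its geometric versus non-geometric status is faithfully recorded by Lemma \ref{trivialitycriterion} under the identification of $\hat{\omega}$ as a dilation-KMS state of $\hat{\A}_\varphi$; once these are in place, the descent closes the argument.
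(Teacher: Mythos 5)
Your proof is correct and follows the paper's own argument essentially step for step: reduction to primary states, the thermal completion realized as a finite-index irreducible conformal extension, the dichotomy trivial/nontrivial extension, the extension trick producing the dilation-KMS state $\hat{\omega}$, and Lemma \ref{trivialitycriterion} to transport ``geometric'' back down the tower. The single point where you diverge is the descent parameter: the paper inducts on $N_\A$, the length of the longest chain of irreducible finite-index conformal extensions terminating in a maximal net (so its base case is literally Theorem \ref{uniquenessmaximal}), whereas you run a descent on $\mu_\A$ using $\mu_{\hat{\A}_\varphi}=\mu_\A/[\hat{\A}_\varphi:\A]^2$ from \cite[Prop.\! 24]{KLM} together with the Jones bound $[\hat{\A}_\varphi:\A]\geq 2$ for a nontrivial irreducible inclusion. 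Both terminations are sound; yours has the small advantage of not invoking the finiteness of the set of irreducible finite-index extensions of a completely rational net (which the paper needs in order to define $N_\A$), and it makes termination quantitative, since once $\mu_\A<4$ the thermal completion is forced to be trivial and your ``trivial extension'' branch --- which, as you correctly observe, never uses maximality once $\varphi=\geo\circ\alpha$ is in hand --- closes the argument exactly as the paper's own proof does in that case. One cosmetic remark: strong induction over a real-valued parameter is not literally well-founded, but your closing observation that each nontrivial step divides $\mu_\A$ by at least $4$ while $\mu$-indices stay $\geq 1$ converts the argument into a legitimate infinite descent, which is all that is needed.
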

\begin{proof}
Any completely rational net has only finitely many irreducible extensions with finite index.
Let us consider a sequence of conformal extensions $\A_1 := \A \subset \A_2 \subset \cdots \subset \A_n$,
where $\A_n$ is maximal. By the remarked finiteness of extensions, the number of such
sequences is finite. Let $N_\A$ be the length of the longest sequence. If $\A$ is maximal,
then $N_\A$ is $1$. We will show the theorem by induction with respect to $N_\A$.
For the case $N_\A = 1$ we have already proved the thesis in Theorem \ref{uniquenessmaximal}.

We assume that the proof is done for nets with $N_\A < k$.
Let $\varphi$ be a primary KMS state on $\A$, where $N_\A = k$.
We perform the thermal completion ${\hat{\A}_\varphi}$ with respect to $\varphi$.
If ${\hat{\A}_\varphi}$ is not a proper extension, the same reasoning as in Section \ref{maximal} shows
that $\varphi = \geo$. Hence we may assume that ${\hat{\A}_\varphi}$ is a proper extension of $\A$. Let
$\hat{\omega}$ be the KMS state on
$\overline{\bigcup_{I \Subset \RR_+} \hat{\A}_\varphi(I)}^{\|\cdot\|}$ with respect to dilations
of Prop. \ref{newkms}.
Recall that there is a one-to-one correspondence between KMS states on
the half-line with respect to dilation and KMS states on the real-line
with respect to translation (see Proposition \ref{dil-tra}).
By definition of $N$, $N_\A = k$ implies $N_{{\hat{\A}_\varphi}} < k$. It follows from the assumptions
of induction that ${\hat{\A}_\varphi}$ admits only one KMS state
on the half-line, hence $\hat{\omega}$ is the vacuum.
In this case Lemma \ref{trivialitycriterion} tells us that the primary KMS state $\varphi$
is the geometric state on $\A$.
An arbitrary KMS state is a convex combination of
primary states, hence it is necessarily geometric.
This concludes the induction.
\end{proof}

\subsection{The uniqueness of KMS state for extensions}
In this section we consider the following situation. Let $\A \subset \B$
be a finite-index inclusion of conformal nets. We assume that $\A$ admits
a unique KMS state. Any conformal net
has the geometric KMS state $\geo$, hence the unique state is this.
We will show that the geometric state on $\A$ extends only to the geometric
state on $\B$; in other words $\B$ admits a unique KMS state, too.

We note that the construction of the geometric KMS state works for any diffeomorphism covariant net (thus relatively local w.r.t. the Virasoro subnet).
The result in this section is true even if $\B$ is not necessarily local.
We will use this fact for the analysis of two-dimensional conformal nets in next section.
\begin{theorem}\label{uniext}
If $\A$ admits a unique KMS state and $\A \subset \B$ is of finite index,
then $\B$ admits a unique KMS state as well (which is again the geometric state).
\end{theorem}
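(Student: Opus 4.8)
The plan is to reduce the whole statement to a single assertion: every translation-KMS state on $\B$ factors through the canonical conditional expectation onto $\A$. First I would record the expectation itself. Since the index is finite the inclusion $\A(I)\subset\B(I)$ is irreducible, so for each interval there is a \emph{unique} conditional expectation $E_I\colon\B(I)\to\A(I)$; these cohere into one conditional expectation $E\colon\gA_\B\to\gA_\A$, which by its very uniqueness is covariant under the translations and under the local diffeomorphisms, and preserves the vacuum $\omega$.

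Next come two easy reductions. If $\psi$ is any KMS state on $\gA_\B$ with respect to translations, then, $\gA_\A$ being invariant under the dynamics, the restriction $\psi|_{\gA_\A}$ is a locally normal translation-KMS state on $\A$; by hypothesis $\psi|_{\gA_\A}=\geo^\A$, the unique geometric state of $\A$. On the other side, the geometric state $\geo^\B=\omega\circ\Exp$ of $\B$ restricts to $\geo^\A$ as well (the vacuum and the map $\Exp$ are common to $\A$ and $\B$), and it satisfies $\geo^\B\circ E=\geo^\B$: indeed $E$ intertwines $\Exp$ with the corresponding expectation on $\gA(\RR_+)$ by its covariance under the diffeomorphisms defining $\Exp$, and $\omega$ is preserved by conditional expectations, so $\geo^\B\circ E=\omega\circ\Exp\circ E=\omega\circ\Exp=\geo^\B$, whence $\geo^\B=(\geo^\B|_{\gA_\A})\circ E=\geo^\A\circ E$. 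Therefore it is enough to prove that an \emph{arbitrary} primary translation-KMS state $\psi$ likewise obeys $\psi=\psi\circ E$: for then $\psi=(\psi|_{\gA_\A})\circ E=\geo^\A\circ E=\geo^\B$, and the non-primary case follows from the decomposition of KMS states into primary ones.

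To establish $\psi=\psi\circ E$ I would pass to the GNS representation $\pi_\psi$ with vector $\Phi$. Here $M:=\pi_\psi(\gA_\B)''$ is a factor by primarity, $N:=\pi_\psi(\gA_\A)''$, and, as recalled in Section~\ref{kmsstates}, the modular group of $\Phi$ is exactly the implemented translation group $\Ad V_\psi(t)$, which leaves $N$ globally invariant. By Takesaki's theorem there is then a unique $\psi$-preserving normal conditional expectation $\E\colon M\to N$, so that $\psi=\psi\circ\E$. Locally, $\pi_\psi$ is an isomorphism of $\A(I)\subset\B(I)$ onto $\pi_\psi(\A(I))\subset\pi_\psi(\B(I))$, and irreducibility forces a \emph{unique} local expectation there; hence, once one knows that $\E$ respects the interval structure, i.e.\! $\E(\pi_\psi(\B(I)))\subseteq\pi_\psi(\A(I))$, one gets $\E=\pi_\psi\circ E_I\circ\pi_\psi^{-1}$ locally, so $\E$ transported back is $E$ and $\psi=\psi\circ\E=\psi\circ E$.

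I expect this last identification to be the main obstacle, and I believe finiteness of the index must be used in an essential way: for an infinite-index extension such as the $U(1)$-current net of Part~II the analogous state need \emph{not} factor through $E$, the discrepancy being precisely a nonzero ``chemical potential''. The natural way to rule this out is through uniqueness of the conditional expectation: the finite-index inclusion $N\subset M$ is again irreducible, so it carries at most one normal conditional expectation, which therefore must be $\E$ and must coincide locally with the net expectation. The subtle point is that this reasoning has to avoid Haag duality, because $\B$ is allowed to be non-local; the virtue of the uniqueness of the (minimal) conditional expectation attached to a finite index is exactly that it requires no locality of $\B$, and it is this feature that I would exploit to close the argument. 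Having obtained $\E=E$, one reads off $\psi=\psi\circ E=\geo^\A\circ E=\geo^\B$, which proves that $\geo^\B$ is the only KMS state of $\B$.
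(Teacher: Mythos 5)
Your reductions are sound and agree with the paper's setup: the unique local expectations $E_I$ cohere into a translation-covariant $E$, the restriction $\psi|_{\gA_\A}$ of any KMS state $\psi$ on $\B$ must equal $\geo^\A$ by hypothesis, and the geometric state of $\B$ is $\geo^\A\circ E$. The problem is that everything then hinges on the identity $\psi=\psi\circ E$, and the argument you sketch for it has a genuine gap precisely at the point you yourself flag as ``the main obstacle.'' Takesaki's theorem does give a normal $\psi$-preserving expectation $\E\colon M\to N$ with $M=\pi_\psi(\gA_\B)''$ and $N=\pi_\psi(\gA_\A)''$, but to conclude $\psi=\psi\circ E$ you must identify $\E$ on each $\pi_\psi(\B(I))$ with $\pi_\psi\circ E_I\circ\pi_\psi^{-1}$, and nothing you say delivers this. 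The uniqueness statement you invoke is for the \emph{global} inclusion $N\subset M$, whose irreducibility and finiteness of index are asserted but not proved (local irreducibility $\A(I)'\cap\B(I)=\CC\1$ does not transport to $N'\cap M=\CC\1$ in a thermal representation); and even granting a unique normal expectation $M\to N$, you would still have to exhibit a normal expectation $M\to N$ that \emph{does} restrict to the $E_I$'s --- i.e.\! a global normal extension of the net expectation in $\pi_\psi$ --- before uniqueness can force $\E$ to localize. That construction is the actual content of the claim and is missing. (That the claim $\psi=\psi\circ E$ is \emph{true} only follows a posteriori from the theorem itself, so it cannot be assumed.)

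The paper closes this loop by an entirely different and much shorter device: the Pimsner--Popa inequality. Since the index is finite, $E(x)\geq\lambda x$ for positive $x$ and some $\lambda>0$; hence, writing $\varphi=\geo^\A\circ E$ for the geometric state of $\B$ and using $\psi|_{\gA_\A}=\geo^\A$, one gets $\varphi(x)=\psi(E(x))\geq\lambda\psi(x)$ for all positive $x$. Thus $\psi$ is dominated by a multiple of $\varphi$, and since $\varphi$ is an extremal KMS state this forces $\psi=\varphi$ --- no identification of conditional expectations, no primarity decomposition, and no locality of $\B$ is needed. If you want to salvage your route, you would need to supply the global irreducibility/finite-index statement for $N\subset M$ and the normal extension of $E$ to $M$; the domination argument renders all of that unnecessary.
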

\begin{proof}
Let $\varphi_0$ be the unique KMS state of $\A$, namely the geometric state of $\A$. By construction, with $E$ the unique conditional expectation of $\B$ onto $\A$, the geometric KMS state $\varphi$ of $\B$ satisfies
\[
\varphi = \varphi_0\circ E\ .
\]
Let $\psi$ be a KMS state on $\B$. By the uniqueness of the KMS state on $\A$ we have
\[
\psi|_\A =\varphi_0 = \varphi|_\A \ .
\]
Let $\l > 0$ be the Pimsner-Popa bound for $E$, we have
\[
\varphi(x) = \varphi_0\circ E(x) = \psi\circ E(x) \geq \lambda\psi(x)
\]
for all positive elements $x\in\B$. Therefore $\psi$ is dominated by $\varphi$. As $\varphi$ is extremal, being the geometric KMS state, we then have $\psi=\varphi$.
\end{proof}

\section{KMS states for two-dimensional nets}
\label{Sect2}
We begin to recall the basic definitions and properties of a conformal net on the two-dimensional Minkowski spacetime. We refer to \cite{KL2} for more details and proofs.

Let $\M$ be the two-dimensional Minkowski spacetime,
namely $\mathbb R^2$ equipped with the metric
$\text{d}t^2 - \text{d}x^2$. We shall also use the lightray coordinates
$\xi_{\pm}\equiv t \pm x$. We have the decomposition
$\M=\L_+\times\L_-$ where $\L_{\pm}=\{\xi:\xi_{\pm}=0\}$ are the two
lightrays. A \emph{double cone} $\O$ is a non-empty open subset of
of $\M$ of the form $\O=I_+\times I_-$ with $I_{\pm}\subset\L_{\pm}$
bounded open intervals; we denote by $\K$ the set of double cones.

The M\"{o}bius group $\psl2r$ acts on $\mathbb
R\cup\{\infty\}$ by linear fractional transformations, hence this
action restricts to a local action on $\mathbb R$.  We then
have a local (product) action of
$\psl2r\times \psl2r$ on
$\M=\L_+\times\L_-$.

A \emph{local M\"{o}bius covariant net} $\A$ on $\M$ is a map
\[
\A:\O\in\K\mapsto\A(\O)
\]
where the $\A(\O)$'s are von Neumann algebras on a fixed Hilbert
space $\H$, with the following properties:
\begin{itemize}

\item \emph{Isotony.} $\O_1\subset\O_2\implies
\A(\O_1)\subset\A(\O_2)$.

\item \emph{Locality.} If $\O_1$ and $\O_2$ are spacelike separated
then $\A(\O_1)$ and $\A(\O_2)$ commute elementwise (two points $\xi_1$
and $\xi_2$ are spacelike if $(\xi_1 -\xi_2)_+(\xi_1 -\xi_2)_- <0$).

\item \emph{M\"{o}bius covariance.} There exists a unitary representation
$U$ of $\overline{\psl2r}\times\overline{\psl2r}$
on $\H$ such that, for every double cone $\O\in\K$,
\[
U(g)\A(\O)U(g)^{-1} = \A(g\O),\quad g\in\U,
\]
with $\U\subset\overline{\psl2r}\times\overline{\psl2r}$
any connected neighborhood of the identity
such that $g\O\subset\M$ for all $g\in\U$. Here $\overline{\psl2r}$
denotes the universal cover of $\psl2r$.

\item \emph{Vacuum vector.} There exists a unit $U$-invariant vector
$\Omega$, cyclic for $\bigcup_{\O\in\K}\A(\O)$.

\item \emph{Positive energy.} The one-parameter unitary subgroup of $U$
corresponding to time translations has positive generator.

\end{itemize}

The net $\A$ promotes to a local net on the Einstein cylinder $\E=\RR\times S^1$, covariant w.r.t. a suitable cover of  $\overline{\psl2r}\times\overline{\psl2r}$.
We shall always assume our nets to be
\emph{irreducible}.

A \emph{local conformal net} $\A$ on $\M$ is a M\"{o}bius covariant
net such that the unitary representation $U$ extends to a
projective unitary representation of $\text{Conf}(\E)$, the group of global, orientation preserving conformal diffeomorphisms of $\E$. In particular
\[
U(g)\A(\O)U(g)^{-1} = \A(g\O),\quad g\in\U\ ,
\]
if $\U$ is a connected neighborhood of the identity of
$\text{Conf}(\E)$, $\O\in\K$, and $g\O\subset\M$ for all
$g\in\U$. We further assume that
\begin{equation}\label{loc}
U(g)XU(g)^{-1} = X,\quad g\in\Diff(\RR)\times\Diff(\RR)\ ,
\end{equation}
if $X\in\A(\O_1)$, $g\in\Diff(\RR)\times\Diff(\RR)$ and $g$ acts
identically on $\O_1$. We may check the conformal covariance on $\M$
by the local action of $\Diff(\RR)\times\Diff(\RR)$.

Given a M\"{o}bius covariant net $\A$ on $\M$ and a bounded interval
$I\subset\L_{+}$  we set
\begin{equation}\label{movers}
\A_{+}(I)\equiv \bigcap_{\O=I\times J}\A(\O)
\end{equation}
(intersection over all intervals $J\subset \L_-$), and analogously
define $\A_-$.
By identifying $\L_{\pm}$ with $\RR$ we then get two
local nets $\A_{\pm}$ on $\RR$, \emph{the chiral components of $\A$}.
They extend to local nets on $S^1$ which satisfy the axioms of M\"{o}bius covariant local nets,
but for the cyclicity of $\Omega$. We shall also denote $\A_{\pm}$
by $\A_R$ and $\A_L$.
By the Reeh-Schlieder theorem the cyclic subspace
$\H_{\pm}\equiv\overline{\A_{\pm}(I)\Omega}$
is independent of the interval $I\subset\L_{\pm}$ and $\A_{\pm}$
restricts to a (cyclic) M\"{o}bius covariant local net on the
Hilbert space $\H_{\pm}$. Since $\Omega$ is separating for every
$\A(\O)$, $\O\in\K$, the map $X\in\A_{\pm}(I)\mapsto
X\restriction\H_{\pm}$ is an isomorphism for any interval $I$, so we
will often identify $\A_{\pm}$ with its restriction to $\H_{\pm}$.
\begin{proposition}\label{tensor}
Let $\A$ be a local conformal net on $\M$.
Setting $\A_0(\O)\equiv \A_+(I_+)\vee\A_-(I_-)$, $\O=I_+\times I_-$,
then $\A_0$ is a  conformal, irreducible subnet of
$\A$. There exists a consistent family of vacuum preserving
conditional expectations $\epsilon_{\O}:\A(O)\to\A_0(\O)$ and the
natural isomorphism from the product $\A_+(I_+)\cdot\A_-(I_-)$ to the
algebraic tensor product $\A_+(I_+)\odot\A_-(I_-)$
extends to a normal isomorphism between $\A_+(I_+)\vee\A_-(I_-)$ and
$\A_+(I_+)\otimes\A_-(I_-)$.
\end{proposition}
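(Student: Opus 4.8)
The plan is to exhibit $\A_0$ as a covariant subnet on whose local algebras the vacuum factorizes, and then to read off both the conditional expectation and the tensor splitting from the Bisognano--Wichmann property of $\A$ together with Takesaki's theorem.

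First I would dispose of the elementary points. If $\O=I_+\times I_-$ then, taking $J=I_-$ in \eqref{movers}, $\A_+(I_+)\subset\A(\O)$ and likewise $\A_-(I_-)\subset\A(\O)$, so $\A_0(\O)\subset\A(\O)$ and $\Omega$ is separating for $\A_0(\O)$. Isotony of $\A_0$ is immediate, and since $U(g_+,g_-)\A_\pm(I_\pm)U(g_+,g_-)^{-1}=\A_\pm(g_\pm I_\pm)$ (intersect the corresponding relation for $\A$ over the unconstrained coordinate), $\A_0$ is M\"obius covariant for the same $U$, and covariant under $\Diff(\RR)\times\Diff(\RR)$ using \eqref{loc}, hence conformal. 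Moreover the two chiral algebras commute: choosing an interval $K\subset\L_+$ to the right of $I_+$ and an interval $J\subset\L_-$ to the right of $I_-$, the double cones $I_+\times J$ and $K\times I_-$ are spacelike separated, and since $\A_+(I_+)\subset\A(I_+\times J)$ and $\A_-(I_-)\subset\A(K\times I_-)$, locality gives $[\A_+(I_+),\A_-(I_-)]=0$.

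For the conditional expectation I would invoke Bisognano--Wichmann for $\A$: the modular group of $(\A(\O),\Omega)$ is the one-parameter group $U(\Lambda_\O(t))$ of conformal transformations preserving $\O$, whose chiral components are the dilations $\lambda_{I_\pm}$ associated with $I_\pm$. Now $U_+(\lambda_{I_+}(t))$ maps $\A_+(I_+)$ onto itself and also preserves $\A_-(I_-)=\bigcap_K\A(K\times I_-)$ (as $K\mapsto\lambda_{I_+}(t)K$ runs over all intervals), and symmetrically for $U_-$; hence $\sigma^{\A(\O)}_t$ preserves $\A_0(\O)$. Takesaki's theorem then yields a unique $\omega$-preserving normal conditional expectation $\epsilon_\O\colon\A(\O)\to\A_0(\O)$. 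The family $\{\epsilon_\O\}$ is consistent with respect to inclusions: this is the canonical expectation attached to a covariant subnet (cf.\ Section~\ref{subnets}), its consistency following from covariance and the uniqueness of $\omega$-preserving expectations.

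The heart of the matter is the tensor splitting. Write $M=\A_+(I_+)$, $N=\A_-(I_-)$; these are type ${\rm III}_1$ factors, since their restrictions to $\H_\pm$ are such and these restrictions are isomorphisms. Working on $\H_0=\overline{\A_0(\O)\Omega}$, where $\Omega$ is cyclic and separating for $\A_0(\O)$, the modular group of $(\A_0(\O),\Omega)$ equals $\sigma^{\A(\O)}_t|_{\A_0(\O)}$ because $\epsilon_\O$ preserves $\omega$; by the previous paragraph it preserves both $M$ and $N$, so Takesaki again produces $\omega$-preserving expectations $E_M\colon\A_0(\O)\to M$ and $E_N\colon\A_0(\O)\to N$. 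Since $N\subset M'$ and $M$ is a factor, for $n\in N$ and every $m\in M$ one has $mE_M(n)=E_M(mn)=E_M(nm)=E_M(n)m$, so $E_M(n)\in M'\cap M=\CC\1$, and as $E_M$ preserves $\omega$ this forces $E_M(n)=\omega(n)\1$. Hence for $m\in M$, $n\in N$,
\[
\omega(mn)=\omega\bigl(E_M(mn)\bigr)=\omega\bigl(mE_M(n)\bigr)=\omega(m)\,\omega(n),
\]
so $\omega$ restricts to a product state on $M\vee N$. The $*$-homomorphism $M\odot N\to M\vee N$, $m\otimes n\mapsto mn$, extends to a normal surjection $M\otimes N\to M\vee N$ intertwining $\omega_M\otimes\omega_N$ with $\omega$; since $\Omega$ is separating, $\omega_M,\omega_N$ are faithful, so $\omega_M\otimes\omega_N$ is faithful, its kernel is therefore trivial, and the surjection is the desired normal isomorphism. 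Finally, because $\omega$ is a product state the vacuum Hilbert space of $\A_0$ is $\H_+\otimes\H_-$ and $\A_0\cong\A_+\otimes\A_-$ as nets; as each chiral component is irreducible on $\H_\pm$, the net $\A_0$ is irreducible.

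The main obstacle is exactly the injectivity in the last step: the homomorphism $M\odot N\to M\vee N$ is automatic, but upgrading it to an isomorphism of von Neumann algebras requires the chiral algebras to be \emph{statistically independent} in the vacuum, for which there is no a priori reason. This independence is forced by the geometric, product form of the modular flow delivered by Bisognano--Wichmann together with the factoriality of the chiral local algebras, through the identity $E_M|_N=\omega(\cdot)\1$. The remaining points---covariance of $\A_0$, the spacelike-commutation of the chiral algebras, and the consistency of the family $\{\epsilon_\O\}$---are comparatively routine.
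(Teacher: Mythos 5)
The paper itself offers no proof of this proposition --- it is imported from \cite{KL2} --- so your proposal has to be judged against the standard argument, and in its core it reproduces that argument correctly: the geometric (Bisognano--Wichmann) modular action of $(\A(\O),\Omega)$ preserves the chiral algebras, Takesaki's theorem yields the vacuum-preserving expectations, and the identity $E_M|_N=\omega(\cdot)\1$ (forced by the bimodule property of $E_M$ and factoriality of $M=\A_+(I_+)$) shows that $\omega$ is a product state across the two commuting factors. That is indeed the heart of the matter. One point, however, is genuinely missing: \emph{irreducibility}. In this paper ``irreducible'' for a subnet means $\A_0(\O)'\cap\A(\O)=\CC\1$ (see Section \ref{subnets}), and this is exactly what is needed in Proposition \ref{cr2} to conclude that $\A$ is a finite-index extension of $\A_0$. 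Your closing remark only shows that $\A_0$ restricted to $\H_0$ has trivial commutant; that says nothing about elements of $\A(\O)$ outside $\A_0(\O)$ which commute with $\A_0(\O)$. Note that your own machinery gives only $\epsilon_\O(x)\in\A_0(\O)\cap\A_0(\O)'=\CC\1$ for $x$ in the relative commutant, which does not make $x$ itself a scalar; a separate argument is required here.

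A second, more repairable, defect is the final step: ``extends to a normal surjection $M\otimes N\to M\vee N$'' assumes precisely what has to be proved. For commuting von Neumann algebras the homomorphism $m\otimes n\mapsto mn$ on $M\odot N$ does \emph{not} in general extend normally to the von Neumann tensor product (take $N=M'$ with $M$ a type ${\rm III}$ factor and $M\vee M'=B(\H)$: a normal surjection from a type ${\rm III}$ algebra onto a type ${\rm I}$ factor is impossible). The correct way to cash in your product-state identity is spatial: $\omega(n_1^*m_1^*m_2n_2)=\omega(m_1^*m_2)\,\omega(n_1^*n_2)$ shows that $W\colon mn\Omega\mapsto m\Omega\otimes n\Omega$ is a well-defined unitary from $\H_0$ onto $\H_+\otimes\H_-$ with $W(mn)W^*=(m|_{\H_+})\otimes(n|_{\H_-})$, and since $\Omega$ is separating, the restriction of $M\vee N$ to $\H_0$ is faithful, so $\Ad(W)$ is the desired normal isomorphism. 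With that repackaging --- and with a citation for the two-dimensional Bisognano--Wichmann property (e.g.\ \cite{BGL}), which you invoke but do not justify, plus a small isotony argument to see that the chiral dilations really permute the family of double cones $I_+\times J$ cofinally --- the expectation and tensor-splitting parts of your proof are sound and follow the route of \cite{KL2}.
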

Thus we may identify $\H_+\otimes\H_-$ with
$\H_0\equiv\overline{\A_0(\O)\Omega}$ and $\A_+(I_+)\otimes\A_-(I_-)$ with
$\A_0(\O)$.

Let $\A$ be a local conformal net on the two-dimensional Minkowski
spacetime $\M$. We shall say that $\A$ is \emph{completely rational}
if the two associated chiral nets $\A_\pm$ in \eqref{movers} are completely rational.

\begin{proposition}\label{cr2}
If $\A$ is completely rational the following three conditions hold:
\begin{itemize}
\item[$a)$] \emph{Haag duality on $\M$.} For any double cone $\O$ we
have $\A(\O)=\A(\O')'$. Here $\O'$ is the causal complement of $\O$
in $\M$
\item[$b)$] \emph{Split property.} If $\O_1 , \O_2 \in\K$ and the
closure $\bar\O_1$ of $\O_1$ is contained in $\O_2$, the natural map
$\A(\O_1)\cdot\A(\O_2)'\to \A(\O_1)\odot\A(\O_2)'$ extends to
a normal isomorphism $\A(\O_1)\vee\A(\O_2)'\to \A(\O_1)\otimes\A(\O_2)'$.
\item[$c)$] \emph{Finite $\mu$-index.} Let $E=\O_1\cup\O_2\subset\M$
be the union of two double cones $\O_1 , \O_2$ such that $\bar\O_1$
and $\bar\O_2$ are spacelike separated. Then the Jones index
$[\A(E')':\A(E)]$ is finite. This index is denoted by $\mu_\A$,
\emph{the $\mu$-index of $\A$}.
\end{itemize}
\end{proposition}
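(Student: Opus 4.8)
The plan is to reduce everything to the chiral components via Proposition \ref{tensor}. By definition $\A$ is completely rational exactly when $\A_+$ and $\A_-$ are completely rational nets on $S^1$, so they satisfy the split property, strong additivity and finite $\mu$-index. I would first prove the three asserted properties for the tensor product subnet $\A_0(\O)\equiv\A_+(I_+)\otimes\A_-(I_-)$, $\O=I_+\times I_-$, and then transport them along the inclusion $\A_0\subset\A$. The argument thus divides into a tensor-product step (chiral 1D properties $\Rightarrow$ 2D properties of $\A_0$) and a finite-extension step (the properties are stable under the finite-index extension $\A_0\subset\A$).

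For the tensor-product step the split property of $\A_0$ is immediate: if $\O_1\Subset\O_2$ with $\O_i=I^i_+\times I^i_-$, pick interpolating type I factors $F_\pm$, $\A_\pm(I^1_\pm)\subset F_\pm\subset\A_\pm(I^2_\pm)$, by the chiral split property, and use $F_+\otimes F_-$. Finiteness of the $\mu$-index of $\A_0$ follows from multiplicativity of the $\mu$-index under tensor products (\cite{KLM}): choosing the two spacelike double cones so that the two-double-cone inclusion for $\A_0$ factorizes as the tensor product of the two chiral two-interval inclusions gives $\mu_{\A_0}=\mu_{\A_+}\mu_{\A_-}<\infty$. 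The substantive point is Haag duality on $\M$ for $\A_0$. Here I would compute, by the commutation theorem and chiral Haag duality, $\A_0(\O)'=\A_+(I_+)'\otimes\A_-(I_-)'=\A_+(I'_+)\otimes\A_-(I'_-)$, and compare this with $\A_0(\O')$. Since the causal complement $\O'$ of a double cone is the disjoint union of the left and right spacelike wedges, its trace on each lightray is the complementary interval $I'_\pm$ \emph{with the point at infinity deleted}; thus $\A_0(\O')$ is generated by the algebras attached to the two connected components of $I'_+$ (resp. $I'_-$). Strong additivity of $\A_\pm$ is precisely what recovers the full $\A_\pm(I'_\pm)$ from these two pieces, yielding $\A_0(\O')'=\A_0(\O)$, i.e. Haag duality on $\M$.

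For the finite-extension step, Proposition \ref{tensor} gives an \emph{irreducible} inclusion $\A_0\subset\A$ with a consistent family of vacuum-preserving conditional expectations. Because $\A_\pm$ are completely rational, the sector category of $\A_0=\A_+\otimes\A_-$ is a modular tensor category with finitely many simple objects, all of finite statistical dimension; consequently the local irreducible extension $\A_0\subset\A$ is governed by a Q-system in this finite category, and the index $[\A:\A_0]$ is finite (cf. \cite{KL2,KLM,Longo03}). Granted this, the three properties transfer to $\A$: the split property passes to finite-index extensions (\cite{Longo03}); Haag duality $\A(\O)=\A(\O')'$ on $\M$ follows from Haag duality for $\A_0$ together with the conditional expectation and finiteness of the index; and the $\mu$-index of $\A$ is finite by the relation $\mu_{\A_0}=[\A:\A_0]^2\,\mu_\A$ of \cite[Prop.\! 24]{KLM}, whence $\mu_\A=\mu_{\A_+}\mu_{\A_-}/[\A:\A_0]^2<\infty$.

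I expect the two genuinely delicate points to be the Haag duality computation for $\A_0$ and the a priori finiteness of $[\A:\A_0]$. The former is not a formal tensor-product identity: without strong additivity the join of the algebras attached to the two components of the punctured complement $I'_\pm$ would in general be strictly smaller than $\A_\pm(I'_\pm)$, and Haag duality on $\M$ would fail. The latter is where complete rationality enters most essentially — it is exactly the finiteness of the number of sectors, all of finite dimension, for the chiral nets that rules out an infinite-index local extension; making this reduction to a Q-system rigorous, rather than assuming finite index from the outset, is the heart of the matter.
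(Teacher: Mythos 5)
Your proposal follows exactly the route of the paper's (very terse) proof: first verify $a)$, $b)$, $c)$ for the tensor product net $\A_0=\A_+\otimes\A_-$ using the chiral split property, strong additivity and finite $\mu$-index, and then transport them along the irreducible finite-index extension $\A_0\subset\A$, with the finiteness of $[\A:\A_0]$ coming from complete rationality of $\A_0$ and the transfer arguments being those of the chiral case in \cite{KL,KLM,Longo03}. The details you supply (strong additivity as the substantive input for Haag duality of $\A_0$, the index relation $\mu_{\A_0}=[\A:\A_0]^2\mu_\A$) are precisely what the paper leaves implicit, so this is essentially the same proof.
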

\proof
One immediately checks that the three properties $a), b), c)$ are satisfied for the two-dimensional net $\A_0 = \A_+\otimes\A_-$ which is completely rational. Then $\A$ is an irreducible extension of $\A_0$ (see \cite{KL2}) that must be of finite-index, and this implies that $\A$ satisfies $a), b), c)$ too, by the same arguments as in the chiral case, cf. \cite{KL}.
\endproof

With $\A$ a local conformal net as above, we consider the quasi-local $C^*$-algebra $\mathfrak A \equiv\overline{\cup_{\O\in\K}\A(\O)}$ (norm closure) and the time translation one-parameter automorphism group $\tau$ of $\mathfrak A$. We have
\begin{theorem}
If $\A$ is completely rational, there exists a unique KMS state $\varphi$ of $\mathfrak A$ w.r.t. $\tau$. $\varphi$ is the lift by the conditional expectation of the geometric KMS state of $\A_0$.
\end{theorem}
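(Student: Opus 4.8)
The plan is to reduce the two-dimensional problem to the chiral uniqueness result already established in Section \ref{general}. By Proposition \ref{tensor} the net $\A$ contains the subnet $\A_0 = \A_+\otimes\A_-$ together with a vacuum-preserving conditional expectation $\epsilon\colon\A\to\A_0$, and, as recalled in the proof of Proposition \ref{cr2} (see also \cite{KL2}), the inclusion $\A_0\subset\A$ is of finite index. In the lightray coordinates $\xi_\pm$ the time translation $\tau$ acts diagonally, $\xi_\pm\mapsto\xi_\pm+s$; hence on $\A_0=\A_+\otimes\A_-$ it is implemented by $U_+(\tau_s)\otimes U_-(\tau_s)$, i.e.\ it is the tensor product $\tau^+\otimes\tau^-$ of the two chiral translation groups. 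Accordingly, I would first show that $\A_0$ admits a unique KMS state with respect to $\tau$, namely $\geo^+\otimes\geo^-$ where $\geo^\pm$ denotes the geometric KMS state of $\A_\pm$, and then invoke Theorem \ref{uniext} for the finite-index inclusion $\A_0\subset\A$ to transport both existence and uniqueness to $\A$.

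For the uniqueness on $\A_0$, let $\psi$ be any KMS state on the quasi-local algebra of $\A_0$ with respect to $\tau$. Testing the KMS condition on elements of the form $x\otimes\1$ and $y\otimes\1$, and using that $\tau$ acts as $\tau^+\otimes\tau^-$, one sees at once that the functional $\psi_+\colon a\mapsto\psi(a\otimes\1)$ is a KMS state on $\A_+$ with respect to $\tau^+$, and likewise $\psi_-\colon b\mapsto\psi(\1\otimes b)$ on $\A_-$. Since $\A_\pm$ are completely rational, the chiral uniqueness theorem of Section \ref{general} forces $\psi_+=\geo^+$ and $\psi_-=\geo^-$; in particular both are primary.

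The crux is to deduce from this that $\psi$ factorizes. I would pass to the GNS representation $(\H_\psi,\pi_\psi,\Phi)$ and set $M_\pm=\pi_\psi(\A_\pm)''$ and $M=M_+\vee M_-$, so that $\Phi$ is cyclic and separating for $M$ and the modular group $\sigma^\psi_t$ restricts to $\tau^\pm$ on $M_\pm$. Because $\Phi$ is separating for $M_+$, compression to $\overline{M_+\Phi}$ identifies $M_+$ with the GNS algebra of $\psi_+=\geo^+$, which is a factor by primarity; hence $M_\pm$ are factors. Since $\sigma^\psi_t$ preserves $M_+$, Takesaki's theorem yields a $\psi$-preserving normal conditional expectation $E_+\colon M\to M_+$. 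For $b\in M_-\subset M_+'$ the standard relation $E_+(M_+'\cap M)\subset Z(M_+)=\CC\1$ gives $E_+(b)=\psi_-(b)\1$, whence for $a\in M_+$, $b\in M_-$,
\[
\psi(ab)=\psi\big(E_+(ab)\big)=\psi\big(aE_+(b)\big)=\psi_-(b)\,\psi_+(a)=\geo^+(a)\,\geo^-(b).
\]
As the products $\A_+(I_+)\otimes\A_-(I_-)$ generate the quasi-local algebra of $\A_0$ and $\psi$ is continuous, this yields $\psi=\geo^+\otimes\geo^-$. I expect the main technical point to be precisely this factorization step, and within it the identification of $M_\pm$ as factors (via the separating property of $\Phi$) together with the central-support computation $E_+(b)\in Z(M_+)=\CC\1$; the rest is bookkeeping, and the existence of $\geo^+\otimes\geo^-$ as a KMS state is the routine fact that a tensor product of KMS states at a common temperature is KMS for the product dynamics.

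Finally, existence and uniqueness on $\A$ follow by applying Theorem \ref{uniext} to the finite-index inclusion $\A_0\subset\A$, with $\A_0$ in the role of the smaller net carrying a unique KMS state. The resulting unique KMS state of $\A$ is then $(\geo^+\otimes\geo^-)\circ\epsilon$, that is, the lift through the conditional expectation $\epsilon$ of the geometric KMS state of $\A_0$, exactly as claimed. Note that this last step is legitimate in the present possibly non-chiral situation precisely because Theorem \ref{uniext} was stated so as to cover diffeomorphism covariant extensions $\B$ that need not be chiral, cf.\ the remark preceding it.
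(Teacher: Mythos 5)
Your proposal is correct and follows essentially the same route as the paper: the finite-index inclusion $\A_0\subset\A$ combined with Theorem \ref{uniext}, chiral uniqueness for $\A_\pm$, and a factorization of KMS states on $\A_0=\A_+\otimes\A_-$ obtained from the Takesaki conditional expectation onto one chiral factor together with the observation that its values on the commuting factor land in the (trivial) center. The only cosmetic difference is that you pin down the marginals first and thereby handle arbitrary (not just extremal) KMS states directly, whereas the paper proves the factorization for extremal states and then invokes the decomposition into primary KMS states.
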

The proof of the theorem follows by the above discussion and Thm.\! \ref{uniext}.
One can easily see that $\varphi$ is a geometric state too. We need the following proposition.

\begin{proposition}
Let $\A_+$ $\A_-$ be translation covariant nets of von Neumann algebras on $\RR$ and $\A_0$ the associated net on the two-dimensional Minkowski spacetime: $\A_0(I_+\times I_-)\equiv \A_+(I_+)\otimes\A_-(I_-)$. If $\varphi_0$ is an extremal KMS state of $\A_0$ w.r.t. time translations, then $\varphi_0 = \varphi_+\otimes\varphi_-$, where $\varphi_\pm$ is an extremal KMS state of $\A_\pm$ w.r.t. translations.
\end{proposition}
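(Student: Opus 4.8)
The plan is to obtain $\varphi_\pm$ as the two chiral restrictions of $\varphi_0$ and to exploit the fact that, in lightray coordinates, the time-translation group acts diagonally on $\A_0=\A_+\otimes\A_-$. Concretely, time translation by $s$ sends $\xi_\pm\mapsto\xi_\pm+s$, so the implementing one-parameter group acts on $\A_0(I_+\times I_-)=\A_+(I_+)\otimes\A_-(I_-)$ as $\alpha_s=\tau^+_s\otimes\tau^-_s$, where $\tau^\pm$ are the chiral translation groups. I would set $\varphi_+(a):=\varphi_0(a\otimes\1)$ and $\varphi_-(b):=\varphi_0(\1\otimes b)$. First I would check that $\varphi_\pm$ are KMS states of $\A_\pm$ w.r.t.\ $\tau^\pm$: applying the KMS condition for $\varphi_0$ to elements $x=a_1\otimes\1$, $y=a_2\otimes\1$ and using $\alpha_s(a_2\otimes\1)=\tau^+_s(a_2)\otimes\1$, the bounded analytic function witnessing the KMS property for $\varphi_0$ specializes to one witnessing it for $\varphi_+$; symmetrically for $\varphi_-$. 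This step is routine.

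Next I would pass to the GNS representation $(\pi,\H_{\varphi_0},\Phi)$. Since $\varphi_0$ is extremal KMS (equivalently primary), $M:=\pi(\mathfrak A)''$ is a factor and $\Phi$ is cyclic and separating, so $\tilde\varphi_0:=\langle\Phi,\,\cdot\,\Phi\rangle$ is a faithful normal state whose modular group is the implemented time translation $\sigma_s=\alpha_s$. Put $M_+:=\pi(\mathfrak A_+\otimes\1)''$ and $M_-:=\pi(\1\otimes\mathfrak A_-)''$. By construction $M_+$ and $M_-$ commute and together generate $M$ (because $\A_+(I_+)\otimes\A_-(I_-)=(\A_+(I_+)\otimes\1)\vee(\1\otimes\A_-(I_-))$), and each is globally invariant under $\sigma_s$, since $\alpha_s=\tau^+_s\otimes\tau^-_s$ preserves each tensor leg.

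The heart of the argument is that extremality propagates to the chiral pieces. I claim $Z(M_+)=M_+\cap M_+'\subseteq Z(M)$: a central element of $M_+$ lies in $M_+\subseteq M_-'$ by commutation, and in $M_+'$ by definition, hence in $M_+'\cap M_-'=M'$; as it also lies in $M_+\subseteq M$, it lies in $M\cap M'=Z(M)=\CC\1$. Thus $M_+$ is a factor, i.e.\ $\varphi_+$ is extremal, and symmetrically $\varphi_-$ is extremal. This factoriality step is the only genuinely nontrivial point, and it is what makes the final collapse work. Since $M_+$ is $\sigma_s$-invariant, Takesaki's theorem furnishes a unique $\tilde\varphi_0$-preserving conditional expectation $E_+\colon M\to M_+$. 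For $y\in M_-$ and any $x\in M_+$, commutativity $xy=yx$ together with $M_+$-bimodularity of $E_+$ gives $xE_+(y)=E_+(xy)=E_+(yx)=E_+(y)x$, so $E_+(y)\in Z(M_+)=\CC\1$; as $E_+$ preserves $\tilde\varphi_0$, necessarily $E_+(y)=\tilde\varphi_0(y)\1$.

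Finally I would assemble the factorization. For $x\in M_+$, $y\in M_-$ one gets $\tilde\varphi_0(xy)=\tilde\varphi_0(E_+(xy))=\tilde\varphi_0(xE_+(y))=\tilde\varphi_0(x)\tilde\varphi_0(y)$, and evaluating on products $\pi(a\otimes b)=\pi(a\otimes\1)\pi(\1\otimes b)$ yields $\varphi_0(a\otimes b)=\varphi_+(a)\varphi_-(b)$. Hence $\varphi_0$ and $\varphi_+\otimes\varphi_-$ agree on the algebraic tensor products $\A_+(I_+)\odot\A_-(I_-)$, which are $\sigma$-weakly dense in the local algebras $\A_0(\O)$; by local normality of KMS states they agree on every $\A_0(\O)$, and therefore on $\mathfrak A$, giving $\varphi_0=\varphi_+\otimes\varphi_-$ with both factors extremal. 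I expect essentially no obstacle beyond the centre computation $Z(M_+)\subseteq Z(M)$; everything else is a direct specialization of the KMS/modular machinery, with the only care needed being the passage from $\sigma$-weak agreement to equality of states via local normality.
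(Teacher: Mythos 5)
Your proof is correct and follows essentially the same route as the paper's: pass to the GNS representation, use extremality to get factoriality of the chiral subalgebras $M_\pm$, invoke Takesaki's theorem for a $\tilde\varphi_0$-preserving conditional expectation onto one leg, and use bimodularity plus triviality of the centre to conclude that the expectation of the other leg is the scalar $\tilde\varphi_0(y)\1$, whence the state factorizes. The only difference is one of detail, in your favour: you spell out the centre computation $Z(M_+)\subseteq Z(M)$ and the final density/local-normality step, which the paper leaves implicit.
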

\proof
Let $\pi_{\varphi_0}$ be the GNS representation of $\mathfrak A_{\A_0}$ w.r.t. $\varphi_0$ and consider the von Neumann algebras $\M_0\equiv \pi_{\varphi_0}(\mathfrak A_{\A_0})''$ and $\M_\pm\equiv \pi_{\varphi_0}(\mathfrak A_{\A_\pm})''$. As $\pi_{\varphi_0}$ is extremal KMS, $\M_0$ is a factor, so $\M_+$ and $\M_-$ are commuting subfactors.

Now the translation one-parameter automorphism group of $\mathfrak A_{\A_0}$ extends to the modular group of $\M_0$ w.r.t. (the extension of) $\varphi_0$ and leaves the subfactors $\M_\pm$ globally invariant. By Takesaki theorem, there exists a normal  $\varphi_0$-invariant conditional expectation $\varepsilon_\pm:\M_0\to\M_\pm$. With $x_\pm\in\M_\pm$ we have \[
\varphi_0(x_- x_+)= \varphi_0(\varepsilon_- (x_ -x_+)) =  \varphi_0(x_- \varepsilon_- (x_+))
= \varphi_0(x_-) \varphi_0(x_+)  = \varphi_-(x_-) \varphi_+(x_+)  \ ,
\]
because $\varepsilon_- (x_+)$ belongs to the center of $\M_-$, so $\varepsilon_- (x_+)= \varphi_0(x_+)$. This concludes the proof.
\endproof

As a consequence, if $\A_\pm$ are completely rational, then $\A_0$ admits a unique KMS state w.r.t. time translations and this state is given by the geometric construction.

\subsubsection*{Acknowledgment.}
We would like to thank Kenny De Commer for a useful discussion.

\appendix
\newcommand{\appsection}[1]{\let\oldthesection\thesection
  \renewcommand{\thesection}{Appendix \oldthesection}
  \section{#1}\let\thesection\oldthesection}

\appsection{Pimsner-Popa inequality and normality}\label{pimsnerpopa}

We discuss here some properties of finite-index expectation needed in the paper, cf. \cite{I} for related facts.

Suppose $\N \subset \M$ is an inclusion of von Neumann algebras and
$E:\M\to \N$ is an expectation. Let
\begin{equation*}
E= E_n + E_s
\end{equation*}
be the (unique) decomposition of $E$ into the sum of a normal and a singular
$\M\to \N$ positive map
(with $E_n$ standing for the normal part and $E_s$ for the singular part).
As is known, one of the
equivalent definitions of singularity is that for any $P$ nonzero
ortho-projection there is a nonzero subprojection $Q\leq P$ such that
$E_s(Q) = 0$.

\begin{lemma}
\label{bimod}
$E_n(AX)=AE_n(X)$ and $E_n(XA)=E_n(X)A$ for all $A\in \N$ and $X\in \M$.
\end{lemma}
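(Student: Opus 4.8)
The plan is to reduce the asserted bimodule identity for the normal part $E_n$ to the standard bimodule behaviour of the normal/singular decomposition of \emph{functionals}. Since $E_n(AX)$ and $AE_n(X)$ both lie in $\N$, and normal functionals separate the points of $\N$, it suffices to prove $\psi(E_n(AX)) = \psi(AE_n(X))$ for every normal $\psi\in\N_*$. Throughout I will use the characterising property of the decomposition at the level of functionals: for every normal $\psi\in\N_*$ one has $\psi\circ E_n = (\psi\circ E)_n$ and $\psi\circ E_s = (\psi\circ E)_s$, where on the right the subscripts denote the normal and singular parts of $\psi\circ E\in\M^*$. Only the left identity $E_n(AX)=AE_n(X)$ will be treated; the right one follows by the symmetric argument with right translates.

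The key input I would isolate first is that the decomposition $\M^* = \M_*\oplus\M^{\perp}$ into normal and singular functionals is a two-sided $\M$-module decomposition. The cleanest way to see this is to recall that in the universal enveloping von Neumann algebra $\M^{**}$ there is a \emph{central} projection $z$ such that, identifying $\M^*$ with $(\M^{**})_*$, one has $f_n = f(z\,\cdot\,)$ and $f_s = f((1-z)\,\cdot\,)$ for all $f\in\M^*$. Writing $g_A$ for the translate $g_A(X):=g(AX)$ with $A\in\M$, centrality of $z$ gives $zA = Az$ and hence
\[
(g_A)_n(X) = g(zAX) = g(AzX) = (g_n)_A(X),
\]
so that taking normal parts commutes with left translation by $A$; likewise $(g_A)_s = (g_s)_A$. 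In particular the normal part of $X\mapsto g(AX)$ is $X\mapsto g_n(AX)$.

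With these facts in hand the computation is a short chain of identities. Fix $A\in\N$, $X\in\M$ and a normal $\psi\in\N_*$, and set $\psi^A := \psi(A\,\cdot\,)\in\N_*$, which is again normal because left multiplication by $A$ is $\sigma$-weakly continuous. The bimodule property of the expectation $E$ itself gives $(\psi\circ E)(AX) = \psi(E(AX)) = \psi(AE(X)) = (\psi^A\circ E)(X)$, i.e.\ $(\psi\circ E)_A = \psi^A\circ E$ in $\M^*$. Taking normal parts and using the module fact above,
\[
\psi(E_n(AX)) = (\psi\circ E)_n(AX) = \big[(\psi\circ E)_n\big]_A(X) = \big[(\psi\circ E)_A\big]_n(X) = (\psi^A\circ E)_n(X) = \psi\big(AE_n(X)\big),
\]
where the last step applies $(\psi^A\circ E)_n = \psi^A\circ E_n$, valid since $\psi^A$ is normal. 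As $\psi$ was an arbitrary normal functional, $E_n(AX) = AE_n(X)$.

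The only genuinely non-routine point is the module property of the normal/singular splitting, i.e.\ that translation by $A$ preserves singularity, and this is where I expect the main work to sit. The central-projection description above settles it at once. Alternatively, one may lean on the criterion recalled just before the statement: if $F\colon\M\to\N$ is singular then for every nonzero projection $P\in\M$ there is a nonzero $Q\le P$ with $F(Q)=0$, whence $\psi'(F(Q))=0$ for any positive normal $\psi'$, so $\psi'\circ F$ is a positive functional vanishing on a large family of subprojections and hence singular; extending by linearity over the positive normal functionals spanning $\N_*$ identifies $\psi^A\circ E_s$ as singular and thereby justifies $(\psi^A\circ E)_n = \psi^A\circ E_n$. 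Either route supplies exactly the compatibility between translation and the decomposition that drives the chain above.
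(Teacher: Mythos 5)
Your proof is correct, but it follows a genuinely different route from the paper's. The paper argues entirely at the level of the maps $E_n,E_s:\M\to\N$: for invertible $T\in\N$ and $\Phi=T\,\cdot\,T^*$ it checks by a spectral-projection argument that $\Phi\circ E_s\circ\Phi^{-1}$ is again singular, concludes $\Phi\circ E_n=E_n\circ\Phi$ from uniqueness of the normal/singular decomposition, and then extracts the bimodule identity by specializing $T$ to $A$, $\1+iA$ and $\1+A$ for strictly positive $A\in\N$ (so as to obtain the commutator and anticommutator relations), finishing with linearity and density of the span of strictly positive elements. You instead test against normal functionals $\psi\in\N_*$, transfer the problem to the decomposition $\M^*=\M_*\oplus\M_*^{s}$, and let the central projection $z\in\M^{**}$ do the work: centrality of $z$ makes the compatibility of the normal part with left translation by $A$ immediate, and the chain of identities then closes. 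What your approach buys is that it treats arbitrary $A\in\N$ in one stroke, with no positivity or invertibility tricks and no final density argument; what it costs is the importation of the Takesaki dual decomposition together with the compatibility $(\psi\circ E)_n=\psi\circ E_n$ --- and that compatibility is precisely where the projection criterion for singularity re-enters. You do supply this: for positive normal $\psi$ the functional $\psi\circ E_s$ annihilates a nonzero subprojection of every nonzero projection and is therefore singular, and the general case follows by writing a normal functional as a combination of positive ones. So both arguments ultimately rest on the same singularity criterion, but they deploy it at different levels (maps versus functionals); each is complete.
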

\begin{proof}
Let $T,S\in\N$ with $TS= ST = \1$ and $\Phi(\cdot):= T \cdot T^*$.
Then $\Phi^{-1}(\cdot)= S \cdot S^*$ and both $\Phi$ and $\Phi^{-1}$
are faithful positive normal maps. It follows that
$\Phi\circ E_n \circ \Phi^{-1}$ is a normal positive map and it is
also clear that
$\Phi\circ E_s \circ\Phi^{-1}$ is a positive map. We shall now show that
this latter one is actually a singular map.

It is rather evident that if $E_s \circ\Phi^{-1}$ is singular then so is
$\Phi\circ E_s \circ\Phi^{-1}$.
So let $P\in\M$ be a nonzero
ortho-projection. Then
$\Phi^{-1}(P)=SPS^*$ is a nonzero positive operator so its spectral
projection $Q$ associated to the interval $[a/2,a]$ where $a=\|SPS^*\|$
is nonzero and we have that $SPS^*\geq (a/2) Q$. By singularity of $E_s$,
there exists a nonzero subprojection $Q_0\leq Q,\, Q_0\neq 0$ such that
$E_s(Q_0)=0$. Then $TQ_0T^*$ is a nonzero positive operator so again we
shall consider its spectral projection $R$ associated to the interval
$[b/2,b]$ where $b=\|TQ_0T^*\|$. Again, it is nonzero and we have that
$\Phi(Q_0)=TQ_0T^*\geq (b/2) R$. Putting together the inequalities,
we have
\begin{equation*}
R\leq \frac{2}{b}\Phi(Q_0)\leq \frac{2}{b}\Phi(Q)\leq
\frac{2}{b}
\frac{2}{a}\Phi(\Phi^{-1}(P)) = \frac{4}{ab} P
\end{equation*}
and it is easy to see that if for two ortho-projections $P_1,P_2$
the inequality $P_1\leq t P_2$ holds for {\it some} $t>0$, then
actually $P_1\leq P_2$. So we have that $R$ is a nonzero
subprojection of $P$, and since $E_s\circ\Phi^{-1}$ is a positive
map, by the listed inequality we also have that
\begin{equation*}
E_s\circ\Phi^{-1}(R) \leq
\frac{2}{b}
E_s\circ\Phi^{-1}(\Phi(Q_0)) = \frac{2}{b}E_s(Q_0) = 0.
\end{equation*}
Thus $E_s \circ\Phi^{-1}$ --- and hence
$\Phi\circ E_s \circ\Phi^{-1}$, too --- are indeed singular.
However,
\begin{equation*}
\Phi\circ E_n \circ \Phi^{-1} + \Phi\circ E_s \circ \Phi^{-1}
= \Phi\circ E \circ \Phi^{-1} = E
\end{equation*}
since $TE(SXS^*)T^* = TSE(X)S^*T^* = E(X)$ for all $X\in \M$ and $S\in \N$.
Hence, by the uniqueness of the
decomposition, we have that $\Phi\circ E_s \circ \Phi^{-1} =E_s$
and $\Phi\circ E_n \circ \Phi^{-1} = E_n$ or, equivalently,
$\Phi\circ E_n = E_n\circ \Phi$. So we have that
\begin{equation}
\label{pre-bimod}
TE_n(X)T^* = E_n(TXT^*)
\end{equation}
for all $X\in \M$.
Now let $A\in \N$ be a strictly positive element (i.e.\!
$0\notin{\rm Sp}(A)\subset \RR+$). Then $T:=A$
and $\tilde{T} := \1 + i A$ are invertible elements in $\N$
with bounded inverse and so equation (\ref{pre-bimod}) can be applied
for both. After a straightforward calculation we obtain that
for all $X\in \M$
\begin{equation*}
[A, E_n(X)] =E_n([A, X]),
\end{equation*}
where $[Y,Z]=YZ-ZY$ is the commutator.
On the other hand, replacing $\tilde{T}$ by
$\tilde{T}=\1 + A$ and repeating the previous
argument we also find that for all $X\in \M$
\begin{equation*}
\{A, E_n(X)\} = E_n(\{A,X\}),
\end{equation*}
where $\{Y,Z\}=YZ+ZY$ is the anti-commutator.
So actually we have shown that $E_n$ commutes with both taking
commutators and taking anti-commutators with an arbitrary strictly
positive operator $A\in \N$. Then the claimed bimodule property follows,
since the linear span of strictly positive elements is
dense in $\N$ and $E_n$ is normal.
\end{proof}

Let now $F:\M\to\N$ be a positive map satisfying a Pimsner-Popa type
inequality \cite{PP}; i.e.\! we suppose that there exists a $\lambda>0$ such that
\begin{equation*}
F(X^*X)  \geq \lambda X^* X
\end{equation*}
for all $X\in \M$. Now consider the decomposition $F=F_n+F_s$ into
the sum of a normal and a singular positive maps. $F_n$ must be faithful. Indeed, an easy argument
relying on the normality of $F_n$ shows that, if there is a positive nonzero element
which is annihilated by $F_n$, then there is also a nonzero ortho-projection $P$
which is annihilated by $F_n$. However, there is a subprojection $Q\leq P$,
$Q\neq 0$ such that on this subprojection also $F_s$ is zero. Thus
$F(Q) = F_n(Q)+ F_s(Q)=0$ in contradiction with the assumed inequality.
Actually we can say much more.

\begin{lemma}
\label{normalPP}
The normal part $F_n$ of $F$ satisfies the Pimsner-Popa inequality
with the same constant $\lambda$.
\end{lemma}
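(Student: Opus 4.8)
The plan is to show that the singular part $F_s$ can be ``removed'' from the Pimsner--Popa inequality by compressing with projections on which $F_s$ vanishes, and then to recover the inequality for $F_n$ by a normality argument. First I would produce a net of projections $P_\Lambda$ increasing to $\1$ with $F_s(P_\Lambda)=0$. By the singularity criterion recalled above, every nonzero projection dominates a nonzero subprojection annihilated by $F_s$; a Zorn's lemma argument then yields a maximal family $\{Q_j\}$ of mutually orthogonal nonzero projections with $F_s(Q_j)=0$, and maximality forces $\sum_j Q_j=\1$ (otherwise $\1-\sum_j Q_j$ would dominate a further such projection). Taking $P_\Lambda=\sum_{j\in\Lambda}Q_j$ over finite index sets $\Lambda$ gives, by finite additivity of the positive map $F_s$, projections with $F_s(P_\Lambda)=0$ and $P_\Lambda\uparrow\1$.

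The key step is then to upgrade $F_s(P_\Lambda)=0$ to $F_s(P_\Lambda X^*XP_\Lambda)=0$ for every $X\in\M$. Since $F_s$ is $\N$-valued I would test it against an arbitrary state $\omega$ of $\N$: the functional $\phi=\omega\circ F_s$ is positive and bounded with $\phi(P_\Lambda)=0$. Writing $c=XP_\Lambda$, so that $c^*c=P_\Lambda c^*c$, the Cauchy--Schwarz inequality for positive functionals gives $|\phi(c^*c)|^2=|\phi(P_\Lambda\,c^*c)|^2\le \phi(P_\Lambda)\,\phi((c^*c)^2)=0$, hence $\omega(F_s(P_\Lambda X^*XP_\Lambda))=0$. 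As this holds for all states $\omega$ and $F_s(P_\Lambda X^*XP_\Lambda)\ge 0$, we conclude $F_s(P_\Lambda X^*XP_\Lambda)=0$.

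With this in hand the conclusion follows quickly. Applying the hypothesis $F(Y^*Y)\ge\lambda Y^*Y$ to $Y=XP_\Lambda$ gives $F(P_\Lambda X^*XP_\Lambda)\ge \lambda P_\Lambda X^*XP_\Lambda$; since $F=F_n+F_s$ and the singular term vanishes, this reads $F_n(P_\Lambda X^*XP_\Lambda)\ge\lambda P_\Lambda X^*XP_\Lambda$. Finally I would let $P_\Lambda\uparrow\1$: the bounded net $P_\Lambda X^*XP_\Lambda$ converges strongly, hence ultraweakly (on bounded sets the weak operator and ultraweak topologies agree), to $X^*X$; since $F_n$ is normal it is ultraweakly continuous, and the positive cone is ultraweakly closed, so the inequality passes to the limit to yield $F_n(X^*X)\ge\lambda X^*X$, as desired.

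The main obstacle is the middle step: because $F_s$ takes values in a von Neumann algebra rather than in $\CC$, one cannot invoke Cauchy--Schwarz directly, and reducing to the scalar case by composing with states of $\N$ is what makes the compression argument go through. A secondary point requiring care is the limit, since $P_\Lambda X^*XP_\Lambda$ is not monotone in $\Lambda$; here I would rely on the coincidence of the weak operator and ultraweak topologies on bounded sets to justify passing the operator inequality through the normal map $F_n$.
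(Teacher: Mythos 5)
Your proof is correct, but it takes a genuinely different route from the paper's. The paper sets $\tilde{K}:=F_n-\lambda\,{\rm id}$ and proves $\tilde{K}(P)\geq 0$ for every ortho-projection $P\in\M$ by a \emph{local} Zorn argument: among the subprojections $Q\leq P$ with $\tilde{K}(Q)\geq 0$ there is a maximal one, and if it were not $P$ itself the singularity criterion would supply a further orthogonal piece $R\leq P-Q$ with $F_s(R)=0$, so that $\tilde{K}(Q+R)=\tilde{K}(Q)+K(R)\geq 0$ (where $K=F-\lambda\,{\rm id}\geq 0$), contradicting maximality; positivity of $\tilde{K}$ on all positive elements then follows from its normality together with spectral approximation by positive combinations of projections. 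You instead run Zorn \emph{globally}, producing an orthogonal family $\{Q_j\}$ with $F_s(Q_j)=0$ and $\sum_j Q_j=\1$, and your two extra ingredients --- the Cauchy--Schwarz step showing $F_s(P_\Lambda X^*XP_\Lambda)=0$ (scalarizing by states of $\N$ precisely because $F_s$ is operator-valued rather than a functional), and the $\sigma$-weak passage to the limit $P_\Lambda\uparrow\1$ --- replace the paper's reduction to projections. Both arguments rest on the same two pillars, namely the subprojection characterization of singularity and the normality of $F_n$; yours treats arbitrary elements $X^*X$ directly at the cost of a limit argument, while the paper's stays with projections and defers the extension to general positive elements to the very end. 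The one point needing care in your version, which you correctly flag, is that $P_\Lambda X^*XP_\Lambda$ is not monotone, so you need $\sigma$-weak continuity of $F_n$ on bounded sets rather than mere order continuity; this does follow from normality (compose $F_n$ with normal states of $\N$ to get normal, hence $\sigma$-weakly continuous, functionals), so the step is sound and the proof goes through.
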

\begin{proof}
By assumption we know that $K:=F-\lambda \cdot{\rm id}$ is a positive map. Our goal is
to show that $\tilde{K}:=F_n - \lambda {\rm id} = K - F_s$ is also a positive map.
Since $\tilde{K}$ is evidently normal, it is enough to show that if $P\in
\M$ is an ortho-projection then $\tilde{K}(P)\geq 0$. So let $P\in \M$ be an
ortho-projection and
\begin{equation*}
\S:= \{Q\in\M | \, Q^2=Q=Q^*, Q\leq P, \tilde{K}(Q)\geq 0\}.
\end{equation*}
Now $\S$ can be viewed as a partially ordered set (with the ordering given by the
operator ordering) and, if $\{Q_\alpha\}$ is a chain in $\S$, then --- by the
normality of $\tilde{K}$ ---
$Q:=\vee_\alpha Q_\alpha$ is still an element of $\S$. Hence, by an application
of the Zorn lemma, there is a maximal element in $\S$; say $Q\in \S$ is such an
element.

If $Q=P$, we have finished. So assume by contradiction that $P-Q$ is
nonzero. Then there exists a nonzero subprojection $R\leq P-Q$ such that
$F_s(R)=0$. Hence $\tilde{K}(R) = K(R)-F_s(R) = K(R)$ and
\begin{equation*}
\tilde{K}(Q+R) = \tilde{K}(Q) + \tilde{K}(R) =
\tilde{K}(Q) + K(R) \geq \tilde{K}(Q) + \lambda R \geq 0,
\end{equation*}
implying that $Q+R\in \S$ in contradiction with the maximality of $Q$.
\end{proof}

Let us return now to discussing expectations $E:\M\to \N$ (not necessarily normal),  with normal-singular
decomposition $E=E_n+ E_s$.

\begin{theorem}\label{normal-part-is-expectation}
Suppose $E$ satisfies the Pimsner-Popa inequality with
constant $\lambda >0$. Then $Z:=E_n(\1)$ is a strictly positive
and hence invertible element in the center of $\N$ and $\tilde{E}:=
Z^{-1}E_n$ is a normal expectation from $\M$ to $\N$ satisfying the
Pimsner-Popa inequality with the same constant $\lambda >0$.
\end{theorem}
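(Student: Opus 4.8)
The plan is to read off all the needed properties of $Z=E_n(\1)$ from Lemmas \ref{bimod} and \ref{normalPP}, and then to check by hand that $Z^{-1}E_n$ has the defining properties of a conditional expectation together with the Pimsner-Popa bound. First I would show that $Z$ is central in $\N$: by the bimodule property (Lemma \ref{bimod}), for every $A\in\N$ one has $E_n(A)=E_n(A\1)=AZ$ and also $E_n(A)=E_n(\1 A)=ZA$, so $AZ=ZA$ for all $A\in\N$; since $Z\in\N$ this makes $Z$ central. The identity $E_n(A)=ZA$ obtained here will be reused below.

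Next I would establish the two-sided bound $\lambda\1\le Z\le\1$. The lower bound is immediate: by Lemma \ref{normalPP}, $E_n$ inherits the Pimsner-Popa inequality with the same constant $\lambda$, and evaluating $E_n(X^*X)\ge\lambda X^*X$ at $X=\1$ gives $Z\ge\lambda\1$; in particular $Z$ is strictly positive, hence invertible, and $Z^{-1}$ is again central. The upper bound uses unitality of $E$ together with positivity of the singular part: $Z=E_n(\1)\le E_n(\1)+E_s(\1)=E(\1)=\1$. Hence $\1\le Z^{-1}\le\lambda^{-1}\1$, and the estimate $Z^{-1}\ge\1$ is what will preserve the Pimsner-Popa constant in the final step.

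With these facts the verification that $\tilde E=Z^{-1}E_n$ is a normal conditional expectation is routine. It is normal, since $E_n$ is normal and left multiplication by the fixed central bounded operator $Z^{-1}$ is $\sigma$-weakly continuous; it is positive, because $Z^{-1}$ is central, so $Z^{-1}E_n(X)=Z^{-1/2}E_n(X)Z^{-1/2}\ge0$ whenever $X\ge0$; its range lies in $\N$; it is unital and fixes $\N$ pointwise, since $\tilde E(\1)=Z^{-1}Z=\1$ and, using $E_n(A)=ZA$, $\tilde E(A)=Z^{-1}ZA=A$ for $A\in\N$; and it is an $\N$-bimodule map, since $\tilde E(AXB)=Z^{-1}AE_n(X)B=AZ^{-1}E_n(X)B=A\tilde E(X)B$ by Lemma \ref{bimod} and the centrality of $Z^{-1}$. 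Thus $\tilde E$ is a normal expectation of $\M$ onto $\N$.

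Finally, for the Pimsner-Popa inequality I would set $Y:=E_n(X^*X)\ge0$ and $W:=Z^{-1}-\1\ge0$ (central). Then $WY=W^{1/2}YW^{1/2}\ge0$, so $\tilde E(X^*X)=Z^{-1}Y=Y+WY\ge Y=E_n(X^*X)\ge\lambda X^*X$, the last inequality again by Lemma \ref{normalPP}. The only delicate point—and the sole place the argument could fail—is the direction of this last estimate: one might worry that renormalizing by $Z^{-1}$ weakens the constant $\lambda$. The resolution is precisely the bound $Z\le\1$, i.e. $Z^{-1}\ge\1$, which forces $\tilde E(X^*X)\ge E_n(X^*X)$ rather than something smaller; and that bound comes exactly from the positivity of the singular part $E_s$ combined with the unitality of $E$.
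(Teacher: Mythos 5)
Your proof is correct and follows essentially the same route as the paper's: centrality of $Z$ via Lemma \ref{bimod}, the two-sided bound on $Z$ from Lemma \ref{normalPP} and the positivity of $E_s$, and the observation that $Z^{-1}\ge\1$ preserves the Pimsner--Popa constant. You merely spell out the routine verification that $Z^{-1}E_n$ is a normal expectation, which the paper leaves to the reader, and you state the lower bound correctly as $\lambda\1\le Z$ (the paper's displayed inequality $\lambda^{-1}\1\le Z$ is a typo).
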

\begin{proof}
By Lemma \ref{bimod} we have that
\begin{equation*}
AZ = A E_n(\1) = E_n (A) = E_n(\1)A = ZA
\end{equation*}
for all $A\in \N$, showing that $Z$ is indeed a central element.

We may
estimate $Z$ from above by considering that $\1 = E(\1) = E_n(\1) +
E_s(\1) = Z + E_s(\1)$ and the fact that $E_s$ is a positive map. From below, we
may apply our previous lemma. Putting them together, we have
\begin{equation*}
\lambda^{-1}\1 \leq Z = \1 - E_s(\1) \leq \1.
\end{equation*}
One of the inequalities shows that $Z^{-1}$ is bounded, whereas the other
shows that $Z^{-1}\geq \1$ and so $Z^{-1}E_n$ still satisfies the Pimsner-Popa inequality with
the same $\lambda$. The rest
of the statement -- namely that $Z^{-1}E_n$ is a normal expectation -- follows easily
from the facts so far established in this appendix.
\end{proof}

Now it turns out that the normal part is in fact the expectation itself.
The following corollary has been announced in \cite{Popa} without proof.
The argument here is due to Kenny De Commer.
\begin{corollary}\label{automatic-normality-expectations}
If a conditional expectation $E:\M \to \N$ satisfies the Pimsner-Popa inequality
with the constant $\l > 0$, then any conditional expectation $F:\M\to\N$ is normal.
\end{corollary}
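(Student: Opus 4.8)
The plan is to leverage Theorem \ref{normal-part-is-expectation}, which manufactures a single well-behaved expectation, and then to show that its mere existence rigidifies \emph{every} expectation onto $\N$ to be normal. First I would note that, since $E$ satisfies the Pimsner-Popa inequality with constant $\lambda>0$, Theorem \ref{normal-part-is-expectation} furnishes a normal conditional expectation $\tilde{E}=Z^{-1}E_n:\M\to\N$ satisfying the Pimsner-Popa inequality with the \emph{same} constant $\lambda$. Such $\tilde{E}$ is in particular faithful and of finite index (bounded by $\lambda^{-1}$), so the inclusion $\N\subset\M$ admits a finite quasi-basis: there are finitely many elements $m_1,\dots,m_k\in\M$ with
\[
x=\sum_{i=1}^k m_i\,\tilde{E}(m_i^* x)\qquad(x\in\M).
\]

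Next I would use this expansion to rewrite an arbitrary conditional expectation $F:\M\to\N$. Since $\tilde{E}(m_i^* x)\in\N$ for each $i$, and $F$, being a conditional expectation, is an $\N$-bimodule map (so $F(yb)=F(y)b$ for $b\in\N$), applying $F$ to the quasi-basis expansion and using finite additivity yields
\[
F(x)=\sum_{i=1}^k F\!\left(m_i\,\tilde{E}(m_i^* x)\right)=\sum_{i=1}^k F(m_i)\,\tilde{E}(m_i^* x)\qquad(x\in\M).
\]

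Finally I would read off normality directly from this identity. For each fixed $i$, the map $x\mapsto F(m_i)\,\tilde{E}(m_i^* x)$ is the composition of the $\sigma$-weakly continuous multiplications $x\mapsto m_i^* x$ and $y\mapsto F(m_i)\,y$ with the normal map $\tilde{E}$, hence it is normal; a finite sum of normal maps is normal, so $F$ is normal. Observe that no assumption whatsoever is imposed on $F$ beyond being a conditional expectation onto $\N$.

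The crux of the argument—and the step I expect to require the most care—is the existence of a \emph{finite} quasi-basis, which rests on the finiteness of the index encoded in the Pimsner-Popa inequality for $\tilde{E}$ (cf.\ \cite{PP,Kosaki,I}). Once that is granted, the rest is the short bimodularity computation above, and the logic is clean: the existence of one finite-index (normal, by the theorem) expectation is exactly what is needed to force all conditional expectations $\M\to\N$ to be normal.
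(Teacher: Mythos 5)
Your argument is correct in substance but follows a genuinely different, and considerably heavier, route than the paper's. The paper's proof (credited to Kenny De Commer) is direct: to check normality of $F$ it suffices to show $F(x_\a)\to 0$ for a bounded net of positive elements $x_\a$ decreasing to $0$; the Pimsner--Popa inequality for the normal expectation $\tilde{E}$ gives $x_\a\le\l^{-1}\tilde{E}(x_\a)$, and applying the positive map $F$, which restricts to the identity on $\N$, yields $F(x_\a)\le\l^{-1}F(\tilde{E}(x_\a))=\l^{-1}\tilde{E}(x_\a)\to 0$ by normality of $\tilde{E}$. This uses nothing beyond positivity of $F$ and $F|_{\N}=\mathrm{id}$. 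Your route instead passes through the existence of a \emph{finite quasi-basis} for $\tilde{E}$, and that is the step requiring the most care: it is not a formal consequence of the Pimsner--Popa inequality but a separate theorem (for von Neumann algebras with a \emph{normal} faithful finite-index expectation it does hold, by Baillet--Denizeau--Havet and Frank--Kirchberg, whereas for $C^*$-algebras finite probabilistic index does not imply a finite quasi-basis), and the corollary as stated does not even assume that $\N\subset\M$ is an inclusion of factors, so you would need to invoke and cite that machinery in its general form. Granting the quasi-basis, the rest of your computation --- the $\N$-bimodularity of $F$ (Tomiyama) and the normality of each summand $x\mapsto F(m_i)\,\tilde{E}(m_i^*x)$, hence of their finite sum --- is correct. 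What your approach buys is an explicit formula exhibiting $F$ as a finite sum of manifestly normal maps; what the paper's approach buys is a two-line argument from first principles that needs no structure theory of the inclusion. I would recommend the latter, or at least a precise reference for the quasi-basis existence in the generality you use it.
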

\begin{proof}
As we have seen in Theorem \ref{normal-part-is-expectation}, there is a normal
conditional expectation $\tilde{E}:\M\to\N$ which satisfies the Pimsner-Popa inequality
with the same constant $\l$. Let us suppose that there is another
conditional expectation $F$. To show that $F$ is normal, it is enough to see
that for a bounded increasing net $\{x_\a\}$ of positive elements in $\M$
it holds that $\lim_\a F(x_\a) = F(\lim_\a x_\a)$ in $\s$-weak topology.
In fact, by replacing $x_\a$ with $x-x_\a$, it is equivalent to
show that if $x_\a$ is decreasing to $0$, then $F(\lim x_\a) = F(0) = 0$.

By the Pimsner-Popa inequality for $\tilde{E}$, we have $x_\a\le \lambda^{-1}\tilde{E}(x_\a)$.
We apply $F$ to the both sides to obtain
\begin{equation*}
F(x_\a) \le F(\lambda^{-1}\tilde{E}(x_\a)) = \lambda^{-1}F(\tilde{E}(x_\a)) = \lambda^{-1}\tilde{E}(x_\a),
\end{equation*}
since the image of $\tilde{E}$ is contained in $\N$ and $F$ is an expectation $\M\to\N$.
The normality of $\tilde{E}$ implies that the right-hand side tends to $0$, so does the left-hand side.
This proves the normality of $F$.
\end{proof}

\appsection{Irreducible inclusion of factors}\label{irreducibleinclusion}
Here we collect some observations on irreducible subfactors with a conditional expectation.
Throughout this appendix, $\N \subset \M$ is an irreducible inclusion of factors,
$E$ is the unique conditional expectation from $\M$ onto $\N$, $\varphi$ is
a faithful normal state on $\N$ and $\hat{\varphi} = \varphi \circ E$.

\begin{lemma}\label{modularcommutation}
If $\a$ is an automorphism of $\M$ which preserves $\N$ and the restriction to $\N$
preserves $\varphi$, then $\a$ commutes with the modular
automorphism group $\s^{\hat{\varphi}}_t$.
\end{lemma}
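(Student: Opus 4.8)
The plan is to reduce the assertion to the single fact that $\hat\varphi$ is $\a$-invariant, i.e.\ $\hat\varphi\circ\a=\hat\varphi$, and then invoke the standard transformation rule for modular automorphism groups. Recall that for any automorphism $\a$ of $\M$ and any faithful normal state $\psi$ one has $\a\circ\s^{\psi}_t\circ\a^{-1}=\s^{\psi\circ\a^{-1}}_t$ (equivalently: the flow $\a\circ\s^\psi_t\circ\a^{-1}$ satisfies the KMS condition with respect to $\psi\circ\a^{-1}$, and so equals $\s^{\psi\circ\a^{-1}}$ by uniqueness of the modular group). Once we know $\hat\varphi\circ\a^{-1}=\hat\varphi$, this immediately gives $\a\circ\s^{\hat\varphi}_t\circ\a^{-1}=\s^{\hat\varphi}_t$, which is precisely the claimed commutation.

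So the real content is the invariance of $\hat\varphi=\varphi\circ E$. First I would show that $\a$ commutes with the expectation $E$. Consider the map $E':=\a^{-1}\circ E\circ\a$ on $\M$. Since $\a$ preserves $\N$, the range of $E'$ is $\a^{-1}(\N)=\N$; it is idempotent because $E'^2=\a^{-1}E^2\a=\a^{-1}E\a=E'$; and for $n\in\N$ we have $\a(n)\in\N$, hence $E(\a(n))=\a(n)$ and $E'(n)=\a^{-1}(\a(n))=n$. Thus $E'$ is a conditional expectation of $\M$ onto $\N$. By the irreducibility of $\N\subset\M$, which forces $\N'\cap\M=\CC\1$ and hence the uniqueness of such an expectation, we conclude $E'=E$, that is $E\circ\a=\a\circ E$.

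With this in hand the invariance is a one-line computation: for $x\in\M$,
\[
\hat\varphi(\a(x))=\varphi\big(E(\a(x))\big)=\varphi\big(\a(E(x))\big)=\varphi\big(E(x)\big)=\hat\varphi(x),
\]
where the third equality uses that $E(x)\in\N$ together with the hypothesis $\varphi\circ\a|_\N=\varphi$. This yields $\hat\varphi\circ\a=\hat\varphi$ (equivalently $\hat\varphi\circ\a^{-1}=\hat\varphi$), and plugging into the transformation rule above finishes the proof.

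The only genuinely delicate point is the step $E'=E$, which rests on the uniqueness of the conditional expectation onto the irreducible subfactor $\N$; everything else is bookkeeping. I expect no further obstacle, since uniqueness of $E$ is already built into the standing hypotheses of this appendix.
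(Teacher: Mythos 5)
Your argument is correct and follows essentially the same route as the paper: both proofs establish $\a\circ E=E\circ\a$ by conjugating $E$ by $\a$ and invoking the uniqueness of the conditional expectation onto the irreducible subfactor $\N$, then deduce $\hat\varphi\circ\a=\hat\varphi$ by the same one-line computation, and conclude via the standard covariance of the modular group under state-preserving automorphisms (which the paper delegates to a citation of Takesaki and you spell out explicitly). No gaps.
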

\begin{proof}
Since $\a$ preserves $\N$, $\a\circ E \circ \a^{-1}$ is a conditional expectation from
$\M$ onto $\N$. By the irreducibility such a conditional expectation is unique, hence
$\a \circ E \circ \a^{-1} = E$, or $\a\circ E = E \circ \a$.
We claim that $\a$ preserves $\hat{\varphi}$. Indeed, we have
\begin{equation*}
\hat{\varphi}(\a(x)) = \varphi(E(\a(x))) = \varphi(\a(E(x))) = \varphi(E(x)) = \hat{\varphi}(x).
\end{equation*}
 From this it follows that $\a$ commutes with $\s^{\hat{\varphi}}_t$ (see, for example,
\cite[chapter VIII, Cor.\! 1.4]{Takesaki2}).
\end{proof}

We insert a purely group-theoretic observation.
\begin{lemma}\label{period}
Let $G$ be a group and $\pi:\RR \to G$ be a group-homomorphism.
If there exists $n \in \NN$ such that for any $t \in \RR$ it holds
that $\pi(t)^{m_t} = e$ for some $m_t \le n$
where $e$ is the unit element in $G$, then $\pi(t) = e$, in other words $\pi$ is trivial.
\end{lemma}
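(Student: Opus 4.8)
The plan is to exploit the fact that the additive group $\RR$ is \emph{divisible}. The first move is to upgrade the pointwise, $t$-dependent exponent $m_t$ into a single uniform exponent that kills the entire image. Since $\pi(t)^{m_t}=e$ with $m_t\le n$, the order of the element $\pi(t)$ divides $m_t$, and therefore divides $N:=n!$ (one could equally take $N=\mathrm{lcm}(1,\dots,n)$). Writing $N=m_t\cdot k_t$ for the appropriate $k_t\in\NN$ gives $\pi(t)^N=\big(\pi(t)^{m_t}\big)^{k_t}=e$. The point of this step is that $N$ does \emph{not} depend on $t$: we conclude that $\pi(t)^N=e$ for \emph{every} $t\in\RR$.

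With this uniform annihilating exponent in hand, the conclusion is immediate from divisibility. Fix an arbitrary $s\in\RR$ and set $u:=s/N$, so that $Nu=s$. Because $\pi$ is a group-homomorphism of $(\RR,+)$ into $G$, we have
\[
\pi(s)=\pi(Nu)=\pi(u)^N=e,
\]
where the last equality is the displayed fact applied to $t=u$. Since $s\in\RR$ was arbitrary, $\pi(s)=e$ for all $s$, i.e.\ $\pi$ is trivial.

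The only genuine content, and the step I would flag as the crux (though it is mild), is the passage from the hypothesis ``each $\pi(t)$ has order at most $n$'' to ``a fixed power $N$ annihilates the whole image.'' This is exactly where the uniform bound $n$ in the hypothesis is used; without it one could not choose a single $N$, and the divisibility argument would not close. Once the uniform exponent is secured, divisibility of $\RR$ finishes the proof in one line.
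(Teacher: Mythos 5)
Your proof is correct, and it rests on the same two ingredients as the paper's: the divisibility of $(\RR,+)$ and the observation that every $m_t\le n$ divides $n!$, so that $\pi(t/n!)^{n!}=\pi(t)$ forces triviality. The paper phrases this as a proof by contradiction (assuming $\pi(t)\neq e$, hence $\pi(t/n!)\neq e$, and then deriving $\pi(t)=e$ from the hypothesis applied to $t/n!$), whereas you argue directly by first uniformizing the exponent to $N=n!$; the two are the same argument up to rearrangement.
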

\begin{proof}
Let us assume the contrary, namely that there were a $t$ such that $\pi(t) \neq e$.
Then $\pi(\frac{t}{n!}) \neq e$, since otherwise $\pi(t) = \pi(\frac{t}{n!})^{n!} = e$.
But by assumption there exists $m_t \le n$ such that
\begin{equation*}
\pi\left(\frac{t}{n(n-1)\cdots \hat{m_t} \cdots 2\cdot 1}\right)
= \pi\left(\frac{t}{n!}\right)^{m_t} = e,
\end{equation*}
where $\hat{m_t}$ means the omission of $m_t$ in the product. This is a
contradiction because the $n(n-1)\cdots \hat{m_t} \cdots 2\cdot 1$-th power of
the left hand side is $\pi(t) \neq e$.
\end{proof}

\begin{lemma}\label{modularextension}
Let the inclusion $\N \subset \M$ have finite index.
If $\{\a_t\}$ is one-parameter group of automorphisms
of $\M$ which preserve $\N$ and if it holds that $\a_t|_\N = \s^\varphi_t$,
then $\a_t = \s^{\hat{\varphi}}_t$.
\end{lemma}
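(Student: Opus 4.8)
The plan is to reduce the claim to the purely group-theoretic Lemma~\ref{period} by comparing the two one-parameter groups $\{\a_t\}$ and $\{\s^{\hat{\varphi}}_t\}$. First I would record the basic compatibility on $\N$: since $E$ is a $\hat{\varphi}$-preserving conditional expectation onto $\N$ (indeed $\hat{\varphi}\circ E=\varphi\circ E\circ E=\hat{\varphi}$), Takesaki's theorem guarantees that $\N$ is globally invariant under $\s^{\hat{\varphi}}_t$ and that $\s^{\hat{\varphi}}_t|_\N=\s^{\hat{\varphi}|_\N}_t=\s^{\varphi}_t$, using $\hat{\varphi}|_\N=\varphi$. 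By hypothesis $\a_t|_\N=\s^{\varphi}_t$ as well, so both one-parameter groups restrict to the \emph{same} flow on $\N$.

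Next, observe that each $\a_s$ satisfies the hypotheses of Lemma~\ref{modularcommutation}: it preserves $\N$ (because $\a_s|_\N=\s^{\varphi}_s\in\Aut(\N)$) and its restriction $\s^{\varphi}_s$ preserves $\varphi$. Hence $\a_s$ commutes with $\s^{\hat{\varphi}}_u$ for all $s,u$. Consequently the assignment $\beta_t:=\a_t\circ\s^{\hat{\varphi}}_{-t}$ is again a one-parameter group of automorphisms of $\M$, and by the previous paragraph $\beta_t|_\N=\s^{\varphi}_t\circ(\s^{\varphi}_t)^{-1}=\id_\N$. Thus $t\mapsto\beta_t$ is a group homomorphism from $\RR$ into $G:=\{\theta\in\Aut(\M):\theta|_\N=\id\}$, and it suffices to prove $\beta_t=\id$ for all $t$.

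The crux --- and the only place where the finiteness of the index enters --- is the fact that, for an irreducible inclusion of factors of finite index, the group $G$ of automorphisms of $\M$ fixing $\N$ pointwise is finite (indeed $|G|\le[\M:\N]$). Granting this, every element of $G$ has order at most $n:=|G|$, so the homomorphism $t\mapsto\beta_t$ meets the hypothesis of Lemma~\ref{period} with this uniform $n$; the lemma then forces $\beta_t=\id$ for all $t$, that is $\a_t=\s^{\hat{\varphi}}_t$. The main obstacle is therefore establishing finiteness of $G$. I would derive it from standard subfactor theory: each $\theta\in G$ commutes with $E$ (by uniqueness of the expectation for an irreducible inclusion, $\theta\circ E\circ\theta^{-1}$ is again an expectation onto $\N$, hence equals $E$), so $\theta$ preserves $\hat{\varphi}$ and is implemented on the GNS space by a unique unitary fixing the cyclic and separating vector and commuting with the Jones projection; the finiteness of $G$ is then the finiteness of the Galois group of an irreducible finite-index inclusion, which may alternatively be cited directly.
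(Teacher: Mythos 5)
Your proof is correct and follows essentially the same route as the paper's: form $\beta_t=\a_t\circ\s^{\hat{\varphi}}_{-t}$ using Lemma \ref{modularcommutation}, observe it fixes $\N$ pointwise, and kill it with Lemma \ref{period} together with a finite-index bound on such automorphisms. The only cosmetic difference is that the paper applies Lemma \ref{period} to the image of $\beta$ in $\Out(\M)$ (bounding the outer order of each $\beta_t$ by $[\M:\N]$ via the fixed-point subfactor, then using irreducibility to pass from inner to trivial), whereas you apply it directly in the group $G=\{\theta\in\Aut(\M):\theta|_\N=\id\}$ using $|G|\le[\M:\N]$ --- which is precisely the paper's Lemma \ref{finiteautos}, so you may simply cite it instead of sketching a separate derivation.
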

\begin{proof}
By Lemma \ref{modularcommutation}, $\a_s$ commutes with $\s^{\hat{\varphi}}_t$.
Hence $\b_t := \a_{-t}\circ\s^{\hat{\varphi}}_t$ is again a one-parameter group
of automorphisms of $\M$, preserving $\N$, and its restriction to $\N$ is trivial
by assumption.

We claim that the one-parameter automorphism $\{\b_t\}$ is inner. Once we know this,
the lemma follows since the implementing unitary operators should be in the relative commutant,
which is trivial for an irreducible inclusion.

Suppose the contrary, namely that there were a $t \in \RR$ such that $\b_t$ is outer.
Let $\pi$ be the natural homomorphism $\Aut(\M) \to \Out(\M)$.

We show that the order of $\pi(\b_t)$ is smaller than the index $[\M,\N]$.
Indeed, if $\pi(\b_t)$ has order $p > [\M,\N]$, then $\gamma: \ZZ_p \to \Aut(\M)$,
$\gamma(n) := \b_{nt}$ is an outer action of $\ZZ_p$ on $\M$. If $\pi(\b_t)$ has
infinite order, then $\gamma(n) := \b_{nt}$ is an outer action of $\ZZ$.
In any case, the subfactor $\B^\gamma \subset \B$ has the index larger than
$[\M,\N]$. But this is a contradiction, since we have $\N \subset \M^\gamma \subset \M$
and the index of $\M^\gamma \subset \M$ has to be smaller than or equal to $[\M,\N]$.

Having seen that the order of any element $\pi(\beta_t)$
is smaller than or equal to $[\M,\N]$,
we infer that $\pi(\beta_t)$ is the unit element in $\Out(\M)$ by Lemma \ref{period},
which means $\beta_t$ is inner for each $t$.
\end{proof}

Finally we put a simple remark on a group of automorphisms of
irreducible inclusion $\N \subset \M$ with finite index.
\begin{lemma}\label{finiteautos}
Let $G$ be the group of automorphisms of $\M$ which act identically on $\N$.
Then $|G| \le [\M,\N]$. In particular, if $\{\b_t\}$ is a continuous family of
such automorphisms, then it is constant.
\end{lemma}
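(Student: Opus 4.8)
The plan is to reduce the statement to the outerness of nontrivial elements of $G$ and then to reuse the fixed-point/index mechanism already employed in Lemma \ref{modularextension}, together with the group-theoretic Lemma \ref{period} for the ``in particular'' part.

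First I would observe that every nontrivial $\a\in G$ is outer. Indeed, if $\a=\Ad u$ for a unitary $u\in\M$ and $\a$ acts identically on $\N$, then $u\in \N'\cap\M=\CC\1$ by irreducibility, so $\a=\id$. Hence the canonical projection $\pi:\Aut(\M)\to\Out(\M)$ is injective on $G$, and $G\setminus\{\id\}$ consists of outer automorphisms.

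For the bound $|G|\le[\M,\N]$, I would pass to the standard form of $\M$ and implement each $g\in G$ by its canonical unitary $u_g$ (commuting with the modular conjugation $J$ and fixing the cyclic separating vector). Since $g$ fixes $\N$ pointwise we get $u_g\in\N'$, and since $g$ commutes with the unique expectation $E:\M\to\N$ it also commutes with the Jones projection $e_\N$; thus $u_g$ lies in the relative commutant $\N'\cap\M_1$ of the basic construction $\M_1=\langle\M,e_\N\rangle$. For $g\neq h$ the unitary $u_g^*u_h$ implements the outer automorphism $g^{-1}h$, and outerness is precisely what makes the family $\{u_g\}_{g\in G}$ orthogonal with respect to the canonical trace of $\M_1$; since $\N\subset\M$ is irreducible of finite index, $\N'\cap\M_1$ is finite-dimensional with dimension controlled by $[\M,\N]$, whence $|G|\le[\M,\N]$. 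As a cheaper substitute that already suffices when $G$ is finite, one may instead run the argument of Lemma \ref{modularextension}: a finite subgroup $H\le G$ acts outerly, so $[\M,\M^H]=|H|$, and $\N\subset\M^H$ forces $|H|=[\M,\M^H]\le[\M,\N]$ by multiplicativity of the index.

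The ``in particular'' statement is then immediate. Given a continuous one-parameter family $\{\b_t\}\subset G$, i.e.\! a homomorphism $\RR\to G$, the cyclic-group version of the above shows that each $\b_t$ has finite order bounded by $[\M,\N]$: an element of infinite order would generate an outer $\ZZ$-action, forcing the infinite-index inclusion $\M^\ZZ\subset\M$ with $\N\subset\M^\ZZ$, contradicting finiteness of $[\M,\N]$. Thus $t\mapsto\pi(\b_t)$ is a homomorphism from $\RR$ into $\Out(\M)$ all of whose values have order at most a fixed integer $n\le[\M,\N]$, and Lemma \ref{period} forces $\pi(\b_t)=e$ for all $t$. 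By the first step each $\b_t$ is then inner while fixing $\N$, hence $\b_t=\id$, so the family is constant. The main obstacle is the sharp cardinality bound, namely the dimension estimate $\dim(\N'\cap\M_1)\le[\M,\N]$ together with the orthogonality of the canonical implementers $\{u_g\}$, since a pathological infinite torsion group of bounded exponent would not be excluded by the finite-subgroup argument alone; for the use made of the lemma in the paper, however, only the continuity conclusion is needed, and there the difficulty is entirely absorbed by the cyclic-action/index estimate of Lemma \ref{modularextension} and the elementary Lemma \ref{period}.
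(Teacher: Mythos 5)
Your core argument coincides with the paper's: nontrivial elements of $G$ are outer by irreducibility, and the inclusion $\N\subset\M^G\subset\M$ together with $[\M:\M^G]=|G|$ for an outer action gives $|G|\le[\M,\N]$. The basic-construction argument you sketch is an unnecessary (and, for a type III inclusion, loosely phrased --- there is no canonical trace on $\M_1$) elaboration of the same bound, and your worry about infinite torsion groups of bounded exponent is resolved simply by applying the fixed-point index identity to $G$ itself rather than only to its finite subgroups. The one place you genuinely diverge is the ``in particular'' clause: you read ``continuous family'' as ``one-parameter group'' and route the argument through Lemma \ref{period}, but the lemma is invoked in the body of the paper for the family $\Ad(V_\varphi(t))\circ\s_t^{-1}$, which is not obviously a homomorphism in $t$; the intended --- and much shorter --- argument is that $G$ is finite by the first part, hence a discrete subset of $\Aut(\M)$, so a continuous map from the connected set $\RR$ into $G$ is constant. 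Your version proves a weaker statement than the one actually used, but the general case follows at once from the cardinality bound you already established, so the gap is cosmetic rather than substantive.
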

\begin{proof}
Note that any nontrivial element in $G$ is outer. In fact, if it were inner,
it would be implemented by an unitary $U \in \M$ which commutes with $\N$, hence by
the assumed irreducibility of $\N \subset \M$ it must be scalar.
By considering the inclusion $\N \subset \M^G \subset \M$ we see
that the order of $G$ cannot exceed the index of $\N \subset \M$. The second
statement follows immediately.
\end{proof}

\appsection{KMS condition on locally normal systems}\label{kmscondition}
In the present work we consider KMS states on the quasilocal algebra of conformal nets
with respect to translations or dilations. The typical systems, treated
e.g. in \cite[Section 5.3.1]{BR2}, are $C^*$- or a $W^*$-dynamical systems, but they are not directly applicable to our case.
Indeed, the algebra concerned is the quasilocal $C^*$-algebra generated by local von Neumann
algebras; on the other hand, the automorphisms concerned are translations or dilations, which
are not norm-continuous. Although the modification is rather
straightforward, for the readers' convenience we give a variation of the standard results in \cite{BR2} in a form
applicable to conformal nets.

Let $\M_1 \subset \M_2 \subset \cdots \subset \M_n \subset \cdots$ be a growing sequence
of von Neumann algebras and $\gM$ be the ``quasilocal algebra'' $\overline{\bigcup_n \M_n}^{\|\cdot\|}$.
We consider a state $\varphi$ on $\gM$ which is normal and faithful on each $\M_n$, i.e.
``locally normal and locally faithful''. (When we state some property with the adverb ``locally'',
we mean that the property holds if restricted to each local algebra $\M_n$).
Let $\s^n$ be the modular automorphism of $\M_n$ with respect
to $\varphi$. We assume that, for each $k$, $\s^n_t(\M_k) \subset \M_{k+1}$ for
sufficiently small $t$ irrespective of $n > k$. We assume also that $\s^n$ converges
to some one-parameter automorphism $\sigma$ pointwise *-strongly, $\sigma_t$ is
a locally normal map for each $t$ and $t\mapsto \s_t$ is pointwise *-strongly continuous. Let us call such a dynamical system
a {\bf locally normal system}.
 From these definitions, it is easy to see that $\sigma$ preserves $\varphi$.

\begin{definition}
Suppose that $\gM$ is a $C^*$ (or a $W^*$) algebra, $\sigma$ is a norm (resp. $\s$-weakly)
continuous one-parameter group of automorphisms
and $\psi$ is a state (resp. a normal state) on $\gM$. If for any $x,y \in \M$
and any function $g$ on $\RR$
which is the Fourier transform of a compactly supported function it holds that
\begin{equation*}
\int g(t)\psi(x\s_t(y))dt = \int g(t+i\beta) \psi(\s_t(y)x)dt,
\end{equation*}
then we say that $\psi$ satisfies the {\bf smeared KMS condition} with respect to
$\sigma$.
\end{definition}
In each case, $C^*$-dynamical system or $W^*$-dynamical system,
the usual KMS condition is equivalent to the smeared condition \cite{BR2}.
We use the same term for a locally normal system as well.

\begin{lemma}\label{smeared}
The state $\varphi$ satisfies the smeared KMS condition with respect to $\s$.
\end{lemma}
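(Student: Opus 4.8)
The plan is to deduce the smeared KMS condition for $(\gM,\sigma,\varphi)$ from the ordinary (Tomita--Takesaki) KMS condition on each local algebra $\M_n$, and then to pass to the limit $n\to\infty$ using local normality together with dominated convergence. First I would reduce to local elements. Since $\sigma_t$ is isometric we have the uniform bounds $|\varphi(x\sigma_t(y))|\le\|x\|\,\|y\|$ and $|\varphi(\sigma_t(y)x)|\le\|x\|\,\|y\|$, so both sides of the desired identity are norm-continuous in $x$ and $y$ once integrated against the fixed weights $g$ and $g(\cdot+i\beta)$; by the norm density of $\bigcup_n\M_n$ in $\gM$ it therefore suffices to prove the identity for $x,y\in\M_k$, for an arbitrary fixed $k$. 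Here I use that $g=\hat f$ with $f$ compactly supported and smooth, so that by the Paley--Wiener theorem $g$ extends to an entire function which is of rapid decay, in particular $L^1$, on every horizontal line $\Im z=\beta$; this is what makes the two integrals absolutely convergent.

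The single-algebra input is standard. For each $n\ge k$ the restriction $\varphi|_{\M_n}$ is a normal faithful state whose modular automorphism group is exactly $\s^n$, so $(\M_n,\s^n,\varphi|_{\M_n})$ is a $W^*$-dynamical system for which $\varphi|_{\M_n}$ satisfies the KMS condition at the modular inverse temperature $\beta$. By the equivalence of the ordinary and smeared formulations recalled above (cf.\ \cite{BR2}) it therefore satisfies, for all $x,y\in\M_k\subset\M_n$,
\[
\int g(t)\,\varphi\big(x\,\s^n_t(y)\big)\,dt = \int g(t+i\beta)\,\varphi\big(\s^n_t(y)\,x\big)\,dt .
\]

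The heart of the argument is then to send $n\to\infty$. Fix $t\in\RR$ and let $s_0>0$ be the bound in the hypothesis $\s^n_s(\M_j)\subset\M_{j+1}$ (valid for $|s|\le s_0$ and all $n>j$). Choosing $N$ with $|t|\le N s_0$, the group law $\s^n_t=(\s^n_{t/N})^{N}$ together with this ``climb by one'' property places $\s^n_t(y)$ inside the fixed algebra $\M_{k+N}$ for all $n>k+N$; the $*$-strong limit $\sigma_t(y)=\lim_n\s^n_t(y)$ then also lies in $\M_{k+N}$, which is $*$-strongly closed. Because left and right multiplication by the fixed bounded element $x\in\M_k\subset\M_{k+N}$ is $*$-strongly continuous on bounded sets, we obtain $x\,\s^n_t(y)\to x\,\sigma_t(y)$ and $\s^n_t(y)\,x\to\sigma_t(y)\,x$ $*$-strongly inside $\M_{k+N}$; since $\varphi|_{\M_{k+N}}$ is normal and hence $*$-strongly continuous on bounded sets, this yields the pointwise-in-$t$ convergences $\varphi(x\s^n_t(y))\to\varphi(x\sigma_t(y))$ and $\varphi(\s^n_t(y)x)\to\varphi(\sigma_t(y)x)$. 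The two integrands are dominated by $\|x\|\,\|y\|\,|g(t)|$ and $\|x\|\,\|y\|\,|g(t+i\beta)|$ respectively, both integrable, so dominated convergence lets me pass the limit through the integrals and obtain the smeared identity for $\varphi$ and $\sigma$; the reduction of the first paragraph then extends it to all $x,y\in\gM$.

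I expect the main obstacle to be precisely this interchange of limit and integral: pointwise convergence of the integrands is \emph{not} automatic from $*$-strong convergence of $\s^n_t(y)$ and the mere norm-continuity of $\varphi$, since $\varphi$ is only locally normal. The hypothesis $\s^n_t(\M_j)\subset\M_{j+1}$ is exactly what saves the argument, by confining, for each fixed $t$, the whole family $\{\s^n_t(y)\}_n$ and its limit to a single local algebra on which $\varphi$ \emph{is} genuinely normal. Once this confinement is in place the remaining steps are routine, the only mild care being the integrability of $g$ on the shifted line, which is guaranteed by Paley--Wiener.
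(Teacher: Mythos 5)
Your proof follows the same route as the paper's: establish the smeared identity for each $\s^n$ on local elements from standard $W^*$-dynamical system theory, pass to the limit $n\to\infty$ by dominated convergence, and extend to all of $\gM$ by norm density. You in fact supply a detail the paper leaves implicit --- that the hypothesis $\s^n_t(\M_j)\subset\M_{j+1}$ confines the family $\{\s^n_t(y)\}_n$ and its limit to a single local algebra on which $\varphi$ is genuinely normal, which is what legitimizes the pointwise convergence of the integrands --- so the argument is sound and, if anything, slightly more careful than the original.
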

\begin{proof}
For each $x,y \in \M_k$, $\varphi$ satisfies the smeared condition with respect to
$\s^n$ where $n \ge k$. Namely, it holds that
\begin{equation*}
\int g(t)\varphi(x\s^n_t(y))dt = \int g(t+i\beta) \varphi(\s^n_t(y)x)dt.
\end{equation*}
We assumed that, for a fixed $t$, $\s^n_t(y)$ converges strongly to $\s_t(y)$.
Then the condition for $\s$ follows by the Lebegues' dominated convergence theorem.

A general element in $\gM$ can be approximated from $\{\M_n\}$ by norm.
\end{proof}

We fix an element $y \in \M_n$ and define the analytic elements
\begin{equation}\label{eq:smooth-y}
y_\e := \int \s_t(y) \sqrt{\frac{\pi}{\e}}\exp\left(-\frac{t^2}{\e}\right) dt.
\end{equation}
s.t. $y_\e \rightarrow y$ *-strongly for $\e \rightarrow 0$.
These are well-defined as elements of $\gM$. Indeed, if we truncate the integral
to a compact interval, then the integrand lies in some local algebra and the integral
defines a local element. Such truncated integrals converge in norm because of the
Gaussian factor, hence define an element of the $C^*$-algebra.
\begin{lemma}\label{entire}
For any locally normal state $\psi$, $\psi(\s_t(y_\e))$ continues to
an entire function of $t$.
\end{lemma}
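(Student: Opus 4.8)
The plan is to reduce the statement to the elementary fact that the convolution of a bounded measurable function with a Gaussian extends to an entire function. First I would compute $\psi(\s_t(y_\e))$ explicitly for real $t$. Writing $w_\e(u)=\sqrt{\pi/\e}\,\exp(-u^2/\e)$ for the Gaussian weight and truncating the defining integral as $y_\e^{(T)}:=\int_{-T}^{T}\s_u(y)\,w_\e(u)\,du$, each $y_\e^{(T)}$ is a local element (a $\s$-weak integral sitting in a single $\M_n$, by a covering argument applied to the compact set $\{\s_u(y):|u|\le T\}$) and $y_\e^{(T)}\to y_\e$ in norm. Since $\s_t$ is locally normal it commutes with the $\s$-weak integral, and since $\psi$ is locally normal it passes through it as well, so
\begin{equation*}
\psi(\s_t(y_\e^{(T)}))=\int_{-T}^{T}\psi(\s_{t+u}(y))\,w_\e(u)\,du .
\end{equation*}
Letting $T\to\infty$, the left-hand side converges because $\psi\circ\s_t$ is a norm-continuous functional, while the right-hand side converges by dominated convergence (the integrand is bounded by $\|y\|\,w_\e(u)$). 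After the substitution $v=t+u$ this yields, for real $t$,
\begin{equation*}
\psi(\s_t(y_\e))=\int_{-\infty}^{\infty}h(v)\,w_\e(v-t)\,dv,\qquad h(v):=\psi(\s_v(y)).
\end{equation*}

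Next I would record that $h$ is bounded, $|h(v)|\le\|y\|$, and continuous. Indeed, for $v$ near $v_0$ one has $\s_v(y)=\s_{v-v_0}(\s_{v_0}(y))$, which remains within a fixed local algebra $\M_{n+1}$ (by the defining property $\s_s(\M_n)\subset\M_{n+1}$ for small $s$ of the locally normal system) and converges $*$-strongly to $\s_{v_0}(y)$; local normality of $\psi$ then gives $h(v)\to h(v_0)$. In particular $h$ is bounded and measurable, which is all the final step needs.

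Finally I would define, for arbitrary complex $t$,
\begin{equation*}
F(t):=\int_{-\infty}^{\infty}h(v)\,\sqrt{\tfrac{\pi}{\e}}\,\exp\!\Big(-\tfrac{(v-t)^2}{\e}\Big)\,dv,
\end{equation*}
which agrees with $\psi(\s_t(y_\e))$ on the real axis. For $t=a+ib$ one has $|\exp(-(v-t)^2/\e)|=e^{b^2/\e}\exp(-(v-a)^2/\e)$, so the integral converges absolutely and, for $t$ ranging over any compact subset of $\CC$, is dominated by a single integrable function of $v$. Since for each fixed $v$ the integrand is entire in $t$, an application of Morera's theorem together with Fubini (equivalently, differentiation under the integral sign) shows that $F$ is holomorphic on all of $\CC$, hence the desired entire continuation.

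I expect the main obstacle to lie in the first step: justifying the interchange of the functional $\psi\circ\s_t$ with the integral defining $y_\e$. This is where local normality of both $\psi$ and $\s_t$, the $\s$-weak convergence of the truncated integrals inside a fixed local algebra, and the norm approximation $y_\e^{(T)}\to y_\e$ must be combined carefully. Once the convolution formula is established, the passage to an entire function is a routine complex-analytic argument.
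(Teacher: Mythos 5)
Your proposal is correct and follows essentially the same route as the paper: both pass $\psi$ through the (truncated, then norm-approximated) Gaussian integral using local normality, arrive at the convolution of the bounded function $v\mapsto\psi(\s_v(y))$ with the Gaussian kernel, and observe that this convolution extends to an entire function of the translation parameter. Your write-up merely supplies more detail (the covering argument placing $\{\s_u(y):|u|\le T\}$ in a fixed local algebra, and the Morera/Fubini justification of analyticity) than the paper's terse version.
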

\begin{proof}
By the assumed local normality of $\psi$, for a truncated integral,
the integral and $\psi$ commute. The full integral is approximated by norm,
hence the full integral and $\psi$ commute as well. Namely, for $z \in \CC$, we have
\begin{equation*}
\psi\left(\int \s_t(y) \sqrt{\frac{\pi}{\e}}\exp\left(-\frac{(t-z)^2}{\e}\right)\right) dt
 = \int \psi(\s_t(y)) \sqrt{\frac{\pi}{\e}}\exp\left(-\frac{(t-z)^2}{\e}\right) dt.
\end{equation*}
The right hand side is analytic and the left hand side is equal to $\psi(\s_z(y_\e))$
when $z$ is real.
\end{proof}

\begin{lemma}\label{kmsforanalytic}
For $x,y\in \M_n$, there is an analytic function $f$ such that
\begin{equation*}
f(t) = \varphi(x\s_t(y_\e)), f(t+i\beta) = \varphi(\s_t(y_\e)x).
\end{equation*}
\end{lemma}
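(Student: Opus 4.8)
The plan is to produce the analytic function explicitly by setting $f(z):=\varphi(x\s_z(y_\e))$ and then to verify the two boundary identities: the one on the real axis holds by construction, while the identity on the shifted line $t+i\beta$ is the real content and will follow from the smeared KMS condition of Lemma \ref{smeared} combined with a contour-shift argument made possible by the Gaussian regularization built into $y_\e$.

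First I would check that $f$ is entire. Applying Lemma \ref{entire} to the linear functional $a\mapsto\varphi(xa)$, which is normal on each $\M_m$ (since $x\in\M_n$ and $\varphi$ is locally normal), one obtains that $z\mapsto\varphi(x\s_z(y_\e))$ extends to an entire function with $f(t)=\varphi(x\s_t(y_\e))$ for real $t$, as required. The crucial extra ingredient, not needed for Lemma \ref{entire} itself but essential here, is a uniform bound in horizontal strips. Using the integral representation $\s_z(y_\e)=\int\s_t(y)\sqrt{\pi/\e}\,\exp(-(t-z)^2/\e)\,dt$ together with $|\varphi(x\s_t(y))|\le\|x\|\,\|y\|$, and estimating the modulus of the Gaussian at $z=u+iv$ by $e^{v^2/\e}\exp(-(t-u)^2/\e)$, one gets a bound of the form $|f(u+iv)|\le C\,e^{v^2/\e}\|x\|\,\|y\|$. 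In particular $f$ is bounded on the strip $0\le\Im z\le\beta$ \emph{uniformly in} $\Re z$; this uniform strip bound is exactly the feature the smoothing $y\mapsto y_\e$ was introduced to provide.

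Next I would bring in the smeared KMS condition. For $g=\hat\chi$ with $\chi\in C^\infty_c(\RR)$, Lemma \ref{smeared} applied to $x$ and to $y_\e\in\gM$ gives $\int g(t)f(t)\,dt=\int g(t+i\beta)\,\varphi(\s_t(y_\e)x)\,dt$. On the other hand, since $f$ is entire and strip-bounded while $g$ continues to an entire function that decays rapidly in the real direction on each horizontal line (Paley--Wiener), Cauchy's theorem on the rectangle $[-R,R]\times[0,\beta]$ — whose vertical sides vanish as $R\to\infty$ by the strip bound on $f$ and the decay of $g$ — yields $\int g(t)f(t)\,dt=\int g(t+i\beta)f(t+i\beta)\,dt$. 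Comparing the two identities gives $\int g(t+i\beta)\big(f(t+i\beta)-\varphi(\s_t(y_\e)x)\big)\,dt=0$ for every such $g$. Finally, as $\chi$ runs over $C^\infty_c(\RR)$ the functions $t\mapsto g(t+i\beta)$, which are the Fourier transforms of $\chi(s)e^{-\beta s}$, run over the Fourier transforms of all of $C^\infty_c(\RR)$; since $t\mapsto f(t+i\beta)-\varphi(\s_t(y_\e)x)$ is bounded and continuous, hence a tempered distribution, pairing it to zero against this family forces it to vanish identically. This gives $f(t+i\beta)=\varphi(\s_t(y_\e)x)$ and completes the proof.

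I expect the main obstacle to be the rigorous justification of the contour shift, where one must simultaneously control the growth of $f$ across the strip and the decay of $g$ along it; it is precisely the Gaussian factor in the definition of $y_\e$ that secures the uniform strip bound on $f$, without which $f$ could grow in the real direction and the vertical pieces of the rectangle would fail to vanish. The concluding density argument stripping off the integral is routine once the strip bound and the smeared identity are both in hand.
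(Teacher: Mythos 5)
Your proposal is correct and follows essentially the same route as the paper: define $f(z)=\varphi(x\s_z(y_\e))$, invoke Lemma \ref{entire} for entirety, combine the smeared KMS condition with a contour shift of $\int g(t)f(t)\,dt$ to the line $\Im z=\beta$, and conclude from the arbitrariness of $g$. The paper states the contour shift and the final density step without justification, whereas you supply the uniform strip bound coming from the Gaussian and the Paley--Wiener/tempered-distribution argument; these are exactly the details needed to make the paper's terse proof rigorous.
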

\begin{proof}
We define $f$ by the first equation. We saw that $f$ is entire in Lemma \ref{entire}.
By Lemma \ref{smeared}, for any $g$, $\hat{g} \in \mathscr{D}$, it holds that
\begin{multline*}
\int g(t+i\beta)\varphi(\s_t(y_\e)x)dt = \int g(t)\varphi(x\s_t(y_\e)) dt = \\
= \int g(t)f(t)dt = \int g(t+i\beta)f(t+i\beta)dt.
\end{multline*}
Since $g$ is arbitrary under the condition above, we obtain the second equation.
\end{proof}

\begin{lemma}\label{uniform}
For $x,y \in \M_n$, $\varphi(x\s_t(y_\e))$ (respectively $\varphi(\s_t(y_\e)x)$) converges to
$\varphi(x\s_t(y))$ (respectively $\varphi(\s_t(y)x)$) uniformly on $t$.
\end{lemma}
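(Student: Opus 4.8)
The plan is to reduce the statement to the uniform continuity, with a modulus independent of $t$, of the two bounded functions $F(t):=\varphi(x\s_t(y))$ and $G(t):=\varphi(\s_t(y)x)$, and then to finish by a routine mollifier estimate. First I would record the identity obtained exactly as in Lemma \ref{entire}: since $\s_t$ and $\varphi$ are locally normal, both commute with the defining integral \eqref{eq:smooth-y} of $y_\e$ (first for the truncated integrals, then by norm approximation), so that, writing $k_\e$ for the Gaussian density in \eqref{eq:smooth-y} normalized so that $\int k_\e=1$ (whence $y_\e\to y$), one has
\[
\varphi(x\s_t(y_\e)) = \int \varphi(x\s_{t+s}(y))\,k_\e(s)\,ds = \int F(t+s)\,k_\e(s)\,ds,
\]
and likewise $\varphi(\s_t(y_\e)x)=\int G(t+s)\,k_\e(s)\,ds$. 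As $\int k_\e=1$, the quantity to be controlled is $\int [F(t+s)-F(t)]\,k_\e(s)\,ds$ (resp. the same with $G$).

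The crux is the uniform continuity of $F$ and $G$, and the key is the $\s$-invariance of $\varphi$. For $F$, applying $\s_{-r}$ inside $\varphi$ and using invariance I would write
\[
F(t+r)-F(t) = \varphi\big((\s_{-r}(x)-x)\,\s_t(y)\big),
\]
and then apply Cauchy--Schwarz $|\varphi(a^*b)|^2\le \varphi(a^*a)\,\varphi(b^*b)$ with $a=\s_{-r}(x^*)-x^*$ and $b=\s_t(y)$. The second factor is $\varphi(\s_t(y)^*\s_t(y))=\varphi(\s_t(y^*y))=\varphi(y^*y)$, which is \emph{independent of $t$} by invariance; the first factor equals $\|\s_{-r}(x^*)\Phi-x^*\Phi\|^2$ in the GNS space of $\varphi$ (with cyclic vector $\Phi$) and tends to $0$ as $r\to0$ by the assumed pointwise $*$-strong continuity of $t\mapsto\s_t$. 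Hence $|F(t+r)-F(t)|\le \varphi(y^*y)^{1/2}\,\|\s_{-r}(x^*)\Phi-x^*\Phi\|$, a bound with no $t$-dependence, giving uniform continuity of $F$. The analogous identity $G(t+r)-G(t)=\varphi(\s_t(y)(\s_{-r}(x)-x))$ together with Cauchy--Schwarz, now arranged so that the $t$-dependent factor is $\varphi(\s_t(y)\s_t(y)^*)=\varphi(yy^*)$ and the vanishing factor is $\|\s_{-r}(x)\Phi-x\Phi\|^2$, handles $G$ in the same way.

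Finally I would run the mollifier estimate: given $\delta>0$, uniform continuity provides $\eta>0$ with $|F(t+s)-F(t)|<\delta$ for all $t$ whenever $|s|<\eta$, so
\[
\big|\varphi(x\s_t(y_\e))-\varphi(x\s_t(y))\big| \le \delta\!\int_{|s|<\eta}\! k_\e(s)\,ds + 2\|x\|\,\|y\|\!\int_{|s|\ge\eta}\! k_\e(s)\,ds .
\]
The first term is $\le\delta$ and the second tends to $0$ as $\e\to0$ because the Gaussian mass escaping any fixed neighbourhood of $0$ vanishes; neither estimate involves $t$, so the convergence is uniform, and the identical bound applies to $G$. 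I expect the only genuine obstacle to be the uniform continuity step, and everything there hinges on the observation that invariance of $\varphi$ under $\s$ collapses the $t$-dependent Cauchy--Schwarz factor $\varphi(\s_t(y)^*\s_t(y))$ to the constant $\varphi(y^*y)$, which is exactly what yields a $t$-independent modulus of continuity.
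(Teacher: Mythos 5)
Your proof is correct, but it takes a different (and longer) route than the paper's. Both arguments hinge on the same key mechanism --- Cauchy--Schwarz combined with the $\s$-invariance of $\varphi$ to produce a bound with no $t$-dependence --- but the paper applies it in one stroke to $\varphi\bigl(x\,\s_t(y_\e-y)\bigr)$, obtaining $|\varphi(x\s_t(y_\e-y))|^2\le\varphi(xx^*)\,\varphi\bigl((y_\e-y)^*(y_\e-y)\bigr)$, so that uniformity in $t$ is immediate and the whole lemma reduces to the single $t$-free limit $\varphi\bigl((y_\e-y)^*(y_\e-y)\bigr)\to 0$, which is then checked by truncating the integral to a local element $\tilde y_\e\in\M_{n+1}$ and invoking local normality of $\varphi$ together with the $*$-strong convergence $y_\e\to y$. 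You instead first establish the convolution identity $\varphi(x\s_t(y_\e))=\int F(t+s)k_\e(s)\,ds$, prove that $F$ and $G$ are uniformly continuous (via Cauchy--Schwarz applied to $\s_{-r}(x)-x$ rather than to $y_\e-y$), and close with a standard mollifier estimate; this buys you the reusable fact that $t\mapsto\varphi(x\s_t(y))$ has a $t$-independent modulus of continuity, at the cost of the extra commutation-with-the-integral step and the $\varepsilon/\eta$ bookkeeping. One point you should make explicit: the quantity $\|(\s_{-r}(x^*)-x^*)\Phi\|^2=\varphi\bigl((\s_{-r}(x)-x)(\s_{-r}(x^*)-x^*)\bigr)$ lives in the GNS space of $\varphi$, whereas the assumed pointwise $*$-strong continuity of $\s$ is in the ambient Hilbert space; to conclude that it vanishes as $r\to0$ you must pass through the local normality of $\varphi$ on $\M_{n+1}$ (which contains $\s_{-r}(x)$ for small $r$), exactly the device the paper uses when it approximates $\varphi|_{\M_{n+1}}$ by vector functionals. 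With that sentence added, your argument is complete.
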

\begin{proof}
We just prove the first, since the second is analogous by the assumed
*-strong convergence of the modular automorphisms.
Note that, by the Schwarz inequality and by the invariance of $\varphi$ with respect to $\s$,
we have
\begin{equation*}
\|\varphi(x\s_t(y_\e-y))\|^2 \le \varphi(x^*x)\varphi\left((y_\e-y)^*(y_\e-y)\right),
\end{equation*}
hence the uniformity is not a problem once we show the convergence of the right hand side.

By hypothesis, there is a $\d>0$ s.t. $\s_t(M_n)\subset M_{n+1}$ for $|t|\leq\d$.
Let us define $\tilde{y}_\e$ by the truncation of the integral in \eqref{eq:smooth-y}
to the subset $[-\d,\d]\subset\RR$. It follows that $\tilde{y}_\e\in M_{n+1}$,
$\left\|\tilde{y}_\e\right\| \leq \left\|y\right\|$ and, as the norm difference $\left\| \tilde{y}_\e - y_\e\right\|$ tends to $0$, it is enough to show the convergence of the right hand side with the local elements $\tilde{y}_\e$ in place of $y_\e$.
The restriction of  $\varphi$ to $M_{n+1}$ is normal and can be approximated in norm by linear
combinations of weakly continuous functionals of the form $\<\xi,\cdot \, \eta\>$ with a pair of vectors $\xi,\eta$.
Since $\<(y-\tilde{y}_\e)\xi,(y-\tilde{y}_\e)\eta\>$ is convergent to $0$ and the sequence $\tilde{y}_\e$ is bounded
the desired convergence follows.
\end{proof}

\begin{proposition}\label{kmsconvergence}
The state $\varphi$ satisfies the KMS condition with respect to $\s$.
\end{proposition}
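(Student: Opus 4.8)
The plan is to build, for arbitrary $x,y\in\gM$, a function that is analytic on the open strip $0<\Im z<\beta$, bounded and continuous on its closure, and carries boundary values $\varphi(x\s_t(y))$ and $\varphi(\s_t(y)x)$. Since $\bigcup_n\M_n$ is norm dense in $\gM$ and the boundary values of any such candidate are controlled by $\|x\|\,\|y\|$, the final passage from $x,y\in\bigcup_n\M_n$ to general $x,y\in\gM$ will be the routine three-line (maximum-modulus on a strip) argument: norm convergence of $x,y$ forces uniform convergence of the associated functions on the closed strip. Thus it suffices to treat $x,y\in\M_n$ for a fixed $n$.

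So I would fix $x,y\in\M_n$ and, for each $\e>0$, form the smeared element $y_\e$ of \eqref{eq:smooth-y}. By Lemma \ref{entire} the function $f_\e(z):=\varphi(x\s_z(y_\e))$ is entire, and by Lemma \ref{kmsforanalytic} it already satisfies the two KMS boundary identities $f_\e(t)=\varphi(x\s_t(y_\e))$ and $f_\e(t+i\beta)=\varphi(\s_t(y_\e)x)$. First I would record that each $f_\e$ is in fact bounded on the closed strip: writing $\s_z(y_\e)=\int\s_u(y)\,g_\e(u-z)\,du$ with $g_\e$ the Gaussian kernel of \eqref{eq:smooth-y}, a direct estimate of $|g_\e(u-z)|$ for $z=t+i\eta$, $0\le\eta\le\beta$, yields $\|\s_z(y_\e)\|\le C_\e\,\|y\|$ with $C_\e$ depending only on $\e$ and $\beta$; hence $|f_\e|\le\|x\|\,C_\e\,\|y\|$ throughout the strip. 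On the two boundary lines one moreover has the $\e$-uniform bound $|f_\e|\le \mathrm{const}\cdot\|x\|\,\|y\|$, coming from $\|y_\e\|\le\mathrm{const}\cdot\|y\|$.

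The heart of the proof is to let $\e\to0$. Each difference $f_\e-f_{\e'}$ is analytic on the open strip and bounded and continuous on its closure, so the maximum-modulus principle for a strip gives that its supremum over the strip equals the larger of its two boundary suprema. By Lemma \ref{uniform} both boundary suprema tend to $0$ as $\e,\e'\to0$, so $\{f_\e\}$ is uniformly Cauchy on the closed strip. Its uniform limit $f$ is therefore continuous and bounded on the closed strip and analytic in the interior (a uniform limit of analytic functions), and Lemma \ref{uniform} identifies its boundary values as $\varphi(x\s_t(y))$ and $\varphi(\s_t(y)x)$. This $f$ is the required KMS function for the pair $(x,y)$, which completes the argument for $x,y\in\M_n$ and, by the density step above, for all $x,y\in\gM$.

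The main obstacle I expect is exactly the $\e\to0$ limit: the \emph{a priori} bounds $C_\e$ on $f_\e$ inside the strip blow up as $\e\to0$, so convergence cannot be read off from the strip estimate directly. The whole point is to transport the good, $\e$-uniform control available \emph{on the boundary} (Lemma \ref{uniform}) into the interior through the maximum-modulus principle; keeping track that the boundary values stay bounded by a fixed multiple of $\|x\|\,\|y\|$ is also what guarantees that the limiting $f$ is bounded, and hence that the concluding density argument applies.
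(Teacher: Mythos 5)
Your proof is correct and follows essentially the same route as the paper's: establish the KMS analytic function for the pairs $(x,y_\e)$ via Lemmas \ref{entire} and \ref{kmsforanalytic}, pass to the limit $\e\to 0$ using the uniform boundary convergence of Lemma \ref{uniform} together with the three-line/maximum-modulus principle on the strip, and then extend to general $x,y\in\gM$ by norm density. You are in fact somewhat more careful than the paper in verifying the a priori boundedness of each $f_\e$ on the closed strip (needed to legitimately invoke the maximum-modulus principle), which the paper leaves implicit.
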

\begin{proof}
As we saw in Lemma \ref{kmsforanalytic}, the KMS condition is satisfied
for any pair $x, y_\e$ where $x,y \in \M_n$. As $\e$ tends to $0$,
the analytic function $\varphi(x\s_t(y_\e))$ tends to $\varphi(x\s_t(y))$
uniformly on the strip by Lemma \ref{uniform} and by the three-line theorem.
The limit function connects $\varphi(x\s_t(y_\e))$ and $\varphi(\s_t(y_\e)x)$.
Any pair of elements in $\gM$ can be approximated in norm by elements in $\M_n$, hence the
same reasoning completes the proof.
\end{proof}

\appsection{Remarks on local diffeomorphisms}\label{localdiffeomorphisms}
We consider diffeomorphisms of $\RR$. We say simply a sequence of diffeomorphisms $\{\eta_n\}$
converges smoothly to a diffeomorphism $\eta$ when $\{\eta_n\}$ and all their
derivatives converge to $\eta$ uniformly on each compact set. Recall that
any diffeomorphism is a smooth ($C^\infty$) function $\RR \to \RR$ with strictly positive derivative.

\begin{lemma}\label{compacttranslation}
For each interval $I$, there is a diffeomorphism $\tilde{\t}_s$ with compact support
which coincides with translation $\tau_s$ on $I$.
\end{lemma}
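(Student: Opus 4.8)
The plan is to realize $\tilde{\t}_s$ as a compactly supported perturbation of the identity, writing $\tilde{\t}_s(t) = t + g(t)$ for a suitable smooth function $g\colon\RR\to\RR$. The three required features translate into conditions on $g$: first, $g \equiv s$ on $I$, so that $\tilde{\t}_s$ agrees with the translation $\tau_s$ there; second, $g \equiv 0$ outside a bounded set, so that $\tilde{\t}_s$ coincides with the identity off a compact set (i.e.\! has compact support); and third, $g'(t) > -1$ for all $t$, so that $\tilde{\t}_s'(t) = 1 + g'(t) > 0$ everywhere, which is exactly what makes $\tilde{\t}_s$ a diffeomorphism. Thus the whole statement reduces to producing one such $g$, and here we use in an essential way that $I$ is a bounded interval.

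First I would fix a bounded open interval $J$ with $I \Subset J$ and a model smooth monotone step function $h\colon\RR\to[0,1]$ with $h \equiv 0$ on $(-\infty,0]$ and $h \equiv 1$ on $[1,\infty)$. Rescaling, the map $t \mapsto h(t/L)$ interpolates smoothly from $0$ to $1$ over an interval of length $L$, with derivative bounded in absolute value by $\|h'\|_\infty / L$. Gluing two scaled copies of $h$ (one rising, one falling), multiplied by $s$, on the two components of $J \setminus \bar I$, and setting $g \equiv s$ on $I$ and $g\equiv 0$ outside $J$, yields a smooth $g$ with the first two required properties; moreover, if both transition regions are taken of length $L$, then $\|g'\|_\infty \le |s|\,\|h'\|_\infty / L$.

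Since $I$ is bounded, there is no obstruction to stretching the transition regions as much as we like; choosing $L > |s|\,\|h'\|_\infty$ forces $\|g'\|_\infty < 1$, hence $g' > -1$ everywhere, which is the third condition. With $g$ so chosen, $\tilde{\t}_s = \id + g$ is smooth with strictly positive derivative, and it equals $t$ for $|t|$ large, hence is proper; being strictly increasing and proper it is a bijection of $\RR$ onto $\RR$, and its inverse is again smooth by the inverse function theorem. This produces the desired (orientation-preserving) diffeomorphism.

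There is essentially no hard step here. The only point requiring any care is the derivative bound $g' > -1$, which is precisely what guarantees that the perturbation $\id + g$ remains a diffeomorphism rather than merely a smooth surjection; and this is arranged simply by lengthening the transition regions, which is possible exactly because $I$ has finite length.
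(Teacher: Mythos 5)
Your proposal is correct and follows essentially the same route as the paper: both write $\tilde{\t}_s=\id+g$ with $g$ a compactly supported smooth plateau function equal to $s$ on $I$, and both secure $g'>-1$ by stretching (dilating) the transition regions so that the slope of the ramps is uniformly small. The only cosmetic difference is that the paper normalizes $s=1$ and builds the ramps as indefinite integrals of a compactly supported density with values strictly less than $1$, while you rescale a fixed model step function; these are the same mechanism.
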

\begin{proof}
We may assume $s = 1$.
There is a smooth non-negative function with a compact support whose value is strictly less than $1$.
By dilating this function, we may assume that its integral over $\RR$ is $1$.
By considering its indefinite integral, we obtain a smooth non-negative function
which is $0$ on $\RR_-$ and $1$ on some half-line $\RR_+ +a$, $a > 0$, with derivative
strictly less than $1$.
Similarly we obtain a smooth non-negative function which is $1$ on $\RR_-$ and
$0$ on $\RR_+ + a$ with derivative strictly larger than $-1$.
By translating and multiplying these functions, we obtain a non-negative function with
compact support with derivative larger than $-1$ which is $1$ on $I$.
The desired diffeomorphism is the function represented by this function added by
the identity function $\id(t) = t$.
\end{proof}

\begin{lemma}\label{compactsequencetr}
If a sequence of diffeomorphisms $\eta_n$ of $\RR$ converges smoothly
to translation $\tau_s$, then for any interval $I$ there is an
interval $\tilde{I} \supset I$ and a smoothly convergent sequence of diffeomorphisms $\tilde{\eta}_n$
with support in $\tilde{I}$ which coincides with $\eta_n$ on $I$ (hence converges
smoothly on $I$ to $\tau_s$).
\end{lemma}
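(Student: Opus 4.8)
The plan is to modify the $\eta_n$ only inside a transition region by an \emph{additive cutoff of the displacement}, using Lemma \ref{compacttranslation} to supply the limiting compactly supported diffeomorphism. First I would fix an interval $I_1$ with $\overline I\subset I_1$ and choose $\psi\in C^\infty(\RR)$ with $\psi\equiv 1$ on a neighborhood of $\overline I$ and compact support such that $t\mapsto t+s\,\psi(t)$ is a diffeomorphism of $\RR$; the existence of such a $\psi$ is exactly what Lemma \ref{compacttranslation} provides, by taking $s\,\psi$ to be the displacement of the compactly supported diffeomorphism that coincides with $\tau_s$ on $I_1$ (for $s=0$ any cutoff works, since then $\eta_n\to\mathrm{id}$). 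By compactness one then has $1+s\,\psi'(t)\ge c_0>0$ for all $t$. Taking $\tilde I$ to be an interval containing $\supp\psi$, I set
\[
\tilde\eta_n(t):=t+\psi(t)\bigl(\eta_n(t)-t\bigr).
\]
By construction $\tilde\eta_n=\eta_n$ on $I$ (where $\psi\equiv 1$) and $\tilde\eta_n=\mathrm{id}$ off $\tilde I$, so $\supp\tilde\eta_n\subset\tilde I$; what remains is to check that each $\tilde\eta_n$ is a diffeomorphism and that the sequence converges smoothly.

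For the tail I would differentiate,
\[
\tilde\eta_n'(t)=1+\psi'(t)\bigl(\eta_n(t)-t\bigr)+\psi(t)\bigl(\eta_n'(t)-1\bigr).
\]
Since $\eta_n\to\tau_s$ smoothly, on the compact set $\overline{\tilde I}$ one has $\eta_n(t)-t\to s$ and $\eta_n'(t)\to 1$ uniformly, so $\tilde\eta_n'\to 1+s\,\psi'$ uniformly on $\RR$. As $1+s\,\psi'\ge c_0>0$, there is an $N$ with $\tilde\eta_n'\ge c_0/2>0$ for all $n\ge N$; being equal to the identity off a compact set and having everywhere positive derivative, each such $\tilde\eta_n$ is a diffeomorphism of $\RR$. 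The same smooth convergence of $\eta_n$, together with the fixed smooth factor $\psi$, gives $\tilde\eta_n\to t+s\,\psi(t)$ smoothly (and on $I$ to $\tau_s$).

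It remains to treat the finitely many indices $n<N$, and this is the only delicate point: the term $\psi'(t)\bigl(\eta_n(t)-t\bigr)$ is supported in the transition region $\tilde I\setminus I$, where smooth convergence gives no control for small $n$, so the naive cutoff formula may destroy monotonicity there. I would handle these terms individually. After enlarging $\tilde I=(a',b')$ so that it also contains the finitely many compact images $\eta_n(\overline I)$ $(n<N)$, with $a'<\eta_n(a)$ and $\eta_n(b)<b'$, I would extend the orientation-preserving embedding $\eta_n|_{\overline I}$ to a diffeomorphism $\tilde\eta_n$ supported in $\tilde I$: keep $\tilde\eta_n=\eta_n$ on $\overline I$, set $\tilde\eta_n=\mathrm{id}$ near $\partial\tilde I$, and interpolate monotonically and smoothly across the two gaps $[a',a]$ and $[b,b']$, which is possible precisely because of the room created by the enlargement. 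Redefining finitely many terms does not affect smooth convergence, so the resulting sequence $\{\tilde\eta_n\}$ has all the required properties.

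The main obstacle, as indicated, is exactly this possible loss of positivity of $\tilde\eta_n'$ — hence of the diffeomorphism property — for the initial indices, where the displacement is not yet uniformly close to $s$; the additive cutoff behaves well only in the tail, and the finitely many remaining terms must be repaired by the explicit finite extension above.
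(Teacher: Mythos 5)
Your proof is correct and is essentially the paper's argument: both cut off the displacement with a fixed bump function and invoke Lemma \ref{compacttranslation}, the only organizational difference being that the paper first pre-composes with $\tau_{-s}$ so that the truncated displacement tends to $0$ (making positivity of the derivative for large $n$ immediate) and then post-composes with the compactly supported $\tilde{\tau}_s$, whereas you truncate $\eta_n-\mathrm{id}$ directly and therefore need $1+s\psi'>0$, which is exactly why your $\psi$ must be taken from Lemma \ref{compacttranslation}. Your explicit repair of the finitely many initial indices, for which the cutoff formula need not yield a diffeomorphism, is handled correctly and addresses a point the paper passes over (its proof only asserts positivity of the derivative ``for sufficiently large $n$'').
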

\begin{proof}
Note that $\eta_n \circ \tau_{-s}$ converges smoothly to the identity map $\id$.
Let $g_n$ be functions which represent $\eta_n \circ \tau_{-s}$.
And $h$ be a function with a compact support such that $h(t) = 1$ on $I$.
Let us define
\begin{equation*}
\hat{g}_n(t) = (g_n(t) - t)h(t) + t.
\end{equation*}
Since $\{g_n\}$ converges to $\id$ smoothly, for sufficiently large $n$
their derivatives are strictly positive and define diffeomorphisms $\hat{\eta}_n$.
The function $\hat{g}_n$ coincides with $g_n$ on $I$ by the
definition of $h$. Let $\tilde{\tau}_s$ be the local diffeomorphism
constructed in Lemma \ref{compacttranslation}.
The composition $\tilde{\eta}_n := \hat{\eta}_n \circ \tilde{\tau}_s$
gives the required sequence.
\end{proof}

By the exponential map (or by an analogous proof) we obtain the
corresponding construction for dilation.
\begin{lemma}\label{compactsequencetdl}
If a sequence of diffeomorphisms $\eta_n$ of $\RR_+$ converges smoothly
to dilation $\d_s$, then for any interval $I \Subset \RR_+$ there is an
interval $\tilde{I} \supset I$ and a smoothly convergent sequence of diffeomorphisms $\tilde{\eta}_n$
with support in $\tilde{I}$ which coincides with $\eta_n$ on $I$ (hence converges
smoothly on $I$ to $\d_s$).
\end{lemma}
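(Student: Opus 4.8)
The plan is to transport the translation result of Lemma \ref{compactsequencetr} to the dilation setting by conjugating with the exponential map, which intertwines the two one-parameter groups: since $\exp(t+s)=e^s\exp(t)$, we have $\exp\circ\t_s=\d_s\circ\exp$ as maps $\RR\to\RR_+$, and $\exp$, $\log$ are mutually inverse diffeomorphisms between $\RR$ and $\RR_+$. Thus a diffeomorphism of $\RR_+$ close to $\d_s$ should correspond, after conjugation, to a diffeomorphism of $\RR$ close to $\t_s$, to which the already-proven translation statement applies.

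Concretely, first I would set $\zeta_n:=\log\circ\eta_n\circ\exp$, a diffeomorphism of $\RR$ (as $\eta_n$ preserves $\RR_+$). Because $\log\circ\d_s\circ\exp=\t_s$ and $\eta_n\to\d_s$ smoothly, one checks that $\zeta_n\to\t_s$ smoothly (see the final paragraph). Applying Lemma \ref{compactsequencetr} to the sequence $\{\zeta_n\}$ and the interval $J:=\log(I)\Subset\RR$ yields an interval $\tilde J\supset J$ and a smoothly convergent sequence $\tilde\zeta_n$, supported in $\tilde J$, that agrees with $\zeta_n$ on $J$. I would then push everything back by defining $\tilde\eta_n:=\exp\circ\tilde\zeta_n\circ\log$ and $\tilde I:=\exp(\tilde J)\supset\exp(J)=I$; since $\tilde J\Subset\RR$ is bounded, $\tilde I$ is a bounded interval whose closure lies in $\RR_+$, so ``compact support in $\tilde I$'' is meaningful as a diffeomorphism of $\RR_+$ that is the identity near both $0$ and $\infty$.

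The three required properties then follow by unwinding the conjugation. For $x$ outside $\tilde I$ one has $\log(x)$ outside $\tilde J$, where $\tilde\zeta_n=\id$, so $\tilde\eta_n(x)=\exp(\log(x))=x$; hence $\tilde\eta_n$ is supported in $\tilde I$. For $x\in I$ one has $\log(x)\in J$, where $\tilde\zeta_n=\zeta_n$, so $\tilde\eta_n(x)=\exp(\zeta_n(\log(x)))=\eta_n(x)$; hence $\tilde\eta_n$ agrees with $\eta_n$ on $I$.

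The main point, and the one genuinely requiring care, is that smooth convergence is preserved under pre- and post-composition with the fixed diffeomorphisms $\exp$ and $\log$, used both to pass from $\eta_n$ to $\zeta_n$ and from $\tilde\zeta_n$ to $\tilde\eta_n$. This is a chain-rule (Fa\`a di Bruno) estimate: the $k$-th derivative of a composite such as $\log\circ\eta_n\circ\exp$ is a universal polynomial in the derivatives of $\eta_n$ evaluated at $\exp(t)$ and the derivatives of $\exp$ and $\log$. On any compact set $K\subset\RR$, the image $\exp(K)$ is compact, the derivatives of $\eta_n$ converge uniformly on $\exp(K)$ by hypothesis, and $\exp$, $\log$ together with all their derivatives are bounded on $K$ and on $\exp(K)$; hence every derivative of $\zeta_n$ converges uniformly on $K$, i.e.\ $\zeta_n\to\t_s$ smoothly, and the same argument with $\log$ in the outer position gives the smooth convergence of $\tilde\eta_n$. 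This completes the plan.
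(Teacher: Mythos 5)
Your proof is correct and follows exactly the route the paper indicates: the paper's entire proof of this lemma is the remark ``By the exponential map (or by an analogous proof) we obtain the corresponding construction for dilation,'' and your argument is a careful implementation of that conjugation by $\exp$ and $\log$, reducing to Lemma \ref{compactsequencetr} and checking that smooth convergence survives composition with fixed diffeomorphisms.
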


We apply these to the case of dilations of intervals.
The standard dilation (restricted to $\RR_+$) is the map $\d_s: \RR_+ \ni t \mapsto e^s t \in \RR$.
A dilation $\d^I_s$ of an interval $I$ is defined by $(\eta^I)^{-1} \circ \d^I_s \circ \eta^I$,
where $\chi^I$ is a linear fractional transformation which maps $I$ to $\RR_+$.
This is well-defined, since any other such linear fractional transformation is
a composition of the $\chi^I$ and a standard dilation.
\begin{lemma}
If $I_1 \subset I_2 \subset \cdots \subset I_n \subset \cdots \subset \RR_+$ is an increasing
sequence of intervals with $\bigcup_n I_n = \RR_+$, then for any fixed $s$,
$\{\d^{I_n}_s\}$ smoothly converge to $\d_s$.
\end{lemma}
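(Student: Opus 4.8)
The plan is to reduce the statement to an elementary limit of linear fractional maps. Write $I_n=(a_n,b_n)$; since the $I_n$ increase to $\RR_+=(0,\infty)$, monotonicity forces $a_n\to 0$ and $b_n\to\infty$. The map $\d^{I_n}_s$ does not depend on the chosen linear fractional $\chi^{I_n}$ carrying $I_n$ onto $\RR_+$, because any two such maps differ by post-composition with a standard dilation, which commutes with $\d_s$; so I am free to fix the convenient choice $\chi_n(t)=\frac{t-a_n}{b_n-t}$, with inverse $\chi_n^{-1}(u)=\frac{b_nu+a_n}{u+1}$.

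First I would insert these into $\d^{I_n}_s=\chi_n^{-1}\circ\d_s\circ\chi_n$ and clear denominators, obtaining the closed form
\[
\d^{I_n}_s(t)=\frac{(b_ne^s-a_n)\,t+a_nb_n(1-e^s)}{(e^s-1)\,t+(b_n-e^s a_n)}
\]
(one checks directly that this fixes $a_n$ and $b_n$ and reduces to the identity at $s=0$, a useful sanity check). The decisive step is then to renormalize by dividing numerator and denominator by $b_n$:
\[
\d^{I_n}_s(t)=\frac{(e^s-a_n/b_n)\,t+a_n(1-e^s)}{(e^s-1)\,t/b_n+(1-e^s a_n/b_n)}.
\]
Since $a_n\to 0$ and $a_n/b_n\to 0$, on any fixed compact $K\subset\RR_+$ the numerator tends uniformly to $e^s t$ and the denominator tends uniformly to $1$, whence $\d^{I_n}_s\to\d_s$ uniformly on $K$.

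To pass from uniform to smooth convergence I would exploit that every $\d^{I_n}_s$ is a M\"obius map $t\mapsto\frac{\alpha_n t+\beta_n}{\gamma_n t+\delta_n}$ with coefficients converging to those of $\d_s$ and with denominator bounded away from zero on $K$ for large $n$: the unique pole sits at $t=-(b_n-e^s a_n)/(e^s-1)$ (for $s\neq0$), which tends to $-\infty$ for $s>0$ and to $+\infty$ for $s<0$, hence eventually leaves every compact subset of $\RR_+$. Because the $k$-th derivative of $\frac{\alpha t+\beta}{\gamma t+\delta}$ is a fixed polynomial expression in the coefficients divided by $(\gamma t+\delta)^{k+1}$, convergence of the coefficients together with the uniform lower bound on $|\gamma_n t+\delta_n|$ yields uniform convergence of each derivative on $K$, which is exactly smooth convergence.

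The one point requiring care---and effectively the only obstacle---is that the upper endpoint $b_n$ escapes to infinity, so the coefficients of the unnormalized formula do not converge and a naive limit is meaningless; the division by $b_n$ (equivalently, a rescaling of the representing $\mathrm{SL}(2,\RR)$ matrix) is precisely what repairs this. Conceptually the computation merely records that $\d^{I_n}_s\in\psl2r$ is the one-parameter subgroup with fixed points $a_n,b_n$ and dilation parameter $e^s$; as the fixed points converge to $0$ and $\infty$ these group elements converge in $\psl2r$ to $\d_s$, and the smoothness of the $\psl2r$-action on $\RR_+$ then delivers smooth convergence.
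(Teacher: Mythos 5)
Your proof is correct and follows essentially the same route as the paper: the same choice of $\chi^{I_n}$, the same explicit formula for $\d^{I_n}_s$, and the same renormalization by $b_n$ to exhibit the limit. The only difference is that you spell out the ``easy to see'' smooth-convergence step (convergence of M\"obius coefficients plus the pole escaping every compact set controls all derivatives), which is a welcome but not essentially different elaboration.
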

\begin{proof}
Let us put $I_n = (a_n,b_n)$, hence $a_n \to 0$ and $b_n \to \infty$.
We take the fractional linear transformations as follows:
\begin{equation*}
\chi^{I_n}(t) = \frac{t-a_n}{b_n-t}, (\chi^{I_n})^{-1}(t) = \frac{b_n t + a_n}{t+1}.
\end{equation*}
Then we can calculate the dilation of $I_n$ concretely:
\begin{multline*}
\d^{I_n}_s(t) = (\chi^{I_n})^{-1} \circ \d^I_s \circ \chi^{I_n} = \frac{e^s b_n(t-a_n) + a_n(b_n-t)}{e^s(t-a_n)+b_n-t} = \frac{e^s(t-a_n) + a_n(1-\frac{t}{b_n})}{1+\frac{e^s(t-a_n)-t}{b_n}}.
\end{multline*}
 From this expression it is easy to see that $\d^{I_n}_s(t)$ converge smoothly
to $\d_s(t) = e^s t$, since the numerator tends smoothly to $e^s t$
and the denominator tends to $1$ smoothly.
\end{proof}

We summarize these remarks to obtain the following.
\begin{proposition}\label{intervaldilations}
For each $s$ and $I \Subset \RR_+$, there is a $\tilde{I} \Subset \RR_+$ and
a smoothly convergent sequence of diffeomorphisms $\eta^{I_n}_s$ with support in $\tilde{I}$
which converge to $\d_s$ and coincide with $\d^{I_n}_s$ on $I$.
\end{proposition}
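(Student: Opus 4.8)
The plan is to obtain this proposition by feeding the output of the immediately preceding lemma into Lemma~\ref{compactsequencetdl}. First I would fix a concrete exhausting sequence of intervals, say $I_n=(1/n,n)\Subset\RR_+$, so that $I_1\subset I_2\subset\cdots$ and $\bigcup_n I_n=\RR_+$. By the preceding lemma the interval-dilations $\{\d^{I_n}_s\}$ then converge smoothly to the standard dilation $\d_s$ on every compact subset of $\RR_+$ (indeed $a_n=1/n\to 0$ and $b_n=n\to\infty$, which is exactly the hypothesis of that lemma).

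With this smoothly convergent sequence in hand, I would apply Lemma~\ref{compactsequencetdl} with $\eta_n:=\d^{I_n}_s$ and with the given interval $I$. The lemma then furnishes an interval $\tilde{I}\supset I$ with $\tilde{I}\Subset\RR_+$ together with a smoothly convergent sequence of diffeomorphisms, which I relabel $\eta^{I_n}_s$, supported in $\tilde{I}$, agreeing with $\d^{I_n}_s$ on $I$, and whose restrictions to $I$ converge smoothly to $\d_s$. These three properties are precisely the assertions of the proposition, so no further work is required beyond invoking the two cited lemmas in succession.

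The one genuinely conceptual point, which I expect to be the only place deserving attention, is the apparent tension between demanding support inside a fixed $\tilde{I}$ and converging to $\d_s$: since $\d_s$ is not the identity outside any compact set, no sequence of compactly supported diffeomorphisms can converge to $\d_s$ globally. The resolution is already built into Lemma~\ref{compactsequencetdl}: one only requires coincidence with $\d^{I_n}_s$ (hence agreement with $\d_s$ in the limit) on the fixed interval $I$, the diffeomorphisms being cut off to the identity outside $\tilde{I}$ by composing a fixed compactly supported representative of the dilation on $I$ with diffeomorphisms tending smoothly to the identity. Thus \emph{converge to $\d_s$} is to be understood as smooth convergence whose limit agrees with $\d_s$ on $I$, and with this reading the statement follows immediately.
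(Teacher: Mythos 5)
Your proposal is correct and follows exactly the paper's intended argument: the paper states the proposition with the single remark ``We summarize these remarks to obtain the following,'' meaning precisely the combination of the preceding lemma (smooth convergence of $\d^{I_n}_s$ to $\d_s$ for an exhausting sequence) with Lemma~\ref{compactsequencetdl}. Your observation that ``converge to $\d_s$'' must be read as convergence on $I$ (the cut-off sequence being the identity outside $\tilde I$) is the right reading of the statement and matches the parenthetical clarification already present in Lemma~\ref{compactsequencetdl}.
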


\end{document}